\newcommand{\decode}{\mathrm{Dec}}
\renewcommand{\even}{\mathrm{EVEN}}
\newcommand{\threesat}{\textsf{3Sat}\xspace}
\newcommand{\alin}[1]{\textsf{#1NLin}\xspace}
\newcommand{\notall}{\textsf{4-Not-All-There}\xspace}
\newcommand{\coloring}{\textsf{3-Coloring}\xspace}
\newcommand{\kcoloring}{k\textsf{-Coloring}\xspace}
\newcommand{\fournat}{\textsf{4NAT}\xspace}
\newcommand{\csp}{\mathsf{CSP}}
\newcommand{\instance}{\mathcal{I}}
\newcommand{\buddy}{\textsf{TwoPair}\xspace}
\newcommand{\twopair}{\buddy}
\newcommand{\gadg}{\gamma}
\newcommand{\roots}{U_3}
\newcommand{\x}{\bx}
\newcommand{\y}{\by}
\newcommand{\z}{\bz}
\newcommand{\w}{\bw}
\newcommand{\pih}{\pi_3}
\begin{document}

\title{New $\NP$-hardness results for 3-Coloring and 2-to-1 Label Cover\footnote{A preliminary version of these results appeared as~\cite{AOW12}}}

\author{Per Austrin\thanks{Department of Computer Science, University of Toronto.  Funded by NSERC.} \and Ryan O'Donnell\thanks{Department of Computer Science, Carnegie Mellon University. Supported by NSF grants CCF-0747250 and CCF-0915893, and by a Sloan fellowship.} \and Li-Yang Tan\thanks{Department of Computer Science, Columbia University. Research done while visiting CMU.} \and John Wright\thanks{Department of Computer Science, Carnegie Mellon University.}}

\maketitle

\begin{abstract}
We show that given a 3-colorable graph, it is $\NP$-hard to find a 3-coloring with $(\frac{16}{17}+\eps)$ of the edges bichromatic.  In a related result, we show that given a satisfiable instance of the $2$-to-$1$ Label Cover problem, it is $\NP$-hard to find a $(\frac{23}{24} + \eps)$-satisfying assignment. 
\end{abstract}

\setcounter{page}{0}
\thispagestyle{empty}
\newpage

\section{Introduction}

Graph coloring problems differ from many other Constraint Satisfaction Problems (CSPs) in that we typically care about the case of \emph{perfect completeness}, e.g.~when the graph under consideration is 3-colorable rather than almost 3-colorable.  Unfortunately, this means that many of the powerful tools which have been developed for proving inapproximability results are no longer applicable.  Most prominently, Raghavendra's~\cite{Rag08} optimal inapproximability results for all CSPs, which are conditioned on the unproven \emph{Unique Games Conjecture} (UGC), only apply to the case of imperfect completeness.  The UGC states that it is NP-hard to distinguish between nearly satisfiable and almost completely unsatisfiable instances of \emph{Unique}, \emph{or 1-to-1}, Label Cover.  As a result, by starting a reduction with the UGC, one has already lost perfect completeness.  Thus, any inapproximability result for a graph coloring problem must begin with a different unproven assumption, such as $\PTIME \neq \NP$ or Khot's~\cite{Kho02} 2-to-1 Conjecture.

The motivation of this paper is to study both of these assumptions as they relate to the graph $\kcoloring$ problem, specifically in the $k = 3$ case.  In the $\kcoloring$ problem, the input is a $k$-colorable graph $G$, and the task is to find a $k$-coloring of the vertices of $G$ which maximizes the number of bichromatic edges.  This problem has previously gone under the names ``Max-$k$-Colorability''~\cite{Pet94} and ``Maximum $k$-Colorable Subgraph''~\cite{GS09}.  Graph $\kcoloring$, along with its many studied variants, is a central problem in Computer Science, and pinning down its exact approximability is an important open problem.  The main result of our paper is an improved inapproximability result for $\coloring$, predicated only on $\PTIME \neq \NP$:
\begin{theorem}\label{thm:3-coloring-hardness}
For all $\eps > 0$, $(1, \frac{16}{17} + \eps)$-deciding the $\coloring$ problem is $\NP$-hard.
\end{theorem}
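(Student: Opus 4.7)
The plan is to prove $(1,\tfrac{16}{17}+\eps)$-hardness of $\coloring$ by reducing from an auxiliary constraint satisfaction problem $\Pi$ whose own $(1,s)$-hardness I would establish first. The auxiliary predicate $\Pi$ (the defined macros suggest \twopair or \fournat are the natural candidates in this paper's framework) is specifically chosen so that a small graph-theoretic gadget turns each $\Pi$-constraint into a constant-size subgraph in which, under the soundness analysis, a $\tfrac{1}{17}$ fraction of the edges is forced to be monochromatic whenever the constraint is violated.

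For the auxiliary CSP hardness (Step 1), I would begin from a smooth Label Cover instance with perfect completeness, provided by the standard reduction from $\threesat$ via parallel repetition with smoothing, and compose it with a Long-Code inner verifier tailored to $\Pi$. Perfect completeness would follow because the ``dictator'' strategies all satisfy the $\Pi$-predicate by design. Soundness would be shown by expanding a cheating prover's acceptance probability in the Fourier basis, bounding the nontrivial Fourier coefficients via hypercontractivity or invariance-style estimates, and decoding heavy coefficients into a good Label Cover strategy. Since one cannot appeal to the UGC in the perfect-completeness regime, the analysis must exploit the smoothness of Label Cover together with the specific symmetry of $\Pi$.

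For the gadget reduction (Step 2), each variable of $\Pi$ is represented by a triangle (``cloud'') of three vertices, so that in every proper coloring of the cloud the three vertices take the three different colors and thereby encode an element of $\{0,1,2\}$. Each $\Pi$-constraint is realized by attaching a small, constant-size pattern of inter-cloud edges with the property that (completeness) any satisfying assignment extends to a 3-coloring making every gadget edge bichromatic, while (soundness) any violating assignment forces at least one prescribed edge to be monochromatic. Summing weighted gadgets over all $\Pi$-constraints and passing through a standard weighted-to-unweighted reduction yields the final $\coloring$ instance; calibrating so that the total number of edges attributable to one constraint is exactly $17$ gives the claimed $\tfrac{16}{17}$ ratio.

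The main obstacle will be the tight co-design of Steps 1 and 2: the auxiliary predicate must simultaneously admit a UGC-free, perfect-completeness $\NP$-hardness proof \emph{and} a 3-Coloring gadget whose soundness ratio is exactly $\tfrac{16}{17}$. Any mismatch between the accepting inputs of $\Pi$ and the gadget-colorings that make all edges bichromatic leaks directly into the final ratio, so the Long-Code test in Step 1 cannot be designed generically; the Fourier analysis must be carried out for precisely the predicate matched to the chosen gadget. Identifying such a matched predicate-and-gadget pair, and pushing the Fourier soundness analysis through without UGC, is the most delicate part of the argument.
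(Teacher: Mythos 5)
Your proposal diverges from the paper at the structural level, and the divergence hides a genuine gap. You plan to (i) prove $(1,s)$-hardness for an auxiliary predicate $\Pi$ (suggesting \fournat or \twopair), and then (ii) apply a ``cloud''/gadget reduction from $\Pi$ to $\coloring$, calibrating edge weights so that a violated constraint forces a $\tfrac{1}{17}$ fraction of monochromatic edges. Step~(i) is close in spirit to what the paper does for $\fournat$ and $\alin{2}$. Step~(ii), however, is exactly what the paper says it could \emph{not} carry out: the authors explicitly remark that ``it is not clear if there is a similar gadget reducing $\coloring$ to $\fournat$, or to any other simple $4$CSP,'' and more to the point, no $\gamma$-gadget-reduction from $\fournat$ (or $\alin{2}$) \emph{to} $\coloring$ is exhibited or known. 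Your ``triangle cloud'' encoding of $\Z_3$-valued variables also raises the usual palette/reference problem: once you drop the literals of $\alin{2}$ you need a globally shared way to identify which of the three colors is which, and that apparatus either dilutes the soundness ratio or reintroduces an unanalyzed bias. So your Step~(ii), as written, is a placeholder for a missing construction rather than an argument.

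The paper sidesteps this entirely by making the Long Code test \emph{itself} a $\coloring$ instance: every query it ever issues is an inequality constraint (``$a \neq b$''), so the vertices of the output graph are just the points of the Long Code tables $f:\Z_3^K\to\Z_3$, $g,h:\Z_3^{dK}\to\Z_3$, and no cloud gadget or palette is needed. The key technical obstacle it must then face is the one your plan does not address: unlike $\alin{2}$, the $\coloring$ predicate has no literals, so one cannot enforce folding of $f,g$ for free. The paper's fix is to charge for folding inside the test itself, running $f(\x)\neq f(\x+1)$ with probability $\tfrac{1}{17}$ and $g(\y)\neq g(\y+1)$ with probability $\tfrac{4}{17}$ (both legitimate $\coloring$ constraints), and spending the remaining $\tfrac{12}{17}$ on the \emph{non-folded} $\alin{2}$ function-in-the-middle test with the middle function $h$. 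The soundness analysis (Lemma~\ref{lem:3-coloring-fourier}) then bounds the acceptance probability by $\tfrac{16}{17}+\tfrac{2}{17}\decode(f,g)$: the $\even(f),\even(g)$ penalties from Lemma~\ref{lem:folding} exactly absorb the low-degree / constant Fourier mass ($\hat f(0),\hat g(0)$) that folding would otherwise have killed, and this is where the particular weights $\tfrac{1}{17},\tfrac{4}{17},\tfrac{12}{17}$ come from. Your proposal never confronts the non-foldedness issue and therefore cannot reach the $\tfrac{16}{17}$ bound by the route you describe.

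One smaller point: the paper reduces from Moshkovitz--Raz $d$-to-$1$ Label Cover (Theorem~\ref{thm:mosraz}), not from smoothed parallel repetition, and the $d$-to-$1$ block structure is load-bearing in the Fourier analysis via $\pih(\alpha)$. If you want to salvage a gadget-style presentation, the closest the paper comes is Equation~\eqref{eq:3-coloring-to-4nat}, which expresses the $\coloring$ test's acceptance probability in terms of $\E[\fournat(\cdots)]$ plus the two folding-test probabilities; but this is an inequality used inside the soundness proof, not a bona fide CSP-to-CSP gadget reduction, and it does not furnish the graph-level reduction your Step~(ii) requires.
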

Here by $(c, s)$-deciding a CSP we mean the task of determining whether an instance is at least $c$-satisfiable or less than $s$-satisfiable.  In fact, this  is the best known hardness result for the $\coloring$ problem, even assuming conjectures such as the $2$-to-$1$ Conjecture.  The previous best $\NP$-hardness for $\coloring$ was due to Guruswami and Sinop~\cite{GS09}, who showed a factor $\frac{32}{33}$-hardness via a somewhat involved gadget reduction from the 3-query adaptive PCP result of~\cite{GLST98}.  In contrast, the best current algorithm achieves an approximation ratio of~$0.836$ (and does not need the instance to be satisfiable)~\cite{GW04}.  As for larger values of $k$, \cite{GS09} construct a reduction which directly translates hardness results for $\coloring$ into hardness results for $\kcoloring$, for $k \geq 3$.  Applying this to our Theorem~\ref{thm:3-coloring-hardness} yields
\begin{theorem}
For all $k \geq 3$ and $\eps > 0$, it is $\NP$-hard to $(1, 1- \frac{1}{17(k + c_k) + c_k} +\eps)$-decide the $\kcoloring$ problem.  Here $c_k = k \pmod{3}$.
\end{theorem}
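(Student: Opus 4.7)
The theorem follows immediately from Theorem~\ref{thm:3-coloring-hardness} combined with the $\coloring$-to-$\kcoloring$ gadget reduction of Guruswami and Sinop~\cite{GS09}, which is already designed to propagate hardness factors in exactly this way. Consequently essentially no new conceptual work is needed; the only task is to track the numerical parameters.

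My plan has three steps. First, invoke Theorem~\ref{thm:3-coloring-hardness} to obtain a $(1, \tfrac{16}{17} + \eps)$-gap instance of $\coloring$. Second, apply the \cite{GS09} reduction in a black-box manner to produce a $(1, s_k + \eps')$-gap instance of $\kcoloring$ for the explicit $s_k$ guaranteed by their construction. Third, simplify $s_k$ to the claimed form $1 - \frac{1}{17(k + c_k) + c_k}$ and rename $\eps'$ back to $\eps$.

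The third step reduces to a routine case analysis on $c_k = k \bmod 3 \in \{0, 1, 2\}$. The \cite{GS09} reduction treats $k$ cleanly when $k$ is a multiple of 3 --- intuitively, one packs $k/3$ disjoint copies of the 3-coloring instance, assigning each copy a dedicated block of 3 colors from the $k$-palette --- and must pad with auxiliary vertices/edges when $c_k \in \{1, 2\}$. This is why the denominator depends on $c_k$ both multiplicatively (via $k + c_k$, the effective number of copies after rounding up) and additively (via the trailing $+c_k$, accounting for the extra edges contributed by the padding). Concretely, the factor $17$ comes from the base gap $1 - \tfrac{16}{17} = \tfrac{1}{17}$, the factor $(k + c_k)$ counts the padded number of 3-coloring sub-instances, and the additive $c_k$ is the edge contribution of the padding gadget itself; writing the fraction of bichromatic edges preserved in the soundness case produces exactly this denominator.

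The main obstacle is essentially nonexistent: the heavy lifting sits in Theorem~\ref{thm:3-coloring-hardness}, and the remainder is arithmetic. The one item to verify is that the $\eps$-slack in the input hardness translates into an $\eps'$-slack in the output hardness of the same qualitative form (arbitrarily small, after composing with the additive constants introduced by the gadget); this is standard for gadget reductions of this type, since the gadget has a fixed size independent of the \coloring\ instance.
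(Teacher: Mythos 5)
Your proposal matches the paper's approach exactly: the paper proves the theorem by simply applying the \cite{GS09} reduction from $\coloring$ to $\kcoloring$ to Theorem~\ref{thm:3-coloring-hardness}, treating the reduction as a black box and plugging in the gap $1-\frac{16}{17}=\frac{1}{17}$ to read off the stated soundness. One small caution: your heuristic ``pack $k/3$ disjoint copies'' picture of the \cite{GS09} gadget is not quite how their construction works (it is a single instance with a more careful vertex/color augmentation, not independent copies), but since you correctly invoke the reduction black-box this does not affect the argument.
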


This is the best known $\NP$-hardness for $\kcoloring$.  For sufficiently large $k$, stronger inapproximability results are known to follow from the 2-to-1 Conjecture:
\begin{named}{2-to-1 Conjecture}[\cite{Kho02}]
For every integer $\eps > 0$, there is a label set size~$q$ such that it is $\NP$-hard to $(1, \eps)$-decide the $2$-to-$1$ Label Cover problem.
\end{named}
\noindent Guruswami and Sinop~\cite{GS09} have shown that the $2$-to-$1$ Conjecture implies it is $\NP$-hard to $(1, 1 - \frac{1}{k} + O(\frac{\ln k}{k^2}))$-decide the $\kcoloring$ problem.  This result would be tight up to the $O(\cdot)$ by an algorithm of Frieze and Jerrum~\cite{FJ97}.  In a prior result,  Dinur, Mossel, and Regev~\cite{DMR09}  showed that the $2$-to-$1$ Conjecture implies that it is $\NP$-hard to $C$-color a $4$-colorable graph for any constant~$C$. (They also showed hardness for $3$-colorable graphs via another Unique Games variant.) It is therefore clear that settling the $2$-to-1 Conjecture is important to the study of the inapproximability of graph coloring problems.

Interestingly, to a certain extent the reverse is also true: it is ``folklore'' that hardness results for graph $\coloring$ immediately imply hardness results for the 2-to-1 Label Cover problem with label sizes $3$ \& $6$ by a simple ``constraint-variable'' reduction.  Indeed, Theorem~\ref{thm:3-coloring-hardness} by itself would give the best-known $\NP$-hardness for 2-to-1 Label Cover.  However, we are able to get an even better hardness result than this by studying a CSP closely related to $\coloring$.  Our hardness result is:
\begin{theorem}\label{thm:twotoone-hardness}
For all $\eps > 0$, $(1, \frac{23}{24} + \eps)$-deciding the $2$-to-$1$ Label Cover problem with label set sizes $3$ \textnormal{\&} $6$ is $\NP$-hard.  
\end{theorem}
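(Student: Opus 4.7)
The plan is to reduce to 2-to-1 Label Cover directly from the same inner CSP that underlies the proof of Theorem~\ref{thm:3-coloring-hardness}, rather than composing the 3-Coloring hardness with the folklore constraint-variable reduction. The folklore route loses a factor of $1/2$ per unsatisfied 3-Coloring edge---only one of the edge's two incident projection constraints can be forced to fail---and hence only yields $(1, 33/34+\eps)$-hardness; a direct reduction preserves enough of the CSP's unsatisfiability to reach the stronger $23/24$ threshold.

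I would first recall the inner CSP used in the proof of Theorem~\ref{thm:3-coloring-hardness}: a predicate on a constant number of variables over alphabet $[3]$ (\twopair, \fournat, or a close variant) whose $\NP$-hardness is established via a Long-Code/dictator test composed with a smooth, parallel-repeated Label Cover. Call its hardness threshold $\sigma$, so that $(1, \sigma+\eps)$-deciding the CSP is $\NP$-hard.

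Next, I would give a direct gadget reduction from this CSP to 2-to-1 Label Cover with label sizes 3 and 6. Each CSP constraint is replaced by a single ``constraint vertex'' whose label set is precisely the set of six locally satisfying assignments of that constraint, together with a 2-to-1 projection from this constraint vertex to each of the original variable vertices (the projection reads off the corresponding coordinate; each color in $[3]$ has exactly two preimages in $[6]$ by the symmetry of the predicate). Completeness is immediate: any satisfying CSP assignment extends to a satisfying 2-to-1 labeling by setting each constraint vertex to the relevant satisfying tuple. For soundness, the key observation is that every unsatisfied CSP constraint forces at least one of its replacing 2-to-1 projection constraints to be violated, yielding an exchange rate strictly better than the $1/2$ lost by the folklore route, and by averaging this gives soundness $1 - c(1-\sigma)$ for some explicit constant $c$.

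The main obstacle will be the numerical alignment. The exchange rate---the minimum fraction of gadget 2-to-1 constraints forced to be violated per unsatisfied CSP constraint---must combine with the CSP hardness threshold $\sigma$ to give soundness exactly $23/24$ for 2-to-1 Label Cover. This reduces to a finite case analysis over the ways a CSP constraint can be violated while attempting to maximize the number of simultaneously satisfied projections, and the inner Long Code test in the proof of Theorem~\ref{thm:3-coloring-hardness} must be tuned to deliver precisely the matching $\sigma$. A secondary subtlety is verifying that the projections are honestly 2-to-1 and that the label sizes collapse to exactly 3 and 6 (rather than a coarser or finer quotient), which again follows from the symmetric structure of the inner predicate.
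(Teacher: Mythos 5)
Your high-level plan---start from an inner CSP earlier in the chain rather than from $\coloring$, to avoid paying the constraint-variable loss on top of a $\coloring$-level soundness---is exactly the paper's approach, and your arithmetic that the folklore route from Theorem~\ref{thm:3-coloring-hardness} only reaches $(1, 33/34)$ is correct. However, the specific gadget you describe does not fit the CSP you identify. You propose a single constraint vertex per CSP constraint with label set equal to the $6$ locally satisfying assignments and a $2$-to-$1$ projection to each variable vertex. For this to produce label sizes $3$ and $6$, the constraint must be \emph{binary} over $\Z_3$ with exactly $6$ satisfying pairs, each value of each endpoint appearing in exactly two of them. Neither $\fournat$ (a $4$-ary predicate with $45$ satisfying assignments) nor $\twopair$ ($18$ satisfying assignments) has this shape, so the gadget you describe simply cannot be instantiated for them.

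The CSP that does have the required shape, and to which the paper actually applies the constraint-variable reduction, is $\alin{2}(\Z_3)$: constraints $v_1 - v_2 \neq a \pmod 3$, each with exactly $6$ satisfying pairs. The paper's chain is: $\fournat$ is hard at $(1, 2/3)$ (Theorem~\ref{thm:fournat-hardness}); a $3/4$-gadget introducing one auxiliary $\Z_3$-variable $y_C$ and the four inequalities $v_i + k_i \neq y_C$ (Lemma~\ref{lem:fournat-alin}) gives $\alin{2}$ hardness at $(1, 11/12)$ (Theorem~\ref{thm:alin-hardness}); then the $6$-label constraint-vertex reduction you describe, applied to $\alin{2}$, is a $1/2$-gadget (Lemma~\ref{lem:alin-twotoone}) yielding $11/12 + \tfrac{1}{12}\cdot\tfrac{1}{2} = 23/24$. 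You are missing the $\fournat$-to-$\alin{2}$ gadget, which is precisely what converts the $4$-ary predicate into a binary one with six satisfying assignments; without it your exchange rate $c$ and threshold $\sigma$ never get pinned down, and the ``finite case analysis'' you defer to has no concrete object to analyze.
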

By duplicating labels, this result also holds for label set sizes $3k$ \& $6k$ for any $k \in \N^+$.  To the best of our knowledge, no explicit $\NP$-hardness for this problem has previously been stated in the literature.  Combining the constraint-variable reduction with the above-mentioned $\coloring$ hardness of~\cite{GS09} gives an $\NP$-hardness of $(1, \frac{65}{66} + \eps)$ for the problem with label sizes $3$ \& $6$, which we believe to be the best previously known.  It is not known how to take advantage of larger label set sizes.  On the other hand, for label set sizes $2$~\&~$4$ it is known that satisfying $2$-to-$1$ Label Cover instances can be found in polynomial time.

Regarding the hardness of the $2$-to-$1$ Label Cover problem, the only evidence we have is a family of integrality gaps for the canonical SDP relaxation of the problem, in~\cite{GKO+10}.  Regarding algorithms for the problem, an important recent line of work beginning in~\cite{ABS10} (see also~\cite{BRS11,GS11,Ste10}) has sought subexponential-time algorithms for Unique Label Cover and related problems.  In particular, Steurer~\cite{Ste10} has shown that for any constant $\beta > 0$ and label set size, there is an $\exp(O(n^{\beta}))$-time algorithm which, given a satisfiable $2$-to-$1$ Label Cover instance, finds an assignment satisfying an $\exp(-O(1/\beta^2))$-fraction of the constraints.  E.g., there is a $2^{O(n^{.001})}$-time algorithm which $(1,s_0)$-approximates $2$-to-$1$ Label Cover, where $s_0 > 0$ is a certain universal constant.

In light of this, it is interesting not only to seek $\NP$-hardness results for certain approximation thresholds, but to additionally seek evidence that \emph{nearly full exponential time} is required for these thresholds.  This can done by assuming the Exponential Time Hypothesis (ETH)~\cite{IP01} and by reducing from the Moshkovitz--Raz Theorem~\cite{MR10}, which shows a near-linear size reduction from \threesat to the standard Label Cover problem with subconstant soundness.  In this work, we show reductions from $\threesat$  to the problem of $(1, s+\eps)$-approximating several CSPs, for certain values of $s$ and for all $\eps >0$.  In fact, though we omit it in our theorem statements, it can be checked that all of the reductions in this paper are quasilinear in size for $\eps = \eps(n) = \Theta\left(\frac{1}{(\log \log n)^\beta}\right)$, for some $\beta > 0$.

\subsection{Our techniques} \label{sec:techniques}

Let us describe the high-level idea behind our result.  The folklore constraint-variable reduction from $\coloring$ to $2$-to-$1$ Label Cover would work just as well if we started from ``$\coloring$ with literals'' instead.  By this we mean the CSP with domain $\Z_3$ and constraints of the form ``$v_i - v_j \neq c \pmod{3}$''.  Starting from this CSP --- which we call $\alin{2}(\Z_3)$ --- has two benefits: first, it is at least as hard as $\coloring$ and hence could yield a stronger hardness result; second, it is a bit more ``symmetrical'' for the purposes of designing reductions.  Finally, having proven a hardness result for $\alin{2}$, it seems reasonable that it can be modified into a hardness result for $\coloring$.
We obtain the following hardness result for $\alin{2}(\Z_3)$.
\begin{theorem}\label{thm:alin-hardness}
For all $\eps > 0$, it is $\NP$-hard to $(1, \frac{11}{12} + \eps)$-decide the $\alin{2}$ problem.
\end{theorem}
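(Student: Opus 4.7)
The plan is to prove this via the standard long-code / Label Cover composition: start from a projection Label Cover instance with perfect completeness (via the PCP theorem applied to $\threesat$, or Moshkovitz--Raz for quasilinear size), replace each Label Cover variable by a long-code table $f : \Z_3^R \to \Z_3$, and place $\alin{2}(\Z_3)$ constraints according to a tailored dictator test. A strategy beating $\frac{11}{12} + \eps$ on the resulting $\alin{2}$ instance is then decoded back to a good Label Cover labeling using the standard influence-based argument.

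The core task is to design the dictator test: a distribution $\mathcal{D}$ over query triples $(\x, \y, c) \in \Z_3^R \times \Z_3^R \times \Z_3$ (coupled with the Label Cover projection $\pi$) such that (i) for dictators of matching coordinates, $\x_i - \y_{\pi(i)} \neq c$ holds almost surely on the support of $\mathcal{D}$, yielding perfect completeness, and (ii) any pair $(f, g)$ of functions with no significantly influential coordinate passes the test with probability at most $\frac{11}{12} + \eps$. Composed with Label Cover, these properties immediately yield Theorem~\ref{thm:alin-hardness}.

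For soundness, I would use Fourier analysis over $\Z_3^R$ with the cube roots of unity $\roots$. Expanding the acceptance probability of the test in the character basis decomposes it into a ``baseline'' term from the trivial characters plus cross terms involving nontrivial characters of $\x$ and $\y$; under the low-influence assumption, a Mossel-style invariance principle bounds the cross terms by $o_\eps(1)$, so the distribution $\mathcal{D}$ must be calibrated so that its baseline acceptance is exactly $\frac{11}{12}$. One natural route is to identify the worst-case distribution in the Gaussian limit first (where the extremizer is plausibly related to the $\minority$ function on $\Z_3$) and then lift to the discrete setting via invariance. Alternatively, the argument may proceed through an intermediate CSP such as $\buddy$, $\notall$, or $\fournat$, with a combinatorial gadget converting hardness there into an $\alin{2}$ hardness of $\frac{11}{12}$.

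The main obstacle is reconciling perfect completeness with the $\frac{11}{12}$ threshold. Perfect completeness rigidly restricts $\mathcal{D}$ to tuples satisfying $\x_i - \y_{\pi(i)} \neq c$ in every coordinate of $\pi$, severely constraining the space of admissible test distributions and making it unclear a priori that $\frac{11}{12}$ is even achievable. Identifying the right distribution --- one that extremizes the baseline acceptance among perfectly-complete test distributions while remaining tractable to analyze --- is the creative step. Supporting work includes choosing the projection structure (one anticipates a $2$-to-$1$ form, consistent with the companion theorem on $2$-to-$1$ Label Cover), ``folding'' the long code tables to enforce symmetries that kill problematic Fourier contributions, and ensuring the influence-based decoder outputs a constant-size list of candidate labels.
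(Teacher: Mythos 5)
Your framework is the right one (Label Cover plus long codes plus a dictatorship test plus influence decoding), and you correctly flag both of the key difficulties --- perfect completeness and the choice of test distribution --- but your primary proposed route does not resolve them, while the ``alternative'' you mention in passing is in fact the paper's actual argument.

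Concretely, the paper does \emph{not} prove Theorem~\ref{thm:alin-hardness} via a direct two-function $\alin{2}$ test analyzed by an invariance principle. Invariance-based soundness arguments (Mossel/Raghavendra) require injecting noise, which destroys perfect completeness; this is exactly the obstacle you name, and it is why Raghavendra's machinery does not apply here. The paper sidesteps it by \emph{identifying} $\fournat$ as the right intermediate predicate: it first proves $(1,\tfrac23+\eps)$-hardness for $\fournat$ (Theorem~\ref{thm:fournat-hardness}) via a two-function test whose queries in each coordinate are drawn uniformly from $\buddy$-satisfying assignments (so matching dictators pass with probability $1$), analyzed by a classical \Hastad-style Fourier expansion --- the soundness decay comes from the explicit $(1/2)^{\#\alpha}$ factor in Lemma~\ref{lem-eggg-general}, not from any Gaussian invariance. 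It then applies the $3/4$-gadget-reduction from $\fournat$ to $\alin{2}$ (Lemma~\ref{lem:fournat-alin}: introduce an auxiliary $y_C$ and constraints $v_i+k_i\neq y_C$ for $i\in[4]$), and Proposition~\ref{prop:gadgets} converts $(1,\tfrac23+\eps)$-hardness into $\bigl(1,\ \tfrac23+\tfrac13\cdot\tfrac34+\eps\bigr)=(1,\tfrac{11}{12}+\eps)$-hardness. (The paper also gives an $\alin{2}$ ``function-in-the-middle'' test with a third table $h$, but this is presented as motivation; its success probability is bounded by the $\fournat$ test's via Equation~\eqref{eq:hidden-gadget}, and the hardness result itself is established through $\fournat$.)

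So the gap in your proposal is the missing creative step you yourself flag: you do not supply the test distribution (uniform over $\buddy$), you do not identify the intermediate predicate and gadget (the $3/4$-gadget $\fournat\to\alin{2}$), and your primary soundness method (invariance) would fail precisely on the perfect-completeness requirement. Had you developed the alternative you sketch in your third paragraph --- prove a $2/3$-soundness dictatorship test for $\fournat$ by classical Fourier analysis over $\Z_3$, then compose with a $3/4$-gadget --- you would have recovered the paper's proof.
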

\noindent
As $\coloring$ is a special case of $\alin{2}(\Z_3)$, $\cite{GS09}$ also shows that $(1, \frac{32}{33}+\eps)$-deciding $\alin{2}$ is $\NP$-hard for all $\eps > 0$, and to our knowledge this was previously the only hardness known  for $\alin{2}(\Z_3)$. Further, the $0.836$-approximation algorithm for $\coloring$ from above achieves the same approximation ratio for $\alin{2}(\Z_3)$, and this is the best known~\cite{GW04}.  To prove Theorem~\ref{thm:alin-hardness}, we proceed by designing an appropriate ``function-in-the-middle'' dictator test, as in the recent framework of~\cite{OW12}.  Although the~\cite{OW12} framework gives a direct translation of certain types of function-in-the-middle tests into hardness results, we cannot employ it in a black-box fashion.  Among other reasons,~\cite{OW12} assumes that the test has ``built-in noise'', but we cannot afford this as we need our test to have perfect completeness.

Thus, we need a different proof to derive a hardness result from this function-in-the-middle test.  We first were able to accomplish this by an analysis similar to the Fourier-based proof of $2\mathsf{Lin}(\Z_2)$ hardness given in Appendix~F of~\cite{OW12}.  Just as that proof ``reveals'' that the function-in-the-middle $2\mathsf{Lin}(\Z_2)$ test can be equivalently thought of as \Hastad's $3\mathsf{Lin}(\Z_2)$ test composed with the $3\mathsf{Lin}(\Z_2)$-to-$2\mathsf{Lin}(\Z_2)$ gadget of~\cite{TSSW00}, our proof for the $\alin{2}(\Z_3)$ function-in-the-middle test revealed it to be the composition of a function test for a certain four-variable CSP with a gadget.  We have called the particular four-variable CSP \notall, or $\fournat$ for short.  Because it is a $4$-CSP, we are able to prove the following $\NP$-hardness of approximation result for it using a classic, \Hastad-style Fourier-analytic proof.
\begin{theorem}\label{thm:fournat-hardness}
For all $\eps > 0$, it is $\NP$-hard to $(1, \frac{2}{3} + \eps)$-decide the $\fournat$ problem.
\end{theorem}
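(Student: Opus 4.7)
The plan is to follow H\aa stad's classical Fourier-analytic template for proving $\NP$-hardness of $4$-variable CSPs with perfect completeness. I would begin from a hard Label Cover instance $\instance$ with perfect completeness, obtained via parallel repetition of \threesat composed with the Khot-style smoothness transformation, so that $\instance$ is (i) either fully satisfiable or at most $\delta$-satisfiable with $\delta \to 0$, and (ii) for a random edge and a random pair of labels on one side, their projections to the other side coincide only with small probability. Smoothness is essential here because perfect completeness forbids adding noise to the test; the smoothness of $\instance$ will have to play the role that additive noise usually plays in bounding influences.

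Next, I would design a $4$-query PCP verifier whose predicate is exactly a $\fournat$ constraint. For each variable of $\instance$, the proof contains a ``long code'' table $f\colon \Z_3^L \to \Z_3$, folded so that $f(\x + c\cdot \mathbf{1}) = f(\x) + c$; this folding kills the ``degree-$0$'' Fourier characters that would otherwise obstruct a Fourier-analytic soundness argument. The verifier samples a random edge $(u,v)$ with projection $\pi$, draws a $4$-tuple of correlated inputs $(\x,\y,\z,\w)$ from a carefully-designed joint distribution on $\Z_3^L \times \Z_3^R$, queries the corresponding four table entries, and accepts iff the returned $4$-tuple satisfies $\fournat$. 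The joint distribution must be engineered so that whenever $f_u, f_v$ are genuine long codes of consistent labels $\ell_u, \ell_v$ with $\pi(\ell_u) = \ell_v$, the returned $4$-tuple \emph{deterministically} misses some value of $\Z_3$, giving perfect completeness.

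The main step is the soundness analysis. Assuming the verifier accepts with probability $> \tfrac{2}{3} + \eps$, I would expand the indicator of $\fournat$ in the $\Z_3$ Fourier basis and substitute the Fourier expansions of $f_u, f_v$. The constant Fourier coefficient of the $\fournat$-predicate should contribute exactly $\tfrac{2}{3}$ after the folding/conditioning is accounted for, and every non-constant character contributes a product of Fourier coefficients of $f_u, f_v$ evaluated at inputs coupled by the query distribution. Arithmetic cancellations among $\Z_3$ characters, together with the smoothness of $\instance$ and a standard averaging argument, show that an $\eps$-advantage forces the existence of many edges $(u,v)$ on which $f_u, f_v$ have correlated high-influence coordinates. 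A standard influence-based decoding then extracts a labeling of $\instance$ satisfying more than $\delta$ of the constraints, contradicting the hardness of Label Cover.

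The hard part, as always with perfect-completeness $4$-CSP reductions, is arranging the test so that simultaneously (a) completeness is exactly $1$ (no slack to absorb errors) and (b) the Fourier expansion cleanly isolates the constant $\tfrac{2}{3}$ and leaves every other term controlled by influences. Without built-in noise this is delicate: the cancellation of the ``degree-$1$ in one table'' characters must come from folding, while the cancellation of the cross-terms between $f_u$ and $f_v$ must come from smoothness of $\instance$. I expect the bulk of the proof effort to go into identifying a query distribution whose support is small enough to force perfect completeness of $\fournat$, yet rich enough that the resulting Fourier expansion has $\tfrac{2}{3}$ as its constant term with all other non-trivial characters annihilated up to influence-bounded error.
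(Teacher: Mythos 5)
Your high-level template is indeed the one the paper follows: reduce from Label Cover with perfect completeness, build a long-code test whose acceptance predicate is a $\fournat$ constraint, fold the tables, expand the acceptance probability in the $\Z_3$ Fourier basis, and extract matching influential coordinates from an $\eps$-advantage. However, there are two substantive gaps between your plan and what actually makes the argument go through.

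First, you postpone the core of the construction --- you say you ``expect the bulk of the proof effort to go into identifying a query distribution whose support is small enough to force perfect completeness of $\fournat$, yet rich enough that the Fourier expansion has $\tfrac23$ as its constant term.'' That distribution is the heart of the result, and it has a very specific shape: the verifier queries $f$ \emph{once} (at a uniform $\x\in\Z_3^K$) and $g$ \emph{three} times (at correlated $\y,\z,\w\in\Z_3^{dK}$), where within each block $i$ and each coordinate $j\in\pi^{-1}(i)$ the tuple $(\x_i,(\y[i])_j,(\z[i])_j,(\w[i])_j)$ is drawn uniformly from the satisfying assignments of the $\buddy$ (``two pairs'') predicate --- a support that is contained in the satisfying set of $\fournat$, giving completeness $1$. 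This $(1,3)$-split and the $\buddy$ distribution are what make the Fourier expansion clean: they give $\E[f(\x)g(\y)g(\z)] = \sum_{\alpha}\hat f(\pi_3(\alpha))\hat g(\alpha)^2(-\tfrac12)^{\#\alpha}$, with an explicit attenuation factor $(1/2)^{\#\alpha}$. Also, the constant $\tfrac23$ does not appear as ``the constant Fourier coefficient of $\fournat$ after folding'' --- the arithmetized predicate has constant $\tfrac59$, and the remaining $\tfrac19$ comes from separately bounding the cubic term $\Re\E[g(\y)g(\z)g(\w)]\ge -\tfrac12$, a pointwise bound on $\roots$-valued products that has nothing to do with Fourier coefficients vanishing.

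Second, your plan treats smoothness of the Label Cover instance as an essential replacement for noise, needed to cancel cross-terms between the two tables. The paper does not use smoothness at all. The exponential decay $(1/2)^{\#\alpha}$ produced by the $\buddy$ distribution already plays the role of noise: it kills high-weight Fourier coefficients, so the standard list-decoding argument (threshold on $|\hat f_u(\pi_3(\alpha))|$, bound $\#\alpha$, pick a random nonzero coordinate) directly extracts a labeling beating the Label Cover soundness without any smoothness hypothesis. The starting point is $d$-to-$1$ Label Cover via Moshkovitz--Raz (not smooth parallel repetition, though ordinary parallel repetition would also suffice for the bare $\NP$-hardness claim). If you built your argument around a smooth instance you would likely still get a correct proof, but you would be importing machinery the test renders unnecessary, and you would have obscured the actual source of the cancellation, which is structural to the $\buddy$ distribution, not to the outer verifier.
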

\noindent
Thus, the final form in which we present our Theorem~\ref{thm:twotoone-hardness} is as a reduction from Label-Cover to $\fournat$ using a function test (yielding Theorem~\ref{thm:fournat-hardness}), followed by a $\fournat$-to-$\alin{2}(\Z_3)$ gadget (yielding Theorem~\ref{thm:alin-hardness}), followed by the constraint-variable reduction to $2$-to-$1$ Label Cover.  Indeed, all of the technology needed to carry out this proof was in place for over a decade, but without the function-in-the-middle framework of~\cite{OW12} it seems that pinpointing the $\fournat$ predicate as a good starting point would have been unlikely.

Our proof of Theorem~\ref{thm:3-coloring-hardness} is similar: we design a function-in-the-middle test for $\coloring$ which uses the $\alin{2}(\Z_3)$ function test as a subroutine.  And though we do not find a gadget reduction from $\coloring$ to $\fournat$, we are able to express the success probability of the test in terms of the $\fournat$ test.  Thus, there is significant overlap in the proofs of our two main theorems, and we are able to carry out the proofs simultaneously.

\subsection{Organization}

We leave to Section~\ref{sec:prelims} most of the definitions, including those of the CSPs we use.  The heart of the paper is in Section~\ref{sec:better}, where we give the $\alin{2}(\Z_3)$, $\coloring$, and $\fournat$ function tests and explain how they are related.  Section~\ref{sec:analysis} contains the Fourier analysis of the tests.  The actual hardness proof for $\fournat$ is presented in Section~\ref{app:fournat}, and it follows mostly the techniques put in place by \Hastad in \cite{Has01}.  Because the hardness proof for $\coloring$ is almost identical, we omit it.    Appendix~\ref{sec:eggg} contains a technical lemma.

\section{Preliminaries}\label{sec:prelims}

We primarily work with strings $x \in \Z_3^K$ for some integer $K$.  We write $x_i$ to denote the $i$th coordinate of~$x$. 

\subsection{Definitions of problems}

An instance $\instance$ of a \emph{constraint satisfaction problem} (CSP) is a set of variables $V$, a set of labels $D$, and a weighted list of constraints on these variables.  We assume that the weights of the constraints are nonegative and sum to 1.  The weights therefore induce a probability distribution on the constraints.  Given an assignment to the variables $f:V\rightarrow D$, the \emph{value} of $f$ is the probability that $f$ satisfies a constraint drawn from this probability distribution.  The \emph{optimum} of $\instance$ is the highest value of any assignment.  We say that an $\instance$ is $s$-\emph{satisfiable} if its optimum is at least $s$.  If it is 1-satisfiable we simply call it satisfiable.

We define a CSP $\calP$ to be a set of CSP instances.  Typically, these instances will have similar constraints.  We will study the problem of \emph{$(c, s)$-deciding} $\calP$.  This is the problem of determining whether an instance of $\calP$ is at least $c$-satisfiable or less than $s$-satisfiable.  Related is the problem of \emph{$(c, s)$-approximating} $\calP$, in which one is given a $c$-satisfiable instance of $\calP$ and asked to find an assignment of value at least $s$.  It is easy to see that $(c, s)$-deciding $\calP$ is at least as easy as $(c, s)$-approximating $\calP$.  Thus, as all our hardness results are for $(c, s)$-deciding CSPs, we also prove hardness for $(c, s)$-approximating these CSPs.

We now state the four CSPs that are the focus of our paper.

\paragraph{\textsf{3-coloring}:} In this CSP the label set is $\Z_3$ and the constraints are of the form $v_i \neq v_j$.

\paragraph{\textsf{2-NLin($\Z_3$)}:} In this CSP the label set is $\Z_3$ and the constraints are of the form
\begin{equation*}
v_i - v_j \neq a \pmod{3}, \quad a \in \Z_3.
\end{equation*}
The special case when each RHS is~$0$ is the $\coloring$ problem.  We often drop the $(\Z_3)$ from this notation and simply write $\alin{2}$. The reader may think of the `\textsf{N}' in $\alin{2}(\Z_3)$ as standing for `N'on-linear, although we prefer to think of it as standing for `N'early-linear.  The reason is that when generalizing to moduli $q > 3$, the techniques in this paper generalize to constraints of the form ``$v_i  - v_j \pmod{q} \in \{a, a+1\}$'' rather than ``$v_i - v_j \neq a \pmod{q}$''.  For the ternary version of this constraint, ``$v_i - v_j + v_k \pmod{q} \in \{a,a+1\}$'', it is folklore\footnote{Venkatesan Guruswami, Subhash Khot personal communications.} that a simple modification of \Hastad's work~\cite{Has01} yields $\NP$-hardness of $(1,\frac{2}{q})$-approximation.

\paragraph{\notall:}  For the \notall problem, denoted $\fournat$, we define $\fournat \co \Z_3^4 \rightarrow \{0, 1\}$ to have output $1$ if and only if at least one of the elements of $\Z_3$ is not present among the four inputs.  The $\fournat$ CSP has label set $D = \Z_3$ and constraints of the form $\fournat(v_1 + k_1, v_2 + k_2, v_3 + k_3, v_4 + k_4) = 1$, where the $k_i$'s are constants in $\Z_3$.

We additionally define the ``\textsf{Two Pairs}'' predicate $\buddy \co \Z_3^4 \rightarrow \{0, 1\}$, which has output $1$ if and only if its input contains two distinct elements of $\Z_3$, each appearing twice.  Note that an input which satisfies $\buddy$ also satisfies $\fournat$.

\paragraph{$\mathbf{d}$-to-1 Label Cover:}  An instance of the $d$-to-1 Label Cover problem is a bipartite graph $G=(U \cup V, E)$, a label set size $K$, and a $d$-to-1 map $\pi_e:[dK] \rightarrow [K]$ for each edge $e \in E$.  The elements of $U$ are labeled from the set $[K]$, and the elements of $V$ are labeled from the set $[dK]$.  A labeling $f: U \cup V \rightarrow [dK]$ satisfies an edge $e = (u, v)$ if $\pi_e(f(v)) = f(u)$.  Of particular interest is the $d = 2$ case, i.e., 2-to-1 Label Cover.

Label Cover serves as the starting point for most $\NP$-hardness of approximation results.  We use the following theorem of Moshkovitz and Raz:
\begin{theorem}[\cite{MR10}]\label{thm:mosraz}
For any $\eps = \eps(n) \geq n^{-o(1)}$ there exists $K, d \leq 2^{\mathrm{poly}(1/\eps)}$ such that the problem of deciding a \threesat instance of size $n$ can be Karp-reduced in $\mathrm{poly}(n)$ time to the problem of $(1, \eps)$-deciding  $d$-to-1 Label Cover instance of size $n^{1+o(1)}$ with label set size~$K$.
\end{theorem}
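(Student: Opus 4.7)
The plan is to produce a $(1,\eps)$-gap PCP for \threesat realized as a 2-prover 1-round protocol whose constraint structure is $d$-to-$1$, with total size $n^{1+o(1)}$. The overall strategy combines three ingredients: a PCP theorem with quasi-linear blowup but only constant soundness gap, a conversion into a 2-prover projection game over a small alphabet, and parallel repetition pushed down to subconstant error while keeping the size near-linear.

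First, I would invoke a PCP theorem of nearly linear size. Dinur's gap amplification proof reduces \threesat of size $n$ to a constraint satisfaction problem of size $n \cdot \mathrm{polylog}(n)$ over a constant-size alphabet, with constant soundness gap and perfect completeness. Applying the standard constraint-variable transformation recasts this as a 2-prover 1-round \emph{projection} game of quasi-linear size, again with perfect completeness and a constant soundness error $s_0 < 1$.

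Next, I would drive the soundness error down to $\eps$ by parallel repetition. Raz's parallel repetition theorem, or the sharper analyses of Rao and of Dinur-Steurer that apply in particular to projection games, show that $k = O(\log(1/\eps))$ parallel repetitions of a projection game of constant soundness gap yield soundness $\eps$ while preserving perfect completeness and the projection structure. The alphabet after repetition has size $|\Sigma|^k = \mathrm{poly}(1/\eps)$, and if the original projection was $d_0$-to-$1$ then the repeated game is $d_0^k$-to-$1$; both quantities are comfortably bounded by $2^{\mathrm{poly}(1/\eps)}$, and after folding constants a $d$-to-$1$ structure in the sense of the theorem is obtained.

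The hard part is ensuring the instance size stays $n^{1+o(1)}$. Naively, $k = \omega(1)$ rounds of parallel repetition multiply the number of constraints by $m^{k-1}$, which is already super-polynomial. The core technical contribution I would import from Moshkovitz-Raz is a specifically engineered PCP (built out of low-degree polynomial encodings, plane-vs.-point and manifold-vs.-point tests, and Hadamard-code composition) whose structure supports a \emph{derandomized} form of parallel repetition: rather than drawing $k$ independent constraints one samples a carefully correlated family of $k$-tuples so that the total number of constraints is only $n^{1+o(1)}$ rather than $n^{\Omega(k)}$. Verifying that this correlated sampling still enjoys the Raz-style soundness bound, while the alphabet and completeness are preserved essentially by construction, is the delicate step, and accounts for virtually all of the difficulty of the theorem. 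Tracking parameter dependencies through the three stages above yields the claimed $K, d \leq 2^{\mathrm{poly}(1/\eps)}$ and instance size $n^{1+o(1)}$.
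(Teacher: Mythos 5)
This is a cited result (Theorem~\ref{thm:mosraz} is attributed to Moshkovitz--Raz~\cite{MR10}); the paper invokes it as a black box and gives no proof, so there is no internal argument to compare your sketch against. The relevant question is therefore whether your sketch is a faithful account of how Moshkovitz and Raz actually establish the theorem, and there it goes wrong at the decisive step.

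You identify the crux as a ``derandomized form of parallel repetition'' layered on top of an algebraic PCP. The actual Moshkovitz--Raz proof does not use parallel repetition at all --- derandomized or otherwise --- and this is precisely the point of their result: at the time, no repetition-based route was known to reach sub-constant soundness for two-query projection PCPs at near-linear size, and related lower bounds (e.g.\ Feige--Verbitsky) rule out getting there by naive repetition. Instead, they construct a \emph{robust} PCP verifier of quasi-linear size with low soundness error directly, via low-degree encodings and a manifold-vs.-point test, and then repeatedly \emph{compose} it with \emph{decodable PCPs of proximity}. Their central technical contribution is a new composition theorem showing that composition with a dPCP preserves both low soundness error and the two-query projection structure, shrinking the alphabet at each stage. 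That composition theorem plays the role your sketch assigns to ``correlated sampling of $k$-tuples''; it is not a repetition argument in disguise, and replacing it with one would not yield the claimed parameters. Your outer scaffolding (algebraic base PCP, conversion to a projection game, alphabet bookkeeping of size $2^{\mathrm{poly}(1/\eps)}$) is pointed in the right direction, but the engine you name as doing the work is not the one in the proof, and the step you call ``delicate'' is actually the step that must be discarded and replaced.
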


\subsection{Gadgets}

A typical way of relating two separate CSPs is by constructing a \emph{gadget reduction} which translates from one to the other.  A gadget reduction from $\csp_1$ to $\csp_2$ is one which maps any $\csp_1$ constraint into a weighted set of $\csp_2$ constraints.  The $\csp_2$ constraints are over the same set of variables as the $\csp_1$ constraint, plus some new, auxiliary variables (these auxiliary variables are not shared between constraints of $\csp_1$).  We require that for every assignment which satisfies the $\csp_1$ constraint, there is a way to label the auxiliary variables to fully satisfy the $\csp_2$ constraints.  Furthermore, there is some parameter $0<\gadg <1$ such that for every assignment which does not satisfy the $\csp_1$ constraint, the optimum labeling to the auxiliary variables will satisfy exactly $\gadg$ fraction of the $\csp_2$ constraints.  Such a gadget reduction we call a \emph{$\gadg$-gadget-reduction} from $\csp_1$ to $\csp_2$.  The following proposition is well-known:
\begin{proposition}\label{prop:gadgets}
Suppose it is $\NP$-hard to $(c, s)$-decide $\csp_1$.  If there exists a $\gadg$-gadget-reduction from $\csp_1$ to $\csp_2$, then it is $\NP$-hard to $(c+(1-c)\gadg, s + (1-s)\gadg)$-decide $\csp_2$.
\end{proposition}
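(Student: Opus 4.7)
The plan is to reduce $\csp_1$ to $\csp_2$ in polynomial time by applying the gadget constraint-by-constraint. Given an instance $\instance$ of $\csp_1$, I form a $\csp_2$ instance $\instance'$ by replacing each constraint $C$ of $\instance$ (with weight $w_C$) by the $\csp_2$ constraints supplied by the $\gadg$-gadget reduction, on the original variables of $C$ together with a fresh private copy of the auxiliary variables (fresh per gadget, as stipulated). I distribute the weight $w_C$ among these $\csp_2$ constraints so that they sum to $w_C$; then $\instance'$ has total weight $1$. Because the gadget is of a fixed size depending only on $\csp_1$ and $\csp_2$, the reduction runs in polynomial time and we can apply the assumed $\NP$-hardness of $(c,s)$-deciding $\csp_1$.

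The heart of the argument is a single bound relating the optimum of $\instance'$ to that of $\instance$. For any assignment $g$ to the variables of $\instance'$, let $f$ denote its restriction to the $\csp_1$-variables and let $\alpha$ denote the fraction of $\csp_1$ constraints of $\instance$ that $f$ satisfies. Summing gadget-by-gadget, each gadget arising from a satisfied $\csp_1$ constraint contributes at most its full weight to the value of $g$, while each gadget arising from an unsatisfied constraint contributes at most $\gadg$ times its weight, by the defining property of the $\gadg$-gadget reduction. This yields
\[
  \text{value of $g$ on $\instance'$} \;\leq\; \alpha + (1-\alpha)\,\gadg \;=\; \gadg + (1-\gadg)\,\alpha,
\]
with equality attained by choosing the auxiliary variables optimally within each gadget, given $f$.

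Since $\gadg < 1$, the map $\alpha \mapsto \gadg + (1-\gadg)\alpha$ is strictly increasing, so both the completeness and soundness transfer cleanly. For completeness, given $f$ of value $\geq c$ on $\instance$, extending $f$ with optimal auxiliary assignments produces an assignment of value $\geq c + (1-c)\gadg$ on $\instance'$. For soundness, if the optimum of $\instance$ is strictly less than $s$, then every assignment $g$ to $\instance'$ satisfies strictly less than an $s + (1-s)\gadg$ fraction of $\instance'$, so $\instance'$ is not $(s + (1-s)\gadg)$-satisfiable. The only mildly technical step is the weight bookkeeping—verifying that the ``fractions'' in the gadget definition translate to weighted fractions when the $w_C$'s are split among the gadget's constraints—but this is routine, and I do not anticipate any genuine obstacle.
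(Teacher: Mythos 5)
The paper does not prove this proposition; it merely remarks that it is ``well-known'' and moves on. Your argument is the standard one and is correct: you split the value of a combined assignment gadget-by-gadget, use that each gadget arising from a satisfied $\csp_1$ constraint contributes at most (and, with optimal auxiliaries, exactly) its full weight while each gadget arising from a violated constraint contributes at most (and, with optimal auxiliaries, exactly) a $\gadg$ fraction of its weight, and then observe that $\alpha \mapsto \gadg + (1-\gadg)\alpha$ is increasing so both the completeness threshold $c$ and the soundness threshold $s$ transfer as claimed. The two points worth being explicit about are exactly the ones you flag: the auxiliary variables are private to each gadget (so ``optimal within each gadget'' really is globally optimal given $f$), and the gadget's internal weights are scaled by $w_C$ so that the ``fraction'' language in the gadget definition translates directly into weighted value in $\instance'$. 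No gaps.
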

We note that the notation $\gadg$-gadget-reduction is similar to a piece of notation employed by \cite{TSSW00}, but the two have different (though related) definitions.

\subsection{Fourier analysis on $\Z_3$}\label{sec:fourier}

Let $\omega = e^{2\pi i/3}$ and set $\roots = \{\omega^0, \omega^1, \omega^2\}$.  For $\alpha \in \Z_3^n$, consider the Fourier character $\chi_\alpha :\Z_3^n \rightarrow \roots$ defined as $\chi_\alpha(x) = \omega^{\alpha\cdot x}$.  Then it is easy to see that $\E[\chi_{\alpha}(\x)\overline{\chi_{\beta}(\x)}] = {\bf 1}[\alpha = \beta]$, where here and throughout $\bx$ has the uniform probability distribution on $\Z_3^n$ unless otherwise specified..  As a result, the Fourier characters form an orthonormal basis for the set of functions $f:\Z_3^n \rightarrow \roots$ under the inner product $\la f,g \ra = \E[f(\x) g(\x)]$; i.e.,
\begin{equation*}
f = \sum_{\alpha \in \Z_3^n} \hat{f}(\alpha)\chi_\alpha,
\end{equation*}
where the $\hat{f}(\alpha)$'s are complex numbers defined as $\hat{f}(\alpha) = \E[f(\x)\overline{\chi_\alpha(\x)}]$.  For $\alpha \in \Z_3^n$, we use the notation $\vert \alpha \vert$ to denote $\sum \alpha_i$ and $\#\alpha$ to denote the number of nonzero coordinates in $\alpha$.  When $d$ is clear from context and $\alpha \in \Z_3^{dK}$, define $\pih(\alpha) \in \Z_3^K$ so that $(\pih(\alpha))_i \equiv \vert \alpha[i] \vert \pmod{3}$ (recall the notation $\alpha[i]$ from the beginning of this section).
We have Parseval's identity: for
every $f:\Z_3^n \rightarrow \roots$ it holds that $\sum_{\alpha\in\Z_3^n}\vert\hat{f}(\alpha)\vert^2 = 1$.  Note that this implies that $\vert \hat{f}(\alpha) \vert \leq 1$ for all $\alpha$, as otherwise $\vert\hat{f}(\alpha)\vert^2$ would be greater than 1.

A function $f:\Z_3^n\rightarrow \Z_3$ is said to be \emph{folded} if for every $x \in \Z_3^n$ and $c \in \Z_3$, it holds that $f(x + c) = f(x) +c$, where $(x+c)_i = x_i + c$.
\begin{proposition}\label{prop-folded}
Let $f:\Z_3^n \rightarrow \roots$ be folded.  Then $\hat{f}(\alpha)\neq 0 \Rightarrow \vert \alpha \vert \equiv 1 \pmod{3}$.
\end{proposition}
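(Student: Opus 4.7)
The plan is to leverage the folding condition by performing a translation of the input in the expectation defining $\hat{f}(\alpha)$. Let $\mathbf{1} = (1,1,\ldots,1) \in \Z_3^n$ denote the all-ones vector, and fix an arbitrary $c \in \Z_3$. Starting from
\[
\hat{f}(\alpha) \;=\; \E[f(\x)\,\overline{\chi_\alpha(\x)}],
\]
I would change variables $\x \mapsto \x + c\mathbf{1}$; since $\x$ is uniformly distributed on $\Z_3^n$, so is $\x + c\mathbf{1}$, and the expectation is unchanged.

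The folding condition---originally stated for codomain $\Z_3$---is to be read for the $\roots$-valued function $f$ via the canonical identification $c \leftrightarrow \omega^c$, giving $f(\x + c\mathbf{1}) = \omega^c \cdot f(\x)$. Combining this with the character identity $\chi_\alpha(\x + c\mathbf{1}) = \omega^{c\vert\alpha\vert}\,\chi_\alpha(\x)$, the shifted expectation becomes
\[
\hat{f}(\alpha) \;=\; \E\left[\omega^c f(\x)\,\omega^{-c\vert\alpha\vert}\,\overline{\chi_\alpha(\x)}\right] \;=\; \omega^{c(1-\vert\alpha\vert)}\,\hat{f}(\alpha),
\]
and this identity holds for every $c \in \Z_3$.

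Assuming $\hat{f}(\alpha) \neq 0$, one can divide through to obtain $\omega^{c(1-\vert\alpha\vert)} = 1$ for all $c \in \Z_3$; taking $c=1$ gives $1 - \vert\alpha\vert \equiv 0 \pmod{3}$, i.e.\ $\vert\alpha\vert \equiv 1 \pmod{3}$, as desired. There is no substantive obstacle here; the only subtle point is cleanly reconciling the folding condition (phrased additively on the $\Z_3$ side) with the multiplicative $\roots$-valued target, which is handled by the identification $c \mapsto \omega^c$ that is implicit in the passage from $f:\Z_3^n\to\Z_3$ to $f:\Z_3^n\to\roots$.
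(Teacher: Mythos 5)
Your proof is correct and essentially identical to the paper's: both shift the input by a multiple of the all-ones vector, invoke the folding identity $f(\x + c\mathbf{1}) = \omega^c f(\x)$ together with the character identity, and conclude $\omega^{c(1-\vert\alpha\vert)} = 1$ when $\hat f(\alpha) \ne 0$. The paper simply fixes $c = 1$ from the outset, whereas you keep a general $c$ and then specialize, which is a cosmetic difference.
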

\begin{proof}
\begin{equation*}
\hat{f}(\alpha) = \E[f(\x + 1)\overline{\chi_\alpha(\x+1)}]
=\E[\omega f(\x)\overline{\chi_\alpha(\x)}\overline{\chi_\alpha(1, 1, \ldots, 1)}]
= \omega \overline{\chi_\alpha(1, 1, \ldots, 1)}\hat{f}(\alpha).
\end{equation*}
This means that $\omega\overline{\chi_\alpha(1, 1, \ldots, 1)}$ must be 1.  Expanding this quantity,
\begin{equation*}
\omega\overline{\chi_\alpha(1, 1, \ldots, 1)} = \omega^{1 - \alpha \cdot (1, 1, \ldots, 1)} = \omega^{1 - \vert \alpha \vert}.
\end{equation*}
So, $\vert \alpha \vert \equiv 1 \pmod{3}$, as promised.
\end{proof}

\subsection{Dictatorship tests}

In this paper, we make use dictatorship tests, which are a standard tool for proving $\NP$-hardness of approximation results.  Generally speaking, the input of a dictatorship test is two functions $f:\Z_3^K \rightarrow \Z_3$ and $g:\Z_3^{dK} \rightarrow \Z_3$ and a $d$-to-$1$ map $\pi:[dK] \rightarrow [K]$.  The map $\pi$ naturally groups strings $y \in \Z_3^{dK}$ into $K$ separate ``blocks'' of coordinates, the first block being the coordinates in $\pi^{-1}(1)$, the second block being the coordinates in $\pi^{-1}(2)$, etc.  Without loss of generality we will assume that $\pi$ is the map where $\pi(k) = 1$ for $1 \leq k \leq d$, $\pi(k) = 2$ for $d+1 \leq k \leq 2d$, and so on.  In this case, we write $y[i] \in \Z_3^d$ for the $i$th block of $y$, and $(y[i])_j \in \Z_3$ for the $j$th coordinate of this block.

The goal of a dictatorship test is to distinguish the case when $f$ and $g$ are ``matching dictators'' from the case when $f$ and $g$ are ``far from matching dictators''.  A function $f$ is a \emph{dictator} if $f(x) = x_i$, for some $i$.  Furthermore, $f$ and $g$ are \emph{matching dictators} if $f(x) = x_i$, $g(y) = y_j$, and $\pi(j) = i$.  In other words, they are dictators whose dictator coordinates match up according to the map $\pi$.  A  property of matching dictators is that both $f$ and $g$ ``depend on'' certain coordinates, meaning that these coordinates are important to the output of $f$ and $g$, and these coordinates match each other.  Thus, $f$ and $g$ are \emph{far from matching dictators} if there are no coordinates $i$ and $j$ which $f$ and $g$ depend on, respectively, for which $\pi(j) = i$.  An example of this is ``nonmatching'' dictators, when, say, $f(x) = x_1$ and $g(y) = y_{d+1}$.

To prove hardness for $\coloring$ (i.e., the $\neq$ constraint), one should construct a dictatorship test with the following outline: first, the test picks $\x\in \Z_3^K$ and $\y \in \Z_3^{dK}$ from some random distribution, and checks whether $f(\x) \neq g(\y)$.  If indeed this is the case, then the test passes, and otherwise it fails.  Generally, if one is interested in showing that $(c, s)$-deciding a given problem is $\NP$-hard, it suffices to construct a test for which matching dictators pass with probability at least $c$ and functions far from matching dictators pass with probability less than $s$.

We use a variant of this outline proposed in~\cite{OW12}, in which the test involves a third auxiliary function $h:S \rightarrow \Z_3$, where $S$ is some set.  We still want to distinguish the cases of $f$ and $g$ being matching dictators and functions far from matching dictators, but now the outline is a little different: in addition to selecting $\x$ and $\y$, we also select from some distribution a string $\z \in S$.  Then with some probability we test $h(\z) \neq f(\x)$ and with some probability $h(\z) \neq g(\y)$.  A test with this outline we refer to as a ``function-in-the-middle'' test, as $h$ acts as an intermediary between the functions $f$ and $g$.

\section{$\coloring$ and $\alin{2}$ tests}\label{sec:better}

In this section, we give our hardness results for $\coloring$ and 2-to-1 Label Cover, following the proof outlines described at the end of Section~\ref{sec:techniques}.  First, we state a pair of simple gadget reductions:

\begin{lemma}\label{lem:fournat-alin}
There is a $3/4$-gadget-reduction from $\fournat$ to $\alin{2}$.
\end{lemma}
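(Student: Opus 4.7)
The plan is to construct the gadget with a single fresh auxiliary variable per $\fournat$ constraint. Concretely, for each $\fournat$ constraint of the form $\fournat(v_1+k_1,v_2+k_2,v_3+k_3,v_4+k_4)=1$, introduce a new variable $w$ (not shared with any other constraint) together with the four equally weighted $\alin{2}$ constraints
$$ v_i - w \not\equiv -k_i \pmod{3}, \qquad i = 1,2,3,4, $$
each of which simply asserts $v_i+k_i \neq w$.

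For completeness, if the $\fournat$ constraint is satisfied, then by definition some color $c \in \Z_3$ is missing from the multiset $\{v_1+k_1,\ldots,v_4+k_4\}$; setting $w := c$ then satisfies all four $\alin{2}$ constraints simultaneously.

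For the soundness side, suppose instead that $\fournat(v_1+k_1,\ldots,v_4+k_4)=0$, so that every element of $\Z_3$ appears among the four values $v_i+k_i$. Since there are four values but only three colors, pigeonhole forces exactly one color to appear twice and each of the other two to appear exactly once. For any choice of $w \in \Z_3$, the number of violated constraints is precisely the multiplicity of $w$ among $\{v_1+k_1,\ldots,v_4+k_4\}$, which is at least $1$ (since no color is missing) and equals $1$ when $w$ is chosen to be either of the two singleton colors. Therefore the optimum auxiliary assignment satisfies exactly $3$ of the $4$ constraints, i.e., a $\gadg = 3/4$ fraction, as the definition of a $\gadg$-gadget-reduction requires.

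The whole argument is a two-case check and I do not anticipate any real technical obstacle; the one subtle point worth stating explicitly is that the gadget ratio must be \emph{exactly} $3/4$, not merely at most $3/4$, which is immediate from the pigeonhole observation that every value of $w$ collides with at least one $v_i + k_i$.
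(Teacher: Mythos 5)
Your construction is exactly the paper's: one fresh auxiliary variable per constraint, four $\alin{2}$ inequations $v_i + k_i \neq w$, with the same completeness argument, and in the soundness case you simply spell out via pigeonhole what the paper dismisses as ``easy to see.'' Correct, and essentially the same proof.
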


\begin{lemma}\label{lem:alin-twotoone}
There is a $1/2$-gadget-reduction from $\alin{2}$ to $2$-to-$1$ Label Cover.
\end{lemma}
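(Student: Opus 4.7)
\medskip

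The plan is to use the folklore ``constraint-variable'' reduction in the natural way. Given an $\alin{2}(\Z_3)$ constraint $C$ of the form ``$v_i - v_j \neq c \pmod 3$,'' note that $C$ has exactly six satisfying assignments in $\Z_3^2$. I introduce one fresh auxiliary variable $w_C$, placed on the $V$-side of the Label Cover bipartition, whose label set is identified with the set $S_C \subseteq \Z_3^2$ of these six satisfying assignments (so $|S_C| = 6$). The variables $v_i, v_j$ live on the $U$-side with label set $\Z_3$. I add two Label Cover edges, $e_i = (v_i, w_C)$ and $e_j = (v_j, w_C)$, each with weight $1/2$ relative to~$C$, where $\pi_{e_i} \co S_C \to \Z_3$ is the projection onto the first coordinate and $\pi_{e_j} \co S_C \to \Z_3$ is the projection onto the second. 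A short check confirms the $2$-to-$1$ property: for any fixed $a \in \Z_3$, there are exactly two $b \in \Z_3$ with $a - b \neq c$, so each projection is $2$-to-$1$.

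For completeness, suppose an assignment satisfies $C$, i.e.\ $v_i = \alpha$, $v_j = \beta$ with $\alpha - \beta \neq c$. Then $(\alpha, \beta) \in S_C$, and setting $w_C := (\alpha, \beta)$ makes $\pi_{e_i}(w_C) = \alpha = f(v_i)$ and $\pi_{e_j}(w_C) = \beta = f(v_j)$, so both auxiliary edges are satisfied. Hence any satisfying assignment to $C$ extends to a fully satisfying assignment of the gadget, as required.

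For soundness, suppose the assignment violates $C$, so $v_i = \alpha$, $v_j = \beta$ with $\alpha - \beta = c$; thus $(\alpha, \beta) \notin S_C$. Any label $(a, b) \in S_C$ for $w_C$ satisfies $e_i$ iff $a = \alpha$ and satisfies $e_j$ iff $b = \beta$. Satisfying both would force $(a, b) = (\alpha, \beta) \in S_C$, a contradiction, so at most one of the two edges is satisfied. On the other hand, one edge \emph{can} always be satisfied: for example, picking any $b' \neq \beta$ with $\alpha - b' \neq c$ (which is possible since two such $b'$ exist) gives a label $(\alpha, b') \in S_C$ satisfying $e_i$. Thus the optimum auxiliary labeling satisfies exactly $1/2$ of the weighted gadget constraints, so $\gadg = 1/2$ as claimed.

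Since this is a textbook-style reduction, there is no real obstacle, only bookkeeping. The one point to be careful about is exhibiting that the projection maps are genuinely $2$-to-$1$ (so the output is a legitimate $2$-to-$1$ Label Cover instance with label sizes $3 \,\&\, 6$) and that the value $\gadg = 1/2$ is attained \emph{exactly} for every violating assignment rather than merely bounded above by it, both of which follow from the symmetry of the satisfying set $S_C$ noted above.
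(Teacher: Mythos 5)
Your proof is correct and takes essentially the same approach as the paper: both use the constraint-variable reduction, identifying the label set of the auxiliary variable with the six satisfying assignments of the $\alin{2}$ constraint and taking the two coordinate projections as the $2$-to-$1$ maps. You spell out the $2$-to-$1$ property and the exact attainment of $\gadg = 1/2$ a bit more explicitly than the paper does, but the argument is the same.
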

\noindent
Together with Proposition~\ref{prop:gadgets}, these imply the following corollary:
\begin{corollary}\label{cor:fournat-twotoone}
There is a $7/8$-gadget-reduction from $\fournat$ to $2$-to-$1$ Label Cover.  Thus, if it is $\NP$-hard to $(c, s)$-decide the $\fournat$ problem, then it is $\NP$-hard to $((7+c)/8, (7+s)/8)$-decide the 2-to-1 Label Cover problem.
\end{corollary}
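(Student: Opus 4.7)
The plan is to invoke Proposition~\ref{prop:gadgets} once, after composing the two gadget reductions of Lemmas~\ref{lem:fournat-alin} and~\ref{lem:alin-twotoone} into a single $7/8$-gadget-reduction from $\fournat$ to 2-to-1 Label Cover. The only substantive step is a general composition lemma: if there is a $\gadg_1$-gadget-reduction from $\csp_1$ to $\csp_2$ and a $\gadg_2$-gadget-reduction from $\csp_2$ to $\csp_3$, then there is a $(\gadg_1 + (1-\gadg_1)\gadg_2)$-gadget-reduction from $\csp_1$ to $\csp_3$. Plugging in $\gadg_1 = 3/4$ and $\gadg_2 = 1/2$ gives $3/4 + (1/4)(1/2) = 7/8$, and the hardness half of the corollary then follows from a direct application of Proposition~\ref{prop:gadgets}: $(c,s)$-hardness for $\fournat$ yields $(c + (1-c)/8,\, s + (1-s)/8) = ((7+c)/8,\, (7+s)/8)$-hardness for 2-to-1 Label Cover.

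To prove the composition lemma I would construct the composed gadget explicitly. For each $\csp_1$ constraint, apply the first gadget to obtain a weighted list of $\csp_2$ constraints on the original variables plus a fresh set of auxiliary variables $A_1$; then apply the second gadget to each such $\csp_2$ constraint to obtain a weighted list of $\csp_3$ constraints on the original and $A_1$ variables plus another fresh set of auxiliaries $A_2$, multiplying the two layers of weights. Because all auxiliary variables are introduced locally per $\csp_1$ constraint and are not shared across $\csp_1$ constraints, the overall optimization over auxiliaries decomposes as a sum of independent per-gadget optimizations, and it suffices to analyze a single $\csp_1$ constraint.

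For completeness, if the $\csp_1$ constraint is satisfied then setting $A_1$ to the optimum first-gadget aux labeling fully satisfies the $\csp_2$ layer, after which $A_2$ can be set to fully satisfy the $\csp_3$ layer. For soundness, suppose the $\csp_1$ constraint is violated. For any labeling of $A_1$, let $\alpha \le \gadg_1$ be the weight-fraction of $\csp_2$ constraints that it satisfies; then the best $A_2$ labeling satisfies all of the $\alpha$-mass of $\csp_3$ constraints coming from satisfied $\csp_2$ constraints, plus at most a $\gadg_2$-fraction of the $(1-\alpha)$-mass coming from unsatisfied ones, for a total of $\alpha + (1-\alpha)\gadg_2$. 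This quantity is increasing in $\alpha$ and so is maximized at $\alpha = \gadg_1$, yielding $\gadg_1 + (1-\gadg_1)\gadg_2$ exactly. The only subtlety worth flagging is this soundness step, which requires checking that no ``joint'' optimization of $A_1$ and $A_2$ can beat the greedy strategy of first maximizing the $\csp_2$ layer and then the $\csp_3$ layer; the monotonicity in $\alpha$ above is precisely what rules this out.
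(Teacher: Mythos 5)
Your proposal is correct, and it fills in precisely the gap the paper leaves implicit: the paper simply cites Lemmas~\ref{lem:fournat-alin} and~\ref{lem:alin-twotoone} together with Proposition~\ref{prop:gadgets}, tacitly relying on the fact that gadget reductions compose with parameter $\gadg_1 + (1-\gadg_1)\gadg_2$, which is exactly the composition lemma you state and prove. Your soundness argument (the per-$A_1$ value $\alpha + (1-\alpha)\gadg_2$ is monotone in $\alpha$ and hence maximized at $\alpha = \gadg_1$, which is attainable since the first gadget's optimum over $A_1$ achieves exactly $\gadg_1$) is the right way to justify the ``exactly'' in the gadget-reduction definition, and the rest matches the paper's intended route.
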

\noindent
The gadget reduction from $\fournat$ to $\alin{2}$ relies on the simple fact that if $a, b, c, d \in \Z_3$ satisfy the $\fournat$ predicate, then there is some element of $\Z_3$ that none of them equal. 

\begin{proof}[Proof of Lemma~\ref{lem:fournat-alin}]
A $\fournat$ constraint $C$ on the variables $S = (v_1, v_2, v_3, v_4)$ is of the form
\begin{equation*}
\fournat(v_1 + k_1, v_2 + k_2, v_3 + k_3, v_4 + k_4),
\end{equation*}
where the $k_i$'s are all constants in $\Z_3$.  To create the $\alin{2}$ instance, introduce the auxiliary variable $y_C$ and add the four $\alin{2}$ equations
\begin{equation}
\label{eqn:fournat-alin}
v_i + k_i \neq y_C \pmod{3}, \quad i \in [4].
\end{equation}

If $f:S \rightarrow \Z_3$ is an assignment which satisfies the $\fournat$ constraint, then there is some $a \in \Z_3$ such that $f(v_i) + k_i \neq a \pmod{3}$ for all $i \in [4]$.  Assigning $a$ to $y_C$ satisfies all four equations \eqref{eqn:fournat-alin}.  On the other hand, if $f$ doesn't satisfy the $\fournat$ constraint, then $\{f(v_i) + k_i\}_{i \in [4]} = \Z_3$, so no assignment to $y_C$ satisfies all four equations.  However, it is easy to see that there is an assignment which satisfies three of the equations.  This gives a $\frac{3}{4}$-gadget-reduction from $\fournat$ to $\alin{2}$, which proves the lemma.
\end{proof}

The reduction from $\alin{2}$ to 2-to-1 Label Cover is the well-known constraint-variable reduction, and uses the fact that in the equation $v_i -  v_j \neq a \pmod{3}$, for any assignment to $v_j$ there are two valid assignments to $v_i$, and vice versa.

\begin{proof}[Proof of Lemma~\ref{lem:alin-twotoone}]
A $\alin{2}$ constraint $C$ on the variables $S = (v_1, v_2)$ is of the form
\begin{equation*}
v_1 - v_2 \neq a \pmod{3},
\end{equation*}
for some $a \in \Z_3$.  To create the 2-to-1 Label Cover instance, introduce the variable $y_C$ which will be labeled by one of the six possible functions $g:S\rightarrow \Z_3$ which satisfies $C$.  Finally, introduce the 2-to-1 constraints $y_C(v_1) = v_1$ and $y_C(v_2) = v_2$.  Here $v_1$ and $v_2$ are treated on the left as inputs to the function labeling $y_C$ and on the right as variables to be labeled with values in $\Z_3$.

If $f : S \rightarrow \Z_3$ is an assignment which satisfies the $\alin{2}$ constraint, then we label $y_C$ with $f$.  In this case,
\begin{equation*}
y_C(v_i) = f(v_i),\quad i = 1, 2.
\end{equation*}
Thus, both equations are satisfied.  On the other hand, if $f$ does not satisfy the $\alin{2}$ constraint, then any $g$ which $y_C$ is labeled with disagrees with $f$ on at least one of $v_1$ or $v_2$.  It is easy to see, though, that a $g$ can be selected to satisfy one of the two equations.  This gives a $\frac{1}{2}$-gadget-reduction from $\alin{2}$ to 2-to-1, which proves the lemma.
\end{proof}

\subsection{Three tests}

Now that we have shown that $\alin{2}$ hardness results translate into 2-to-1 Label Cover hardness results, we present our $\alin{2}$ function test.  From here, the $\coloring$ function test follows immediately.  Finally, we will show how in the course of analyzing the $\alin{2}$ test one is lead naturally to our $\fournat$ test.  This correspondence between the $\alin{2}$ test and the $\fournat$ test parallels the gadget reduction from Lemma~\ref{lem:fournat-alin}.  The test is:

\begin{center}
\framebox{$\alin{2}$ Test}
\end{center}
\qquad Given folded functions $f : \Z_3^{K} \rightarrow \Z_3$, $g, h:\Z_3^{dK} \rightarrow \Z_3$:
\begin{itemize}

\item Let $\x \in \Z_3^K$ and $\y \in \Z_3^{dK}$ be independent and uniformly random.
\item For each $i \in [K], j \in [d]$, select $(\z[i])_j$ independently and uniformly from the elements of $\Z_3\setminus\{\x_i, (\y[i])_j\}$.
\item With probability $\frac14$, test $f(\x) \neq h(\z)$; with probability $\frac34$, test $g(\y) \neq h(\z)$.
\end{itemize}

\myfig{.75}{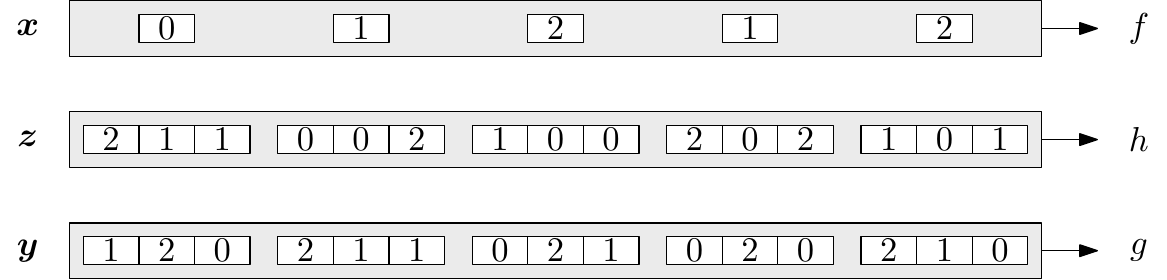}{An illustration of the $\alin{2}$ test distribution; $d = 3$, $K = 5$}{fig:test}

Above is an illustration of the test.  We remark that for any given block $i$, $z[i]$ determines $x_i$ (with very high probability), because as soon as $z[i]$ contains two distinct elements of $\Z_3$, $x_i$ must be the third element of $\Z_3$.  Notice also that in every column of indices, the input to $h$ always differs from the inputs to both $f$ and $g$.  Thus, ``matching dictator'' assignments pass the test with probability~$1$.  (This is the case in which $f(x) = x_i$ and $g(y) = (y[i])_j$ for some $i \in [K]$, $j \in [d]$.) On the other hand, if $f$ and $g$ are ``nonmatching dictators'', then they succeed with only $\frac{11}{12}$ probability.  This turns out to be essentially optimal among functions $f$ and $g$ without ``matching influential coordinates/blocks''.  We will obtain the following theorem:
\begin{named}{Theorem \ref{thm:alin-hardness} restated}
For all $\eps > 0$, it is $\NP$-hard to $(1, \frac{11}{12} + \eps)$-decide the $\alin{2}$ problem.
\end{named}

We would like to use a similar test for our $\coloring$ hardness result, but we can no longer assume that the functions $f$, $g$, and $h$ are folded.  This is problematic, as without this guarantee $f$ and $g$ could both be identically $0$ and $h$ could be identically $1$, in which case the three functions would pass the test with probability~$1$.  Since constant functions cannot be decoded to Label Cover solutions, we would like to prevent this from happening.  Thus, we will add ``folding tests'' to force $f$ and $g$ to look folded.  Having ensured this, we are free to run the $\alin{2}$ test without worry.  The test is:

\begin{center}
\framebox{$\coloring$ Test}
\end{center}
\qquad Given functions $f : \Z_3^{K} \rightarrow \Z_3$, $g, h:\Z_3^{dK} \rightarrow \Z_3$:
\begin{itemize}

\item Let $\x \in \Z_3^K$ and $\y \in \Z_3^{dK}$ be independent and uniformly random.
\item With probability $\frac{1}{17}$, test $f(\x) \neq f(\x+1)$; with probability $\frac{4}{17}$, test $g(\y) \neq g(\y+1)$.
\item With the remaining $\frac{12}{17}$ probability, run the ``non-folded'' version of the $\alin{2}$ test on $f$, $g$, and $h$.
\end{itemize}

Here, by the ``non-folded'' version of the $\alin{2}$ test, we mean the test which is identical to the $\alin{2}$ test, only it does not assume $f$, $g$, and $h$ are folded.  If $f$ and $g$ are matching dictators, then they always pass the folding tests, so as before they succeed with probability~$1$.  If on the other hand $f$ and $g$ are nonmatching dictators, then they also always pass the folding tests, so they succeed with probability $\frac{5}{17} + \frac{12}{17} \cdot \frac{11}{12} = \frac{16}{17}$.  Just as before, this turns out to be basically optimal among functions without matching influential coordinates:

\begin{named}{Theorem \ref{thm:3-coloring-hardness} restated}
For all $\eps > 0$, it is $\NP$-hard to $(1, \frac{16}{17} + \eps)$-decide the $\coloring$ problem.
\end{named}

Let us further discuss the $\alin{2}$ test.  Given $\x$, $\y$, and $\z$ from the $\alin{2}$ test, consider the following method of generating two additional strings $\y', \y'' \in \Z_3^{dK}$ which represent $h$'s ``uncertainty'' about $\y$.  For $j \in [d]$, if $\x_i = (\y[i])_j$, then set both $(\y'[i])_j$ and $(\y''[i])_j$ to the lone element of $\Z_3 \setminus \{\x_i, (\z[i])_j\}$.  Otherwise, set one of $(\y'[i])_j$ or $(\y''[i])_j$ to $\x_i$, and the other one to $(\y[i])_j$.  It can be checked that $\buddy(\x_i, (\y[i])_j, (\y'[i])_j, (\y''[i])_j) =  1$, a more stringent requirement than satisfying $\fournat$.  In fact, the marginal distribution on these four variables is a uniformly random assignment that satisfies the $\buddy$ predicate.  Conditioned on $\x$ and $\z$, the distribution on $\y'$ and $\y''$ is identical to the distribution on $\y$.  To see this, first note that by construction, neither $(\y'[i])_j$ nor $(\y''[i])_j$ ever equals $(\z[i])_j$.  Further, because these indices are distributed as uniformly random satisfying assignments to $\buddy$, $\Pr[(\y'[i])_j = x_i] = \Pr[(\y''[i])_j = x_i] = \frac13$, which matches the corresponding probability for $\y$. Thus, as $\y$, $\y'$, and $\y''$ are distributed identically, we may rewrite the test's success probability as:
\begin{align}
\Pr[\text{$f$, $g$, and $h$ pass the $\alin{2}$ test}]
& = \tfrac14\Pr[f(\x) \neq h(\z)] + \tfrac34\Pr[g(\y) \neq h(\z)]\nonumber\\
& = \text{avg}\left\{
			\begin{array}{r}
				\Pr[f(\x) \neq h(\z) ], \\
				\Pr[g(\y) \neq h(\z) ], \\
				\Pr[g(\y') \neq h(\z) ], \\
				\Pr[g(\y'') \neq h(\z) ]\phantom{,}
			\end{array} \right\}\nonumber\\
&\leq \frac34 + \frac14 \E[\fournat(f(\x), g(\y), g(\y'), g(\y''))]. \label{eq:hidden-gadget}
\end{align}
This is because if $\fournat$ fails to hold on the tuple $(f(\x), g(\y), g(\y'), g(\y''))$, then $h(\z)$ can disagree with at most~$3$ of them.

At this point, we have removed $h$ from the test analysis and have uncovered what appears to be a hidden $\fournat$ test inside the $\alin{2}$ test: simply generate four strings $\x$, $\y$, $\y'$, and $\y''$ as described earlier, and test $\fournat(f(\x), g(\y), g(\y'), g(\y''))$.  With some renaming of variables, this is exactly what our $\fournat$ test does:

\begin{center}
\framebox{$\fournat$ Test}
\end{center}
\qquad Given folded functions $f : \Z_3^{K} \rightarrow \Z_3$, $g:\Z_3^{dK} \rightarrow \Z_3$:
\begin{itemize}
\item Let $\x \in \Z_3^K$ be uniformly random.
\item Select $\y, \z, \w$ as follows: for each $i \in [K], j \in [d]$, select $((\y[i])_j, (\z[i])_j, (\w[i])_j)$ uniformly at random from the elements of $\Z_3$ satisfying $\buddy(\x_i, (\y[i])_j,  (\z[i])_j, (\w[i])_j)$.
\item Test $\fournat(f(\x), g(\y), g(\z), g(\w))$.
\end{itemize}

\myfig{.75}{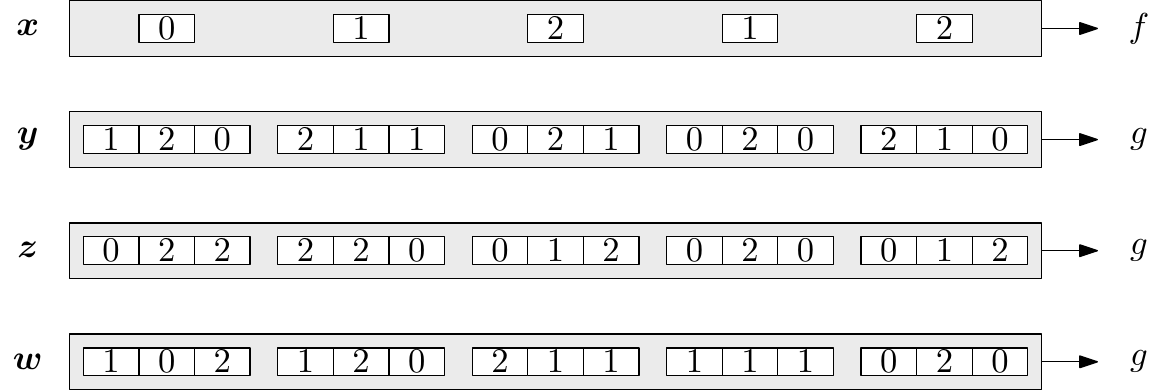}{An illustration of the $\fournat$ test distribution; $d = 3$, $K = 5$}{fig:nat-test}

Above is an illustration of this test.  In this illustration, the strings $\z$ and $\w$ were derived from the strings in Figure~\ref{fig:test} using the process detailed above for generating $\y'$ and $\y''$.  Note that each column is missing one of the elements of $\Z_3$, and that each column satisfies the $\buddy$ predicate.  Because satisfying $\buddy$ implies satisfying $\fournat$, matching dictators pass this test with probability~$1$.  On the other hand, it can be seen that nonmatching dictators pass the test with probability $\frac23$.  This is basically optimal among functions with no matching influential coordinates:
\begin{named}{Theorem \ref{thm:fournat-hardness} restated}
For all $\eps > 0$, it is $\NP$-hard to $(1, \frac{2}{3} + \eps)$-decide the $\fournat$ problem.
\end{named}

Unfortunately, it is not clear if there is a similar gadget reducing $\coloring$ to $\fournat$, or to any other simple 4CSP for that matter.  However, by using Equation~\eqref{eq:hidden-gadget}, we can still reduce the analysis of the $\coloring$ test to analyzing the $\fournat$ test:
\begin{align}
\Pr[\text{$f$, $g$, and $h$ pass the $\coloring$ test}]
\leq& \frac{1}{17} \cdot p_f + \frac{4}{17} \cdot p_g\nonumber\\
&+ \frac{12}{17}\cdot\left(\frac34 + \frac14 \E[\fournat(f(\x), g(\y), g(\z), g(\w))]\right).\label{eq:3-coloring-to-4nat}
\end{align}
Here $p_f$ and $p_g$ are the probabilities that $f$ and $g$ pass the folding test, respectively, and $\x$, $\y$, $\z$, and $\w$ are distributed as in the $\fournat$ test.  This equation will be the focus of our $\coloring$ soundness proof.

(As one additional remark, our $\alin{2}$ test is basically the composition of the $\fournat$ test with the gadget from Lemma~\ref{lem:fournat-alin}.  In this test, if we instead performed the $f(\x) \neq h(\z)$ test with probability $\frac13$ and the $g(\y) \neq h(\z)$ test with probability $\frac23$, then the resulting test would basically be the composition of a $\alin{3}$ test with a suitable $\alin{3}$-to-$\alin{2}$ gadget.)

\section{Fourier analysis}\label{sec:analysis}

Let $\omega = e^{2\pi i/3}$, and set $\roots = \{\omega^0, \omega^1, \omega^2\}$.  In what follows, we identify $f$ and $g$ with the functions $\omega^f$ and $\omega^g$, respectively, whose range is $\roots$ rather than $\Z_3$.  Set $L = dK$.  Define
\begin{equation*}
\decode(f, g) := \sum_{\alpha: \pih(\alpha) \neq 0}\vert\hat{f}(\pih(\alpha))\vert\cdot \vert\hat{g}(\alpha)\vert^2\cdot(1/2)^{\#\alpha}.
\end{equation*}
This quantity corresponds to the ``decodable'' part of $f$ and $g$.  
This section is devoted to proving the following two lemmas:
\begin{lemma}\label{lem:4nat-fourier}
Let $f:\Z_3^K\rightarrow \roots$ and $g:\Z_3^{L}\rightarrow \roots$ be folded.  Then the probability $f$ and $g$ pass the $\fournat$ test is at most $\frac{2}{3} + \frac{2}{3} \decode(f, g)$.
\end{lemma}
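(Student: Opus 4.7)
The plan is to Fourier-expand $\fournat:\Z_3^4\to\{0,1\}$ and use the structure of the joint distribution on $(\x,\y,\z,\w)$ to reduce the test's success probability to a few correlations between $f$ and $g$. First I would expand $\fournat$ as a polynomial in the characters of $\Z_3^4$ using the inclusion--exclusion identity $\fournat(a_1,\ldots,a_4) = \sum_e\prod_i\mathbf{1}[a_i\neq e] - \sum_e\prod_i\mathbf{1}[a_i=e]$ together with $\mathbf{1}[a=e]=\frac{1}{3}\sum_k\omega^{k(a-e)}$. Three simplifications occur: shift-invariance of $\fournat$ kills all characters $\beta\in\Z_3^4$ with $|\beta|\not\equiv 0\pmod 3$; the $\#\beta=4$ coefficients vanish because $(-1)^4 2^0-1=0$; and no $\#\beta=1$ satisfies $|\beta|\equiv 0$. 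After symmetrizing over the exchangeable triple $(\y,\z,\w)$ and writing $F=\omega^f$, $G=\omega^g$, this yields
\begin{equation*}
\E[\fournat(f(\x),g(\y),g(\z),g(\w))] = \frac{5}{9} + \frac{2}{3}\real A + \frac{2}{3}\real B - \frac{2}{3}\real C - \frac{2}{9}\real D,
\end{equation*}
where $A=\E[F(\x)\overline{G(\y)}]$, $B=\E[G(\y)\overline{G(\z)}]$, $C=\E[F(\x)G(\y)G(\z)]$, and $D=\E[G(\y)G(\z)G(\w)]$.

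Next I would show $A=B=0$ using folding. Although $(\x,\y)$ are defined jointly in the test, each coordinate $(\y[i])_j$ remains uniform on $\Z_3$ conditional on $\x_i$, so $\y$ is in fact independent of $\x$, giving $A=\hat F(0)\overline{\hat G(0)}=0$. For $B$, a per-block Fourier computation (expanding $\E[\chi_{\alpha_1}(\y)\overline{\chi_{\alpha_2}(\z)}]$ via the conditional distribution of $((\y[i])_j,(\z[i])_j)$ given $\x_i$) shows that only pairs with $\alpha_2=-\alpha_1$ and $\pih(\alpha_1)=0$ contribute; but then $|\alpha_1|\equiv 0\pmod 3$, contradicting the folding constraint $|\alpha_1|\equiv 1\pmod 3$.

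For $C$, an analogous per-block computation yields $C=\sum_\alpha\hat F(\pih(\alpha))\hat G(\alpha)^2(-1/2)^{\#\alpha}$; folding of $f$ kills the $\pih(\alpha)=0$ terms, so $|C|\leq\decode(f,g)$ by the triangle inequality. For $D$, I would use the identity
\begin{align*}
1+D+\overline D &= \E\bigl[1+\omega^{g(\y)+g(\z)+g(\w)}+\omega^{-(g(\y)+g(\z)+g(\w))}\bigr] \\
&= 3\Pr\bigl[g(\y)+g(\z)+g(\w)\equiv 0\pmod 3\bigr]\geq 0,
\end{align*}
which gives $\real D\geq-\tfrac{1}{2}$. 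Combining these bounds yields $\E[\fournat]\leq\tfrac{5}{9}+\tfrac{2}{3}\decode(f,g)+\tfrac{1}{9}=\tfrac{2}{3}+\tfrac{2}{3}\decode(f,g)$, as claimed.

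The main technical step is the per-block Fourier calculation for $B$ and $C$: one computes $p(s,t,c)=\E[\omega^{s\by+t\bz}\mid\bx=c]$ (and its analogue for the $B$-type correlation) for $(s,t,c)\in\Z_3^3$, where $(\by,\bz)$ follows the $\buddy$-induced conditional marginal, and a short case analysis shows $p$ vanishes off a very structured support. The lower bound $\real D\geq-\tfrac{1}{2}$ is the key numerical observation that makes the final inequality tight at the dictator $g(y)=y_1$, where $D=-\tfrac{1}{2}$ exactly.
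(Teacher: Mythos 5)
Your proposal follows essentially the same route as the paper: the identical Fourier expansion of $\fournat$ into the four correlations $A,B,C,D$, killing $A$ and $B$ via folding (using independence of $\x,\y$ for $A$ and the $\pih(\alpha)=0$ structure for $B$), bounding $-\Re C$ by $\decode(f,g)$ via the per-block calculation with $\hat f(0)=0$, and using $\Re D\geq -\tfrac12$ since $g(\y)g(\z)g(\w)\in\roots$. The details check out, so this is a correct reconstruction of the paper's argument.
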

\begin{lemma}\label{lem:3-coloring-fourier}
Let $f:\Z_3^K\rightarrow \roots$ and $g:\Z_3^L \rightarrow \roots$.  Then the probability $f$ and $g$ pass the $\coloring$ test is at most $\frac{16}{17} + \frac{2}{17}\decode(f, g)$.
\end{lemma}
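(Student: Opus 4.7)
The plan is to apply equation~(\ref{eq:3-coloring-to-4nat}) and then mimic the Fourier analysis behind Lemma~\ref{lem:4nat-fourier}, paying for the dropped folding assumption via the two folding tests. From~(\ref{eq:3-coloring-to-4nat}),
\[
\Pr[\text{pass}] \;\le\; \tfrac{1}{17}p_f + \tfrac{4}{17}p_g + \tfrac{9}{17} + \tfrac{3}{17}\,\E[\fournat(f(\x), g(\y), g(\z), g(\w))],
\]
so it suffices to control $p_f$, $p_g$, and the $\fournat$ expectation together.

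First I would Fourier-analyze the folding tests. Writing $\Pr[f(\x)=f(\x+1)] = \tfrac{1}{3} + \tfrac{2}{3}\real(\E[f(\x)\overline{f(\x+1)}])$, applying the shift $\x\mapsto\x+1$, and using $\chi_\alpha(1,\ldots,1)=\omega^{|\alpha|}$, a short computation gives $p_f = 1 - a_0^f$, where $a_0^f := \sum_{|\alpha|\equiv 0 \pmod 3}|\hat f(\alpha)|^2$, and similarly $p_g = 1 - a_0^g$. Thus the folding tests exactly detect the Fourier mass on frequencies with $|\alpha|\equiv 0\pmod 3$.

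Next I would redo the Fourier analysis of Lemma~\ref{lem:4nat-fourier} without the folded hypothesis. Expanding $\fournat\colon\Z_3^4\to\{0,1\}$ in characters, the $\Z_3$-diagonal invariance of $\fournat$ restricts the expansion to $\lambda$ with $\sum_i\lambda_i\equiv 0\pmod 3$. Substituting the Fourier expansions of $f^{\lambda_1}, g^{\lambda_2}, g^{\lambda_3}, g^{\lambda_4}$ and using the column-independence of the $\buddy$-conditioned test distribution yields $\E[\fournat(\ldots)] = \tfrac{2}{3} + (\text{tail})$. The tail splits into a ``folded'' piece --- involving only coefficients $\hat f(\beta), \hat g(\gamma)$ with $|\beta|, |\gamma|\equiv 1\pmod 3$, which reproduces Lemma~\ref{lem:4nat-fourier}'s bound $\tfrac{2}{3}\decode(f,g)$ --- and an ``unfolded'' piece involving frequencies violating $|\cdot|\equiv 1$. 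By Cauchy--Schwarz and Parseval, the unfolded piece is at most $\tfrac{1}{3}a_0^f + \tfrac{4}{3}a_0^g$, with the specific coefficients arranged so that, after scaling by $\tfrac{3}{17}$, they are exactly absorbed by the folding-test savings $1-p_f$ and $1-p_g$ weighted by $\tfrac{1}{17}$ and $\tfrac{4}{17}$. Combining,
\begin{align*}
\Pr[\text{pass}] &\le \tfrac{1}{17}p_f + \tfrac{4}{17}p_g + \tfrac{9}{17} + \tfrac{3}{17}\bigl[\tfrac{2}{3} + \tfrac{2}{3}\decode(f,g) + \tfrac{1}{3}(1-p_f) + \tfrac{4}{3}(1-p_g)\bigr] \\
&= \tfrac{16}{17} + \tfrac{2}{17}\decode(f,g).
\end{align*}

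The main obstacle is the step bounding the unfolded piece by $\tfrac{1}{3}a_0^f + \tfrac{4}{3}a_0^g$. The subtle issue is that the folding test catches only $|\alpha|\equiv 0$ mass of $f$, whereas a priori $f$ could contribute unfolded behavior via $|\alpha|\equiv 2$ mass as well. Resolving this requires exploiting the diagonal-invariance constraint $\sum_i\lambda_i\equiv 0\pmod 3$ in concert with the $\buddy$ column structure, which together force the $|\alpha|\equiv 2$ contributions of $f$ and $g$ either to merge into the $\decode$ term or to pair up with other factors and cancel. Carefully tracking this cancellation, and choosing the right Cauchy--Schwarz splits so that the coefficients of $a_0^f, a_0^g$ come out to exactly $\tfrac{1}{3}$ and $\tfrac{4}{3}$, is the technical heart of the argument.
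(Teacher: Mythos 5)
Your high-level plan is the same as the paper's: combine Equation~\eqref{eq:3-coloring-to-4nat} with a Fourier bound on $\E[\fournat(\cdots)]$ that no longer assumes folding, and pay for the lost folding with the two folding tests. Your folding-test analysis ($p_f = 1 - \even(f)$, $p_g = 1 - \even(g)$) matches Lemma~\ref{lem:folding}, and your target intermediate bound $\E[\fournat(\cdots)] \le \tfrac{2}{3} + \tfrac{2}{3}\decode(f,g) + \tfrac{1}{3}\even(f) + \tfrac{4}{3}\even(g)$ is in fact correct; it follows from the paper's bound $\E[\fournat(\cdots)] \le \tfrac{2}{3} + \tfrac{2}{3}\decode(f,g) + \tfrac13|\hat f(0)|^2 + \tfrac12|\hat f(0)||\hat g(0)|^2 + \even(g)(\tfrac23 + \tfrac{|\hat f(0)|}{6})$ by applying $|\hat f(0)|^2 \le \even(f)$, $|\hat g(0)|^2 \le \even(g)$, $|\hat f(0)| \le 1$, and your closing arithmetic is right.

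The gap is that you never establish that intermediate bound, and the path you gesture at won't produce it. You cannot get there by a single Cauchy--Schwarz split of an ``unfolded piece'': with only the crude estimates ($-\Re\E[g(\y)g(\z)g(\w)] \ge -\tfrac12$ and $-\Re\E[f(\x)g(\y)g(\z)] \le \decode(f,g) + |\hat f(0)|$), the bound you get leaves an uncontrolled $+\tfrac23|\hat f(0)|$ term with nothing to absorb it, and the inequality simply fails for, e.g., $\even(g)=0$, $|\hat f(0)|>0$. What the paper actually does is bound each of the four summands in the arithmetization~\eqref{eq:bigexpansion} separately: Proposition~\ref{prop:efg-coloring} and Lemma~\ref{lem:egg-coloring} for the two positive-correlation terms, Corollary~\ref{cor-efgg} for $-\Re\E[f(\x)g(\y)g(\z)]$ (giving $\decode(f,g)$ plus a remainder proportional to $|\hat f(0)|\cdot(\tfrac34|\hat g(0)|^2 + \tfrac14\even(g))$), and crucially Lemma~\ref{lem:eggg}, which sharpens $-\Re\E[g(\y)g(\z)g(\w)] \le \tfrac12$ to $\le \tfrac12 - \tfrac32|\hat g(0)|^2$. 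That $-\tfrac32|\hat g(0)|^2$ savings is exactly what absorbs the $|\hat f(0)|\cdot|\hat g(0)|^2$-type error from Corollary~\ref{cor-efgg}; Lemma~\ref{lem:eggg} is nontrivial (a full appendix) and is not something ``Cauchy--Schwarz and Parseval'' hands you. Also, your stated worry about $|\alpha|\equiv 2$ mass of $f$ is a red herring: in $\decode(f,g)$ the sum ranges over all $\alpha$ with $\pih(\alpha)\neq 0$, so all nonzero-frequency mass of $f$ (whether $|\beta|\equiv 1$ or $\equiv 2$) is already part of the decodable term, and the only $f$-dependent quantity left over is $|\hat f(0)|$. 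The real technical heart is the interplay between Corollary~\ref{cor-efgg}'s $|\hat f(0)|$-weighted remainder, Lemma~\ref{lem:eggg}'s $|\hat g(0)|^2$ savings, and the $\even(\cdot)$ terms from the folding tests --- none of which your proposal carries out.
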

\noindent Having proven these, our hardness results follow immediately from a standard application of \Hastad's method.  See Section~\ref{app:fournat} for details.

The first step is to ``arithmetize'' the $\fournat$ predicate.  It is not hard to verify that
\begin{align*}
  \fournat(a_1, a_2, a_3, a_4) &= \frac{5}{9} + \frac{1}{9} \sum_{i \ne j} \omega^{a_i} \overline{\omega}^{a_j} - \frac{1}{9} \sum_{i < j < k} \omega^{a_i} \omega^{a_j} \omega^{a_k} - \frac{1}{9} \sum_{i < j < k} \overline{\omega}^{a_i}\overline{\omega}^{a_j}\overline{\omega}^{a_k} \\
&= \frac59 + \frac{2}{9}\sum_{i<j}\Re[\omega^{a_i}\overline{\omega}^{a_j}] - \frac{2}{9}\sum_{i<j<k} \Re[\omega^{a_i}\omega^{a_j}\omega^{a_k}].
\end{align*}
Using the symmetry between $\y$, $\z$, and $\w$, we deduce
\begin{multline}
\E[\fournat(f(\x), g(\y), g(\z), g(\w))] \\= \tfrac59 + \tfrac23 \Re \E[f(\bx)\overline{g(\y)}] + \tfrac23 \Re \E[g(\by)\overline{g(\bz)}] - \tfrac23 \Re \E[f(\bx)g(\by)g(\bz)] - \tfrac29 \Re \E[g(\by)g(\bz)g(\bw)]. \label{eq:bigexpansion}
\end{multline}
To analyze this expression, we will need the following lemma:
\begin{lemma}\label{lem-downanddirty}
Let $a \in \Z_3$, $\beta, \gamma \in \Z_3^{dK}$, and $i$ and $j$ be such that $\pi(j) = i$.  Then
\begin{equation*}
\E_{\y, \z}\left[ \omega^{\beta_j \y_j + \gamma_j \z_j} \mid \x_i = a\right] = 
\left\{
	\begin{array}{cl}
		 \left(-\frac{1}{2}\right)^{\#\beta_j}\omega^{2a\beta_j} & \text{if } \beta_j = \gamma_j,\\
		0 & \text{otherwise}.
	\end{array} \right.
\end{equation*}
\end{lemma}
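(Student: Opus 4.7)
The plan is first to determine the conditional distribution of $(\y_j, \z_j)$ given $\x_i = a$, then evaluate the resulting finite expectation. Since $\pi(j) = i$, the triple $(\y_j, \z_j, \w_j)$ is drawn via the $\fournat$ test's block-$i$ rule, so conditional on $\x_i = a$ it is uniform over the six triples $(b, c, d) \in \Z_3^3$ for which $(a, b, c, d)$ satisfies $\buddy$ -- namely, those in which exactly one of $b, c, d$ equals $a$ and the other two agree on a common value in $\Z_3 \setminus \{a\}$. Marginalizing out $\w_j$ leaves $(\y_j, \z_j)$ uniform over the six pairs $\Z_3^2 \setminus F_a$, where $F_a = \{(a, a), (a+1, a+2), (a+2, a+1)\} = \{(a+t, a-t) : t \in \Z_3\}$ is the affine line $b + c \equiv 2a \pmod{3}$.

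The second step is a short inclusion-exclusion. Writing $p = \beta_j$ and $q = \gamma_j$, the sum over all nine pairs in $\Z_3^2$ factors as
\begin{equation*}
\sum_{b, c \in \Z_3} \omega^{pb + qc} = 9 \cdot \mathbf{1}[p = 0]\cdot \mathbf{1}[q = 0],
\end{equation*}
while the sum over the three points of $F_a$ equals
\begin{equation*}
\sum_{t \in \Z_3} \omega^{p(a+t) + q(a-t)} = \omega^{(p+q)a} \sum_{t \in \Z_3} \omega^{t(p-q)} = 3\,\omega^{(p+q)a} \cdot \mathbf{1}[p = q].
\end{equation*}
Subtracting the second from the first and dividing by $6$ yields the desired conditional expectation.

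Finally, I will read off the cases. If $p \neq q$, both terms above vanish and the answer is $0$. If $p = q = 0$, the value is $(9 - 3)/6 = 1 = (-\tfrac{1}{2})^0 \omega^{0}$. If $p = q \neq 0$, the value is $-\tfrac{1}{2}\omega^{2pa} = (-\tfrac{1}{2})^1 \omega^{2pa}$. Both nontrivial cases are captured uniformly by $(-\tfrac{1}{2})^{\#\beta_j}\,\omega^{2a\beta_j}$ once one interprets $\#\beta_j \in \{0, 1\}$ as $\mathbf{1}[\beta_j \neq 0]$, consistent with the paper's convention that $\#\alpha$ counts nonzero coordinates. There is no real obstacle here: the lemma is a direct finite calculation whose only subtlety is correctly identifying the six-point support of the conditional distribution.
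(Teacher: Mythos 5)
Your proof is correct, and it takes a different route than the paper's. Both proofs start by identifying the conditional distribution of $(\y_j,\z_j)$ given $\x_i = a$ as uniform on the same six pairs, and your characterization of that support as $\Z_3^2 \setminus \{(b,c) : b + c \equiv 2a\}$ is a nice structural observation. From there, however, the paper proceeds by direct case analysis: when $\beta_j = \gamma_j$ it reduces the sum to $\E[\omega^{\beta_j(\y_j + \z_j)}]$ and uses that $\y_j + \z_j$ is uniform on $\{2a+1, 2a+2\}$; when $\beta_j \neq \gamma_j$ it further splits into the sub-case where one of them is zero (reducing to a single uniform variable) and the sub-case where $\gamma_j = -\beta_j$ (using that $\y_j - \z_j$ is uniform on $\Z_3$). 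Your inclusion-exclusion over the full grid minus the affine line collapses all of these cases into one algebraic identity and avoids the secondary case split entirely, which is cleaner; the paper's version is more hands-on but makes the relevant marginal distributions ($\y_j + \z_j$, $\y_j - \z_j$) explicit, which is perhaps slightly more illuminating about why the $\buddy$-based distribution behaves this way.
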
 
\begin{proof}
Conditioned on $\x_i = a$, the distribution on the values for $(\y_j, \z_j)$ is uniform on the six possibilities $(a, a+1)$, $(a, a+2)$, $(a+1, a)$, $(a+1, a+1)$, $(a+2, a)$, and $(a+2, a+2)$. If $\beta_j = \gamma_j$, then the expectation equals $\E[\omega^{\beta_j(\y_j+\z_j)}\mid \x_i = a]$.  As either $\y_j + \z_j \equiv 2a + 1 \pmod{3}$ or $\y_j + \z_j \equiv 2a+2 \pmod{3}$, each with probability half,  this is equal to
\begin{equation*}
\frac{1}{2}\left(\omega^{\beta_j(2a+1)} + \omega^{\beta_j(2a+2)}\right)
= \frac{(\omega^{\beta_j} + \omega^{2\beta_j})}{2} \omega^{2a\beta_j}
 = \left(-\frac{1}{2}\right)^{\#\beta_j}\omega^{2a\beta_j}.
\end{equation*}

On the other hand, If $\beta_j \neq \gamma_j$, then either only one of $\beta_j$ or $\gamma_j$ is zero, or neither is zero, and $\gamma_j \equiv -\beta_j \pmod{3}$.  In the first case, the expectation is either $\E[\omega^{\beta_j\y_j}\mid \x_i = a]$ or $\E[\omega^{\gamma_j\z_j}\mid \x_i = a]$ for a nonzero $\beta_j$ or a nonzero $\gamma_j$, respectively.  Both of these expectations are zero, as both $\y_j$ and $\z_j$ are uniform on $\Z_3$.  In the second case,
\begin{align*}
\E[\omega^{\beta_j \y_j + \gamma_j \z_j} \mid \x_i = a] = &\E[\omega^{\beta_j\y_j - \beta_j\z_j} \mid \x_i = a]\\
= &\E[\omega^{\beta_j(\y_j - \z_j)} \mid \x_i = a],
\end{align*}
which is zero, because $\beta_j$ is nonzero and $\y_j - \z_j$ is uniformly distributed on $\Z_3$.
\end{proof}

Now we use this to find an expression for a general form of the $\E[f(\x)g(\y)g(\z)]$ term:
\begin{lemma}\label{lem-eggg-general}
Let $f_1:\Z_3^K \rightarrow \mathbb{R}$ and $g_1, g_2: \Z_3^L \rightarrow \mathbb{R}$.  Then
\begin{equation*}
\E[f_1(\x)g_1(\y)g_2(\z)] = \sum_{\alpha \in \Z_3^L}\hat{f}_1(\pih(\alpha))\hat{g}_1(\alpha)\hat{g}_2(\alpha)\left(-\frac{1}{2}\right)^{\#\alpha}.
\end{equation*}
\end{lemma}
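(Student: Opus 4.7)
The plan is to Fourier-expand $f_1$, $g_1$, and $g_2$ in the character basis of $\Z_3^K$ and $\Z_3^L$, swap the expectation with the resulting triple sum, and then exploit the product structure of the test distribution across the $K$ blocks to reduce to a single-block computation governed by Lemma~\ref{lem-downanddirty}. Concretely, I would first write
\[
f_1(\x)g_1(\y)g_2(\z) \;=\; \sum_{\beta,\gamma,\delta}\hat{f}_1(\beta)\hat{g}_1(\gamma)\hat{g}_2(\delta)\,\chi_\beta(\x)\chi_\gamma(\y)\chi_\delta(\z),
\]
so the problem reduces to computing $\E[\chi_\beta(\x)\chi_\gamma(\y)\chi_\delta(\z)]$ for each triple $(\beta,\gamma,\delta)$.

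Next, the tuples $(\x_i,\y[i],\z[i])$ are mutually independent across $i \in [K]$, and within each block the $d$ pairs $((\y[i])_j,(\z[i])_j)$ are conditionally independent given $\x_i$. This factors the character expectation as a product over blocks, and for each block I would condition on $\x_i = a$ and apply Lemma~\ref{lem-downanddirty} to every $j \in [d]$ separately. The lemma kills any contribution with $\gamma[i]_j \neq \delta[i]_j$, so only terms with $\gamma = \delta$ survive; renaming the common value $\alpha$, each coordinate contributes a factor $(-1/2)^{\#\alpha[i]_j}\,\omega^{2a\,\alpha[i]_j}$.

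Finally, collecting the $\omega^{\beta_i a}$ factor with the block's aggregate phase $\omega^{2a|\alpha[i]|}$ and averaging over $\x_i = a$ via the orthogonality identity $\tfrac{1}{3}\sum_{a \in \Z_3}\omega^{ac} = {\bf 1}[c \equiv 0 \pmod 3]$ forces $\beta_i \equiv -2|\alpha[i]| \equiv |\alpha[i]| \pmod 3$, i.e., $\beta_i = \pih(\alpha)_i$, and leaves a per-block factor of $(-1/2)^{\#\alpha[i]}$. Taking the product across all $K$ blocks and using $\#\alpha = \sum_i \#\alpha[i]$ collapses the prefactor to $(-1/2)^{\#\alpha}$, and summing over the one remaining free parameter $\alpha$ yields the claimed identity.

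I expect the only delicate point to be tracking signs and phases: because $2 \equiv -1 \pmod 3$, one has to verify carefully that combining the phase $\omega^{2a\alpha[i]_j}$ from Lemma~\ref{lem-downanddirty} with the character $\omega^{\beta_i a}$ yields exactly $\beta = \pih(\alpha)$ and not its negation. Everything else is routine bookkeeping.
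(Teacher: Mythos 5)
Your proposal is correct and follows essentially the same route as the paper's proof: Fourier-expand all three functions, factor the character expectation over blocks, invoke Lemma~\ref{lem-downanddirty} to kill the off-diagonal $g$ terms (forcing the two $g$-side multi-indices to coincide), and then use orthogonality of $\Z_3$ characters to force the $f$-side index to equal $\pih$ of that common index, collapsing the per-coordinate $(-1/2)$ factors into $(-1/2)^{\#\alpha}$. The sign/phase bookkeeping you flag (combining $\omega^{\beta_i a}$ with $\omega^{2a|\alpha[i]|}$ and using $2\equiv -1$) is exactly what the paper does, and your conclusion $\beta = \pih(\alpha)$ is right.
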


From this, we can derive the following two corollaries:
\begin{corollary}\label{cor-egg}
Let $g:\Z_3^{dK} \rightarrow \mathbb{R}$.  Then
\begin{equation*}
\E[g(\y) \overline{g(\z)}]
= \sum_{\alpha : \vert \alpha[i] \vert \equiv 0~\forall i} \hat{g}(\alpha)\overline{\hat{g}(-\alpha)} \left(-\frac{1}{2}\right)^{\# \alpha}.
\end{equation*}
\end{corollary}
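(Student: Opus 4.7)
The plan is to obtain Corollary~\ref{cor-egg} as a direct specialization of Lemma~\ref{lem-eggg-general}. Specifically, I would take $f_1 \equiv 1$ (the constant function on $\Z_3^K$), $g_1 = g$, and $g_2 = \overline{g}$, so that the left-hand side of Lemma~\ref{lem-eggg-general} becomes $\E[1 \cdot g(\by) \cdot \overline{g(\bz)}] = \E[g(\by)\overline{g(\bz)}]$, exactly the left-hand side of the corollary. If one prefers to stay strictly within the real-valued hypothesis stated in Lemma~\ref{lem-eggg-general}, one can first decompose $g$ into its real and imaginary parts and combine the four resulting expressions via linearity of expectation; the bookkeeping is entirely routine.

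Next I would unpack the two Fourier-coefficient factors that appear on the right side of Lemma~\ref{lem-eggg-general}. The constant function $1$ on $\Z_3^K$ has Fourier transform $\hat{f}_1(\beta) = \mathbf{1}[\beta = 0]$, so $\hat{f}_1(\pih(\alpha)) = \mathbf{1}[\pih(\alpha) = 0]$. By the definition of $\pih$, this indicator vanishes unless $\vert \alpha[i] \vert \equiv 0 \pmod 3$ for every $i \in [K]$, which is exactly the indexing condition appearing in the corollary.

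For the second factor I need the standard identity $\widehat{\overline{g}}(\alpha) = \overline{\hat{g}(-\alpha)}$, which follows from the definition of Fourier coefficients together with the fact that $\overline{\chi_\alpha(x)} = \chi_{-\alpha}(x)$: concretely,
\[
\widehat{\overline{g}}(\alpha) = \E\bigl[\overline{g(\bx)}\cdot \overline{\chi_\alpha(\bx)}\bigr] = \overline{\E\bigl[g(\bx)\cdot \overline{\chi_{-\alpha}(\bx)}\bigr]} = \overline{\hat{g}(-\alpha)}.
\]
Substituting both simplifications into the right-hand side of Lemma~\ref{lem-eggg-general} produces precisely the formula asserted by the corollary, with the factor $(-1/2)^{\#\alpha}$ carried along unchanged.

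There is no genuine obstacle here: the proof is a one-line specialization of the preceding lemma. The only two things to be careful about are (i) correctly tracking the complex conjugate when passing from $\hat{g}$ to $\widehat{\overline{g}}$, so that the sign flip on $\alpha$ ends up in the right place, and (ii) recalling that the shorthand ``$\pih(\alpha) = 0$'' unfolds precisely to the blockwise condition $\vert \alpha[i] \vert \equiv 0 \pmod 3$ for all $i$ used in indexing the sum.
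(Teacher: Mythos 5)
Your proposal is correct and follows exactly the same route as the paper: specialize Lemma~\ref{lem-eggg-general} with $f_1 \equiv 1$, $g_1 = g$, $g_2 = \overline{g}$, observe that $\hat{f}_1(\pih(\alpha))$ is the indicator of $\pih(\alpha)=0$ (equivalently $\vert\alpha[i]\vert\equiv 0$ for all $i$), and use $\widehat{\overline{g}}(\alpha)=\overline{\hat{g}(-\alpha)}$. Your side remark about handling the real-valued hypothesis in Lemma~\ref{lem-eggg-general} via decomposition into real and imaginary parts is a legitimate point the paper silently glosses over, but it does not change the argument.
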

\begin{proof}[Proof (assuming Lemma~\ref{lem-eggg-general}).]  Set $f_1 \equiv 1$, $g_1 = g$, and $g_2 = \overline{g}$.  The only nonzero Fourier coefficient of $f_1$ is $\hat{f}_1(0) = 1$, and the only elements $\alpha \in \Z_3^L$ for which $\pi_3(\alpha) = 0$ are those where $\vert \alpha[i] \vert \equiv 0$ for all $i$.
Apply Lemma~\ref{lem-eggg-general} to these three functions:
\begin{equation*}
\E[g(\y) \overline{g(\z)}] = \sum_{\alpha : \vert \alpha[i] \vert \equiv 0~\forall i} \hat{g}_1(\alpha){\hat{g}_2(\alpha)} \left(-\frac{1}{2}\right)^{\# \alpha}.
\end{equation*}
Since $\hat{g}_1(\alpha) = \hat{g}(\alpha)$, it remains to show that $\hat{g}_2(\alpha) = \overline{\hat{g}(-\alpha)}$, and this is true because
\begin{equation*}
\hat{g}_2(\alpha)
= \E[g_2(\y)\overline{\chi_\alpha(\y)}]
= \E[\overline{g(\y)}\overline{\chi_\alpha(\y)}]
= \E[\overline{g(\y)}\chi_{-\alpha}(\y)]
= \overline{\E[g(\y)\overline{\chi_{-\alpha}(\y)}]}
= \overline{\hat{g}(-\alpha)},
\end{equation*}
where the third equality follows from $\overline{\chi_\beta(\z)} = \overline{\omega^{\beta \cdot \z}} = \omega^{-\beta \cdot \z} = \chi_{-\beta}(\z)$. 
\end{proof}

\begin{corollary}\label{cor-efgg}
$-\Re\E[f(\x)g(\y)g(\z)] \leq  \decode(f, g) + \vert \hat{f}(0) \vert \sum_{\alpha: \pih(\alpha) = 0} \vert\hat{g}(\alpha)\vert^2\cdot(1/2)^{\#\alpha}$.
\end{corollary}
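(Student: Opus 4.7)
The plan is to apply Lemma~\ref{lem-eggg-general} directly and then peel off the $\pih(\alpha)=0$ contribution from the decodable part. Specifically, take $f_1 = f$ and $g_1 = g_2 = g$ (the lemma is stated for real-valued functions but, as the proof only uses Fourier orthonormality, it applies equally to our $\roots$-valued $f$ and $g$, or alternatively we can apply it separately to the real and imaginary parts). This gives
\begin{equation*}
\E[f(\x)g(\y)g(\z)] \;=\; \sum_{\alpha \in \Z_3^L} \hat{f}(\pih(\alpha))\,\hat{g}(\alpha)^2\,\left(-\tfrac{1}{2}\right)^{\#\alpha}.
\end{equation*}

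Next I would use the trivial bound $-\Re(w) \le |w|$ together with the triangle inequality to obtain
\begin{equation*}
-\Re \E[f(\x)g(\y)g(\z)] \;\le\; \sum_{\alpha \in \Z_3^L} \left|\hat{f}(\pih(\alpha))\right| \cdot |\hat{g}(\alpha)|^2 \cdot (1/2)^{\#\alpha}.
\end{equation*}
Here I have used $|{-}1/2|^{\#\alpha} = (1/2)^{\#\alpha}$ and the fact that $|\hat{g}(\alpha)^2| = |\hat{g}(\alpha)|^2$.

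Finally I would split the sum into the two pieces $\pih(\alpha) \ne 0$ and $\pih(\alpha) = 0$. The first piece matches the definition of $\decode(f,g)$ verbatim, while in the second piece every $\hat{f}(\pih(\alpha))$ collapses to $\hat{f}(0)$, which can be pulled out of the sum to yield $|\hat{f}(0)| \sum_{\pih(\alpha) = 0} |\hat{g}(\alpha)|^2 (1/2)^{\#\alpha}$. Combining the two pieces gives the claimed bound. There is no real obstacle here beyond checking that Lemma~\ref{lem-eggg-general} applies to complex-valued $f,g$; the work was already done in the general-form lemma, and this corollary is essentially a bookkeeping exercise separating the ``constant-mode'' contribution of $f$ from its ``nontrivial-mode'' contribution.
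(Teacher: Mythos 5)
Your proof is correct and follows the paper's argument essentially verbatim: apply Lemma~\ref{lem-eggg-general} with $f_1=f$, $g_1=g_2=g$, bound $-\Re$ by the sum of absolute values, and split off the $\pih(\alpha)=0$ terms. Your side remark that the lemma (stated for real-valued functions) extends to $\roots$-valued ones is a fair observation the paper glosses over, but it does not change the substance.
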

\begin{proof}[Proof (assuming Lemma~\ref{lem-eggg-general}).]  Applying Lemma~\ref{lem-eggg-general} to $f$, $g$, and $g$:
\begin{align*}
-\Re\E[f(\x)g(\y)g(\z)] 
& = -\Re\sum_{\alpha \in \Z_3^L}\hat{f}(\pih(\alpha))\hat{g}(\alpha)^2\left(-\frac{1}{2}\right)^{\#\alpha}\\
& \leq \sum_{\alpha \in \Z_3^L}\vert\hat{f}(\pih(\alpha))\vert\cdot \vert\hat{g}(\alpha)\vert^2\cdot(1/2)^{\#\alpha}\\
& =  \decode(f, g) +  \sum_{\alpha: \pih(\alpha) = 0} \vert \hat{f}(0) \vert \cdot \vert\hat{g}(\alpha)\vert^2\cdot(1/2)^{\#\alpha}.\qedhere
\end{align*}
\end{proof}

We now prove Lemma~\ref{lem-eggg-general}.
\begin{proof}[Proof of Lemma~\ref{lem-eggg-general}]
Begin by expanding out $\E[f_1(\x)g_1(\y)g_2(\z)]$:
\begin{equation}
\E[f_1(\x)g_1(\y)g_2(\z)] = \sum_{ \alpha \in \Z_3^K, \beta, \gamma \in \Z_3^L}
\hat{f}_1(\alpha)\hat{g}_1(\beta)\hat{g}_2(\gamma)\E[\chi_\alpha(\x)\chi_{\beta}(\y)\chi_\gamma(\z)].\label{eq:fourierexpanded}
\end{equation}
We focus on the products of the Fourier characters:
\begin{equation}
\E[\chi_\alpha(\x)\chi_\beta(\y)\chi_\gamma(\z)]
=  \prod_{i \in [K]}\E[\chi_{\alpha_i}(\x_i) \chi_{\beta[i]}(\y[i])\chi_{\gamma[i]}(\z[i])]\label{eq:fcharacters}
\end{equation}
We can attend to each block separately:
\begin{align}
\E[\chi_{\alpha_i}(\x_i) \chi_{\beta[i]}(\y[i])\chi_{\gamma[i]}(\z[i])]
= &\E\left[\omega^{\alpha_i \cdot \x_i + \beta[i] \cdot \y[i] + \gamma[i] \cdot \z[i]}\right]\nonumber\\
= &\E_{\x}\left[\omega^{\alpha_i \cdot a}\prod_{j:\pi(j) = i}\underbrace{\E_{\y, \z}\left[\omega^{\beta_j \y_j + \gamma_j \z_j} \mid \x_i = a\right]}_{(*)}\right].\label{eq:blockcharacters}
\end{align}

Lemma~\ref{lem-downanddirty} tells us that the expectation $(*)$ is zero if $\beta_j \neq \gamma_j$.  Thus, if Equation~\eqref{eq:fcharacters} is to be nonzero, it must be the case that $\beta = \gamma$.  If this is the case, then we can rewrite Equation~\eqref{eq:blockcharacters} as
\begin{equation*}
\eqref{eq:blockcharacters}
=  \E_{\x}\left[\omega^{\alpha_i \cdot a} \prod_{j:\pi(j) = i} \left(-\frac{1}{2}\right)^{\#\beta_j}\omega^{2a\beta_j}\right]
=  \E_{\x}\left[\left(-\frac{1}{2}\right)^{\#\beta[i]}\omega^{(\alpha_i  + 2 \vert \beta[i]\vert)a}\right].
\end{equation*}
Note that the exponent of $\omega$, $(\alpha_i +2\vert \beta[i]\vert) a$, is zero if $\alpha_i \equiv \vert \beta[i]\vert \pmod{3}$, in which case the expectation is just the constant $(-1/2)^{\#\beta[i]}$.  This occurs for all $i \in [K]$ exactly when $\alpha = \pih(\beta)$.  If, on the other hand, $\alpha_i + 2\vert\beta[i]\vert$ is nonzero, then the entire expectation is zero because $a$, the value of $\x_i$, is uniformly random from $\Z_3$.  Thus, Equation \eqref{eq:fcharacters} is nonzero only when $\alpha = \pih(\beta)$ and $\beta = \gamma$, in which case it equals
\begin{equation*}
\eqref{eq:fcharacters} = \left(-\frac{1}{2}\right)^{\#\beta}.
\end{equation*}
We may therefore conclude with
\begin{equation*}
\eqref{eq:fourierexpanded} = \sum_{\alpha \in \Z_3^L}\hat{f}_1(\pih(\alpha))\hat{g}_1(\alpha)\hat{g}_2(\alpha)\left(-\frac{1}{2}\right)^{\#\alpha}. \qedhere
\end{equation*}
\end{proof}

\subsection{$\fournat$ Analysis}

In the $\fournat$ test, we may assume that $f$ and $g$ are folded, which immediately implies that $\E[f(\x)\overline{g(\y)}] = 0$.  This is because $\x$ and $\y$ are independent, and hence \begin{equation*}
\E[f(\x)\overline{g(\y)}] = \E[f(\x)]\E[\overline{g(\y)}] = 0 \cdot 0
\end{equation*}
since $f$ and $g$ are folded.  Next, folding also implies that $\E[g(\y)\overline{g(\z)}] = 0$.  To see this, first note that for any $\alpha$ for which $\vert \alpha[i] \vert \equiv 0$ for all $i$, we have that $\vert \alpha \vert \equiv 0$.  Thus, any such $\alpha$ must satisfy $\hat{g}(\alpha) = 0$, as Proposition~\ref{prop-folded} implies that $\hat{g}(\alpha') \neq 0$ only when $\vert \alpha' \vert \equiv 1$.  This means the sum in Corollary~\ref{cor-egg} must be zero, which implies that $\E[g(\y)\overline{g(\z)}] = 0$ as well.

Equation~\eqref{eq:bigexpansion} has now been reduced to
\begin{equation}
\eqref{eq:bigexpansion} = \tfrac{5}{9} - \tfrac{2}{3} \Re \E[f(\x)g(\y)g(\z)] - \tfrac{2}{9}\Re \E[g(\y)g(\z)g(\w)]. \label{eq:smallexpansion}
\end{equation}
As $g(\y)g(\z)g(\w)$ is always in $\roots$, $\Re \E[g(\y)g(\z)g(\w)]$ is always at least $-\frac12$.  Therefore,
\begin{equation}
\eqref{eq:smallexpansion}
\leq \tfrac{2}{3} - \tfrac{2}{3} \Re \E[f(\x)g(\y)g(\z)]
= \frac{2}{3} + \frac{2}{3} \decode(f, g),
\end{equation}
using Corollary~\ref{cor-efgg} and the fact that $\hat{f}(0) = 0$ by folding.  This proves Lemma~\ref{lem:4nat-fourier}.

\subsection{3-Coloring Analysis}

The analysis of the 3-Coloring test is more involved, partially because we can no longer assume either of the functions are folded, and partially because we need a more careful analysis of the $\E[g(\y)g(\z)g(\w)]$ term.  Instead, we upper-bound these terms with expressions involving the empty coefficients $\hat{f}(0)$ and $\hat{g}(0)$, which, when large, cause the folding tests to fail with high probability.  In addition, the analysis of the 3-Coloring test also involves analyzing the folding tests on $f$ and $g$, and it is with these that we start.

For a function $f_1:\Z_3^n\rightarrow \roots$, define $\even(f_1) = \sum_{\alpha: \vert \alpha \vert \equiv 0} \vert \hat{f}_1(\alpha) \vert^2$.
\begin{lemma}\label{lem:folding}
$\Pr[f(\x) \neq f(\x+1)] = 1-\even(f)$.
\end{lemma}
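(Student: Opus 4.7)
\textbf{Proof plan for Lemma~\ref{lem:folding}.}

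The plan is to expand the probability using the Fourier representation of $f$ (viewed, per the paper's convention, as a function with values in $\roots$) and then collect terms according to $|\alpha| \bmod 3$.

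First I would reduce to computing an expectation. For any two values $u, v \in \roots$, we have the identity $\bone[u = v] = \tfrac{1}{3}(1 + u \overline{v} + \overline{u} v)$, since $u\overline{v}$ is itself a cube root of unity and $1 + \omega + \omega^2 = 0$. Applying this with $u = f(\x)$ and $v = f(\x+1)$ yields
\begin{equation*}
\Pr[f(\x) \neq f(\x+1)] = 1 - \tfrac{1}{3}\bigl(1 + 2\Re\E[f(\x)\overline{f(\x+1)}]\bigr) = \tfrac{2}{3} - \tfrac{2}{3}\,\Re\E[f(\x)\overline{f(\x+1)}].
\end{equation*}

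Next I would evaluate $\E[f(\x)\overline{f(\x+1)}]$ via the Fourier expansion. The key observation is that $\chi_\alpha(\x+1) = \omega^{\alpha \cdot (1,\dots,1)}\chi_\alpha(\x) = \omega^{|\alpha|}\chi_\alpha(\x)$. Expanding both factors and invoking orthogonality of characters,
\begin{equation*}
\E[f(\x)\overline{f(\x+1)}] = \sum_{\alpha,\beta} \hat{f}(\alpha)\overline{\hat{f}(\beta)}\,\overline{\omega^{|\beta|}}\,\E[\chi_\alpha(\x)\overline{\chi_\beta(\x)}] = \sum_\alpha |\hat{f}(\alpha)|^2 \omega^{-|\alpha|}.
\end{equation*}
Taking real parts, since $\Re(\omega^0) = 1$ and $\Re(\omega^{\pm 1}) = -\tfrac{1}{2}$, and splitting the sum according to whether $|\alpha|\equiv 0 \pmod{3}$,
\begin{equation*}
\Re\E[f(\x)\overline{f(\x+1)}] = \even(f) - \tfrac{1}{2}\bigl(1 - \even(f)\bigr) = \tfrac{3}{2}\even(f) - \tfrac{1}{2},
\end{equation*}
where I used Parseval ($\sum_\alpha |\hat{f}(\alpha)|^2 = 1$) to get $\sum_{|\alpha|\not\equiv 0}|\hat{f}(\alpha)|^2 = 1 - \even(f)$.

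Finally, substituting back gives
\begin{equation*}
\Pr[f(\x)\neq f(\x+1)] = \tfrac{2}{3} - \tfrac{2}{3}\bigl(\tfrac{3}{2}\even(f) - \tfrac{1}{2}\bigr) = 1 - \even(f),
\end{equation*}
as desired. There is no real obstacle here; the only thing to be careful about is using the $\roots$-valued convention for $f$ when applying the $\bone[u=v]$ identity, and tracking the sign in $\omega^{-|\alpha|}$ correctly so that the three cube-root-of-unity cases collapse into the simple $\{1, -1/2, -1/2\}$ pattern.
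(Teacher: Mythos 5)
Your proof is correct and follows essentially the same route as the paper's: reduce to $\tfrac{2}{3}(1 - \Re\E[f(\x)\overline{f(\x+1)}])$, Fourier-expand using orthogonality to get $\sum_\alpha |\hat f(\alpha)|^2\omega^{-|\alpha|}$, then split by $|\alpha|\bmod 3$ and apply Parseval. The only difference is that you spell out the indicator identity $\bone[u=v]=\tfrac13(1+u\overline v+\overline u v)$ where the paper simply asserts the first reduction.
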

\begin{proof}
It is easy to see that
$\Pr[f(\x) \neq f(\x+1)] = \frac{2}{3}\left(1 - \Re\E[f(\x)\overline{f(\x+1)}]\right)$.
Expanding the expectation,
\begin{align*}
\E[f(\x)\overline{f(\x+1)}]
& = \sum_{\alpha, \beta} \hat{f}(\alpha) \overline{\hat{f}(\beta)} \E[\chi_\alpha(\x)\overline{\chi_\beta(\x+1)}]\\
& = \sum_{\alpha, \beta} \hat{f}(\alpha)\overline{\hat{f}(\beta)} \E\left[\chi_\alpha(\x)\overline{\chi_\beta(\x)} \omega^{-\vert \beta\vert}\right].
\end{align*}
Since $\E[\chi_{\alpha}(\x)\overline{\chi_{\beta}(\x)}] = \bone[\alpha = \beta]$, this equals $\sum_{\alpha} \vert \hat{f}(\alpha) \vert^2 \omega^{-\vert\alpha\vert}$.  Taking the real part,
\begin{align*}
\Re \sum_{\alpha} \vert \hat{f} \vert^2 \omega^{-\vert \alpha \vert}
& = \even(f) - \frac{1}{2} \sum_{\vert \alpha \vert \not\equiv 0} \vert \hat{f} \vert^2\\
& = \even(f) - \frac{1}{2}(1-\even(f)) = \frac{3}{2} \even(f) -\frac{1}{2}.
\end{align*}
Thus, the probability of passing the folding test is $\frac{2}{3}\left(1 - \frac{3}{2} \even(f) + \frac{1}{2}\right) = 1 - \even(f)$.
\end{proof}

Now we focus on the $\E[\fournat(\cdots)]$ term.  Let us upper-bound the terms in Equation~\eqref{eq:bigexpansion} from left to right.  First,
\begin{proposition}\label{prop:efg-coloring}
$\Re\E[f(\x)\overline{g(\y)}] \leq \frac{1}{2}(\vert \hat{f}(0) \vert^2 + \vert \hat{g}(0) \vert^2)$.
\end{proposition}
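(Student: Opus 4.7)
The plan is a short, two-line argument using independence of $\bx$ and $\by$, followed by AM-GM. Since $\bx$ and $\by$ are independent in the $\coloring$ test, the expectation factors:
\begin{equation*}
\E[f(\bx)\overline{g(\by)}] = \E[f(\bx)] \cdot \E[\overline{g(\by)}] = \hat{f}(0)\cdot\overline{\hat{g}(0)},
\end{equation*}
using that $\hat{f}(0) = \E[f(\bx)]$ and $\overline{\hat{g}(0)} = \E[\overline{g(\by)}]$ by definition of the Fourier coefficient at $0$.

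Taking real parts and applying $\Re(z) \leq |z|$ gives
\begin{equation*}
\Re \E[f(\bx)\overline{g(\by)}] \leq |\hat{f}(0)|\cdot|\hat{g}(0)|.
\end{equation*}
Finally, the AM-GM inequality $ab \leq \tfrac{1}{2}(a^2+b^2)$ applied to $a = |\hat{f}(0)|$ and $b = |\hat{g}(0)|$ yields the claimed bound. There is no real obstacle here; the only things to remark on are that independence of $\bx$ and $\by$ is essential (this is why we did not need folding, in contrast to the $\fournat$ analysis where folding forced $\hat{f}(0) = \hat{g}(0) = 0$ and this term simply vanished), and that the symmetric bound in terms of $|\hat{f}(0)|^2 + |\hat{g}(0)|^2$ is the right one for combining later with the folding-test analysis of Lemma~\ref{lem:folding}, since those tests pay in terms of $\even(f)$ and $\even(g)$, which are bounded below by $|\hat{f}(0)|^2$ and $|\hat{g}(0)|^2$ respectively.
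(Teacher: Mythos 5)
Your proof is correct and matches the paper's argument exactly: factor the expectation by independence of $\x$ and $\y$, take real parts, and finish with AM--GM. The surrounding remarks about folding and about why the symmetric bound is useful later are accurate but not part of the paper's own (shorter) proof.
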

\begin{proof}
By the independence of $\x$ and $\y$, $\E[f(\x)\overline{g(\y)}] = \E[f(\x)] \cdot \E[\overline{g(\y)}] = \hat{f}(0) \overline{\hat{g}(0)}$.  Then,
\begin{equation*}
\Re \hat{f}(0) \overline{\hat{g}(0)}
\leq \vert \hat{f}(0) \vert \cdot \vert \hat{g}(0) \vert
\leq \frac{1}{2}(\vert \hat{f}(0) \vert^2 + \vert \hat{g}(0) \vert^2),
\end{equation*}
using the fact that $a^2 + b^2 \geq 2 a b$ for all real numbers $a$ and $b$.
\end{proof}
Next,
\begin{lemma}\label{lem:egg-coloring}
$\Re \E[g(\y)\overline{g(\z)}]
\leq \even(g)$.
\end{lemma}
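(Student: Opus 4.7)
My plan is to start from Corollary~\ref{cor-egg}, which expresses
\[
\E[g(\y)\overline{g(\z)}] = \sum_{\alpha : |\alpha[i]| \equiv 0\ \forall i} \hat{g}(\alpha)\overline{\hat{g}(-\alpha)}\left(-\tfrac{1}{2}\right)^{\#\alpha}.
\]
Taking real parts and applying the triangle inequality bounds this above by $\sum_{\alpha} |\hat{g}(\alpha)| \cdot |\hat{g}(-\alpha)| \cdot (1/2)^{\#\alpha}$, where the sum is still over $\alpha$ with $|\alpha[i]| \equiv 0$ for all $i$.

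Next I would apply the AM-GM inequality $|\hat{g}(\alpha)| \cdot |\hat{g}(-\alpha)| \leq \tfrac{1}{2}(|\hat{g}(\alpha)|^2 + |\hat{g}(-\alpha)|^2)$. The index set $\{\alpha : |\alpha[i]| \equiv 0\ \forall i\}$ is symmetric under $\alpha \mapsto -\alpha$ (since $|-\alpha[i]| \equiv -|\alpha[i]| \equiv 0 \pmod 3$), and $\#\alpha = \#(-\alpha)$, so the two halves of the AM-GM bound produce identical sums. Combining them gives
\[
\Re\E[g(\y)\overline{g(\z)}] \leq \sum_{\alpha : |\alpha[i]| \equiv 0\ \forall i} |\hat{g}(\alpha)|^2 \cdot (1/2)^{\#\alpha}.
\]

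Finally, I would bound $(1/2)^{\#\alpha} \leq 1$ and then enlarge the index set: the condition ``$|\alpha[i]| \equiv 0 \pmod 3$ for every block $i$'' implies $|\alpha| = \sum_i |\alpha[i]| \equiv 0 \pmod 3$, so the sum is at most $\sum_{\alpha : |\alpha| \equiv 0} |\hat{g}(\alpha)|^2 = \even(g)$, which is exactly the claimed bound.

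I do not expect any real obstacle here; the main thing to verify carefully is the symmetry step (that $\alpha \mapsto -\alpha$ preserves the index set and that $\#\alpha$ is invariant under it), since the rest is a routine chain of inequalities that discards information at precisely the points needed to match the definition of $\even(g)$.
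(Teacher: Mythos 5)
Your proposal is correct and follows essentially the same route as the paper: both start from Corollary~\ref{cor-egg}, pass to absolute values, reduce $|\hat{g}(\alpha)||\hat{g}(-\alpha)|$ to $|\hat{g}(\alpha)|^2$ using the symmetry of the index set under $\alpha \mapsto -\alpha$, and then drop the $(1/2)^{\#\alpha}$ factor and enlarge the index set to reach $\even(g)$. The only (cosmetic) difference is that you apply pointwise AM-GM followed by the change of variable, whereas the paper applies Cauchy--Schwarz to the whole sum and then uses the same symmetry to collapse the two square roots; these give identical bounds.
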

\begin{proof}
From Corollary~\ref{cor-egg}, 
\begin{equation*}
\Re \E[g(\y) \overline{g(\z)}]
= \Re \sum_{\alpha : \vert \alpha[i] \vert \equiv 0~\forall i} \hat{g}(\alpha)\overline{\hat{g}(-\alpha)} \left(-\frac{1}{2}\right)^{\# \alpha}
\leq \sum_{\alpha : \vert \alpha[i] \vert \equiv 0~\forall i} \vert \hat{g}(\alpha)\vert \vert \hat{g}(-\alpha) \vert \left(\frac{1}{2}\right)^{\# \alpha}.
\end{equation*}
By Cauchy-Schwarz, this is at most
\begin{equation*}
\sqrt{ \sum_{\alpha : \vert \alpha[i] \vert \equiv 0~\forall i} \vert \hat{g}(\alpha)\vert^2  \left(\frac{1}{2}\right)^{\# \alpha}}
\cdot \sqrt{ \sum_{\alpha : \vert \alpha[i] \vert \equiv 0~\forall i} \vert \hat{g}(-\alpha) \vert^2 \left(\frac{1}{2}\right)^{\# \alpha}}
= \sum_{\alpha : \vert \alpha[i] \vert \equiv 0~\forall i} \vert \hat{g}(\alpha)\vert^2  \left(\frac{1}{2}\right)^{\# \alpha},
\end{equation*}
which is clearly at most $\sum_{\vert \alpha \vert \equiv 0} \vert \hat{g}(\alpha) \vert^2 = \even(g)$.
\end{proof}
Next,
\begin{lemma}\label{lem:efg-coloring}
$-\Re\E[f(\x)g(\y)g(\z)] 
\leq \vert \hat{f}(0) \vert \cdot \left(\frac{3}{4} \vert \hat{g}(0) \vert^2 + \frac{1}{4} \even(g)\right) +  \decode(f, g).$
\end{lemma}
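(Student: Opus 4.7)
The starting point is Corollary~\ref{cor-efgg}, which already gives
\[
-\Re\E[f(\x)g(\y)g(\z)] \;\leq\; \decode(f,g) \;+\; \vert \hat{f}(0) \vert \cdot S, \qquad\text{where } S := \sum_{\alpha:\,\pih(\alpha)=0} \vert \hat{g}(\alpha)\vert^2 (1/2)^{\#\alpha}.
\]
So the whole task is to bound $S \leq \tfrac{3}{4}\vert\hat g(0)\vert^2 + \tfrac14 \even(g)$.

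The plan is to split $S$ into the $\alpha = 0$ summand (which contributes exactly $\vert \hat g(0)\vert^2$) and the $\alpha \neq 0$ summands, and then to notice two things. First, the indexing condition $\pih(\alpha) = 0$ means that for every block $i \in [K]$ one has $\vert \alpha[i]\vert \equiv 0 \pmod{3}$; summing over $i$ gives $\vert\alpha\vert \equiv 0\pmod{3}$, so $\{\alpha: \pih(\alpha)=0\}\subseteq \{\alpha: \vert\alpha\vert \equiv 0\}$ and therefore $\sum_{\alpha:\pih(\alpha)=0}\vert\hat g(\alpha)\vert^2 \leq \even(g)$. Second, if $\alpha \neq 0$ with $\pih(\alpha) = 0$ then some block $\alpha[i]$ is nonzero, and that block cannot have exactly one nonzero entry (since a single entry in $\Z_3\setminus\{0\}$ sums to $1$ or $2$, not $0$); hence any such $\alpha$ satisfies $\#\alpha \geq 2$, so $(1/2)^{\#\alpha} \leq 1/4$.

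Combining these two observations,
\[
S \;\leq\; \vert\hat g(0)\vert^2 \;+\; \tfrac14 \!\!\sum_{\alpha\neq 0,\, \pih(\alpha)=0}\!\! \vert\hat g(\alpha)\vert^2 \;\leq\; \vert\hat g(0)\vert^2 + \tfrac14\bigl(\even(g) - \vert\hat g(0)\vert^2\bigr) \;=\; \tfrac{3}{4}\vert\hat g(0)\vert^2 + \tfrac14 \even(g),
\]
which, plugged into the bound from Corollary~\ref{cor-efgg}, yields the lemma.

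No step looks seriously hard: the nontrivial content is purely the block-structure observation that any nonzero $\alpha$ with $\pih(\alpha)=0$ has $\#\alpha\geq 2$, and the trivial containment $\{\pih(\alpha)=0\}\subseteq\{\vert\alpha\vert\equiv 0\}$ which links the constraint to $\even(g)$. The rest is bookkeeping.
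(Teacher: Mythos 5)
Your proof is correct and follows essentially the same route as the paper: start from Corollary~\ref{cor-efgg}, split off the $\alpha=0$ term, observe that any nonzero $\alpha$ with $\pih(\alpha)=0$ must have $\#\alpha\geq 2$ (so the weight $(1/2)^{\#\alpha}$ is at most $1/4$), and bound the remaining mass by $\even(g)-\vert\hat g(0)\vert^2$ via the containment $\{\pih(\alpha)=0\}\subseteq\{\vert\alpha\vert\equiv 0\}$. The only difference is cosmetic: you phrase the $\#\alpha\geq 2$ observation in terms of the nonzero block, whereas the paper rules out $\#\alpha=1$ directly; the content is identical.
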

\begin{proof}
By Corollary~\ref{cor-efgg},
\begin{equation*}
-\Re\E[f(\x)g(\y)g(\z)] \leq \decode(f, g) + \vert \hat{f}(0) \vert \sum_{\alpha: \pih(\alpha) \equiv 0} \vert\hat{g}(\alpha)\vert^2\cdot(1/2)^{\#\alpha}.
\end{equation*}
Consider the sum $\sum_{\alpha: \pih(\alpha) \equiv 0} \vert\hat{g}(\alpha)\vert^2\cdot(1/2)^{\#\alpha}$.  The only time that $\# \alpha = 0$ is when $\alpha = 0$.  In addition, no $\alpha$ with $\# \alpha = 1$ contributes to the sum, because such an $\alpha$ cannot satisfy $\pi_3(\alpha) \equiv 0$ (one of its coordinates must be~$1$ or~$2$).  Thus, the sum is upper-bounded by
\begin{align*}
\vert \hat{g}(0) \vert^2 + \frac{1}{4} \sum_{\alpha: \pi_3(\alpha) \equiv 0} \vert \hat{g}(\alpha) \vert^2
&\leq \vert \hat{g}(0) \vert^2 + \frac{1}{4} \sum_{\vert \alpha\vert = 0} \vert \hat{g}(\alpha) \vert^2\\
&= \vert \hat{g}(0) \vert^2 + \frac{1}{4} (\even(g) - \vert \hat{g}(0)\vert^2)
= \frac{3}{4} \vert \hat{g}(0) \vert^2 + \frac{1}{4}\even(g).
\end{align*}
This concludes the lemma.
\end{proof}

The last term, $\E[g(\y)g(\z)g(\w)]$, is more difficult to bound.  The bound we use is:
\begin{lemma}\label{lem:eggg}
$-\Re\E[g(\y)g(\z)g(\w)] \leq \frac{1}{2} - \frac{3}{2}\vert\hat{g}(0)\vert^2$.
\end{lemma}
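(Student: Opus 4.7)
The natural first move is to convert the statement into a probabilistic one. For every $t \in \Z_3$ the identity $-\Re(\omega^t) = \tfrac{1}{2} - \tfrac{3}{2}\bone[t \equiv 0 \bmod 3]$ holds, so applying it pointwise with $t = g(\y) + g(\z) + g(\w) \pmod 3$ (under the identification of $g : \Z_3^L \to \roots$ with its $\Z_3$-valued exponent) and taking expectations shows that the lemma is equivalent to
\begin{equation*}
\Pr\bigl[g(\y) + g(\z) + g(\w) \equiv 0 \bmod 3\bigr] \;\geq\; |\hat{g}(0)|^2. \qquad (\star)
\end{equation*}
Writing $p_a := \Pr[g(\y) = a]$ for $a \in \Z_3$ and expanding $|\hat{g}(0)|^2 = |\sum_a p_a \omega^a|^2$, Plancherel on $\Z_3$ gives the clean form $|\hat{g}(0)|^2 = 1 - 3(p_0 p_1 + p_0 p_2 + p_1 p_2)$, so $(\star)$ is a concrete assertion about a sum vanishing modulo~$3$.

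My plan is to prove $(\star)$ by Fourier-expanding $\E[g(\y)g(\z)g(\w)]$ in the style of Lemma~\ref{lem-eggg-general}. A direct per-column calculation—enumerating the six uniform triples $(y, z, w)$ conditioned on $\x_i = a$ and using $U_a = \omega^{sa} U_0$ (as in the proof of Lemma~\ref{lem-downanddirty}, with $s = \alpha+\beta+\gamma$)—should yield
\begin{equation*}
\E[\chi_\alpha(\y)\chi_\beta(\z)\chi_\gamma(\w)] = \bone\bigl[\pih(\alpha)+\pih(\beta)+\pih(\gamma)\equiv 0\bigr] \prod_{(i,j)} U_0(\alpha_{ij},\beta_{ij},\gamma_{ij}),
\end{equation*}
where the column factor $U_0(\alpha,\beta,\gamma) \in \{-\tfrac12, 0, \tfrac12, 1\}$ equals $\tfrac{1}{2}(k-1)$ with $k$ the number of pairs among $\{\alpha+\beta,\alpha+\gamma,\beta+\gamma\}$ that vanish mod~$3$. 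I would then partition the sum by which of $\alpha,\beta,\gamma$ equal $0$: the $(0,0,0)$ term contributes exactly $\hat{g}(0)^3$; the ``exactly one zero'' cases (the per-column constraint $U_0\neq 0$ forces the other two Fourier arguments to be equal) reduce to $3\hat{g}(0)\sum_{\alpha\ne 0,\,\pih(\alpha)\equiv 0}\hat{g}(\alpha)^2(-\tfrac12)^{\#\alpha}$; and the ``all nonzero'' cases need to be handled by Cauchy--Schwarz together with Parseval, analogously to Lemma~\ref{lem:egg-coloring}.

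The hardest step will be the ``all nonzero'' Fourier terms: unlike the two-variable analog in Corollary~\ref{cor-egg}, each column's $U_0$ may be $\pm\tfrac12$, so a bare triangle inequality loses too much. I anticipate pairing each $(\alpha,\beta,\gamma)$ with its negation $(-\alpha,-\beta,-\gamma)$ to assemble nonnegative real terms of the shape $|\hat{g}(\alpha)|^2 \cdot (\text{small factor})$, then invoking $\sum_{\alpha\ne 0}|\hat{g}(\alpha)|^2 = 1 - |\hat{g}(0)|^2$ to deliver the precise $-\tfrac{3}{2}|\hat{g}(0)|^2$ slack. I would sanity-check against the tight instances—constant $g$ (where $|\hat{g}(0)|^2 = 1$ and the lemma's bound is $-1$) and a dictator $g$ (where $|\hat{g}(0)|^2 = 0$ and the bound is $\tfrac12$)—before finalizing the constants in the Cauchy--Schwarz step.
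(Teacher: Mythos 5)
Your opening reduction is exactly the paper's first step: identifying $-\Re(\omega^t)=\tfrac12-\tfrac32\bone[t\equiv 0]$ and rewriting the lemma as $\Pr[g(\y)+g(\z)+g(\w)\equiv 0]\geq|\hat g(0)|^2$ is precisely where the paper's proof in Appendix~\ref{sec:eggg} begins (in the language of indicators, $\sum_{abc=1}\E[\bone_a(\y)\bone_b(\z)\bone_c(\w)]\geq|\hat g(0)|^2$). Your sanity checks and the per-column factor $U_0\in\{-\tfrac12,0,\tfrac12,1\}$ are also correct. The divergence, and the gap, is in how you then propose to prove $(\star)$.

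The decomposition you sketch works with the Fourier coefficients of the complex-valued function $g:\Z_3^L\to\roots$ directly, and the key ``all nonzero'' step leans on pairing $(\alpha,\beta,\gamma)$ with $(-\alpha,-\beta,-\gamma)$ to produce nonnegative terms of the form $|\hat g(\alpha)|^2\cdot(\text{small})$. That pairing only yields squared magnitudes if $\hat g(-\alpha)=\overline{\hat g(\alpha)}$, which holds for real-valued functions but fails for $\roots$-valued $g$ (the correct identity is $\hat{\bar g}(\alpha)=\overline{\hat g(-\alpha)}$, and $\bar g=g^2\neq g$). So the centerpiece of the ``hardest step'' does not go through as stated. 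There is a second issue at the other end of the decomposition: the $(0,0,0)$ term contributes $\Re\hat g(0)^3$, which is not a multiple of $|\hat g(0)|^2$ (it can even be negative), so it doesn't supply the $\tfrac32|\hat g(0)|^2$ you need without delicate cancellation against the other two groups. Since those two groups are exactly the ones you plan to estimate by triangle-inequality-style bounds, getting the exact constant back out seems very unlikely.

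The paper avoids both problems by switching basis: write $g=\bone_1+\omega\bone_\omega+\omega^2\bone_{\omega^2}$ and work with the real-valued indicators $\bone_a$. The linear constraint $\bone_1+\bone_\omega+\bone_{\omega^2}=1$ (equivalently $\hat\bone_{\omega^2}(\alpha)=\bone[\alpha=0]-\hat\bone_1(\alpha)-\hat\bone_\omega(\alpha)$) produces an exact algebraic identity (Lemma~\ref{lem:threeones}) rather than an inequality, giving
$\sum_{abc=1}\E[\bone_a(\y)\bone_b(\z)\bone_c(\w)]=|\hat g(0)|^2+9\,\E[\bone_1(\y)\bone_\omega(\z)\bone_{\omega^2}(\w)],$
and the last term is trivially nonnegative. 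No Cauchy--Schwarz, no negation pairing, and $|\hat g(0)|^2$ appears exactly from the $\alpha=0$ block via $p^3+q^3+r^3-3pqr=|\hat g(0)|^2$. If you want to push your approach through, you would need to replace the negation-pairing idea with something that works for complex-valued $g$; the cleanest fix I know of is exactly the paper's move to the indicator basis.
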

\noindent The proof of Lemma~\ref{lem:eggg} is presented in Appendix~\ref{sec:eggg}.

Substituting Proposition~\ref{prop:efg-coloring} and Lemmas~\ref{lem:egg-coloring}, \ref{lem:efg-coloring}, and~\ref{lem:eggg} into Equation~\eqref{eq:bigexpansion} and performing some arithmetic yields
\begin{equation*}
\E[\fournat(\cdots)] 
\leq \frac{2}{3} + \frac{2}{3} \decode(f, g) + \frac{1}{3} \vert \hat{f}(0)\vert^2 + \frac{1}{2} \vert \hat{f}(0)\vert \vert\hat{g}(0) \vert^2 + \even(g)\left(\frac{2}{3} + \frac{\vert \hat{f}(0)\vert}{6}\right).
\end{equation*}
By plugging this bound into Equation~\ref{eq:3-coloring-to-4nat}, applying Lemma~\ref{lem:folding}, and performing more arithmetic, we can upper bound the probability that $f$ and $g$ pass the 3-Coloring test by
\begin{equation*}
-\frac{1}{17} \even(f) - \even(g)\left(\frac{2}{17} - \frac{\vert \hat{f}(0) \vert}{34}\right)
+ \frac{1}{17} \vert \hat{f} (0) \vert^2 + \frac{3}{34} \vert \hat{f} (0) \vert \vert \hat{g}(0) \vert^2 
+ \frac{2}{17} \decode(f, g) + \frac{16}{17}.
\end{equation*}
Note that because $0 \leq \vert \hat{f} (0) \vert \leq 1$, the coefficient of $\even(g)$ is always negative.  Thus, we may bound $-\even(f)$ and $-\even(g)$ by $-\vert \hat{f}(0) \vert^2$ and $-\vert \hat{g} (0) \vert^2$, respectively, resulting in a total upper bound of
\begin{equation*}
\frac{2}{17}\left(\vert \hat{f}(0) \vert \vert\hat{g}(0)\vert^2 - \vert \hat{g}(0)\vert^2\right) + \frac{2}{17}\decode(f, g) + \frac{16}{17}.
\end{equation*}
The leftmost term is always at most zero, so this is at most $\frac{2}{17}\decode(f, g) + \frac{16}{17}$, the expression claimed in Lemma~\ref{lem:3-coloring-fourier}.

\section{Hardness of $\fournat$}\label{app:fournat}

In this section, we show the following theorem:
\begin{named}{Theorem \ref{thm:fournat-hardness} (detailed)}
For all $\eps > 0$, it is $\NP$-hard to $(1, \frac23 + \eps)$-decide the $\fournat$ problem.  In fact, in the ``yes case'', all $\fournat$ constraints can be satisfied by $\twopair$ assignments.
\end{named}
Combining this with Lemma~\ref{lem:fournat-alin} yields Theorem~\ref{thm:alin-hardness}, and combining this with Corollary~\ref{cor:fournat-twotoone} yields Theorem~\ref{thm:twotoone-hardness}.  It is not clear whether this gives optimal hardness assuming perfect completeness.  The $\fournat$ predicate is satisfied by a uniformly random input with probability $\frac59$, and by the method of conditional expectation this gives a deterministic algorithm which $(1, \frac59)$-approximates the $\fournat$ CSP.  This leaves a gap of $\frac19$ in the soundness, and to our knowledge there are no better known algorithms.

On the hardness side, consider a uniformly random satisfying assignment to the $\buddy$ predicate.  It is easy to see that each of the four variables is assigned a uniformly random value from $\Z_3$, and also that the variables are pairwise independent.  As any satisfying assignment to the $\buddy$ predicate also satisfies the $\fournat$ predicate, the work of Austrin and Mossel~\cite{AM09} immediately implies that $(1-\eps, \frac59+\eps)$-approximating the $\fournat$ problem is $\NP$-hard under the Unique Games conjecture.  Thus, if we are willing to sacrifice a small amount in the completeness, we can improve the soundness parameter in Theorem~\ref{thm:fournat-hardness}.  Whether we can improve upon the soundness without sacrificing perfect completeness is open.

We now arrive at the proof of Theorem~\ref{thm:fournat-hardness}.  The proof is entirely standard, and proceeds by reduction from $d$-to-1 Label Cover.  A nearly identical proof gives Theorem~\ref{thm:3-coloring-hardness}, which we omit.  The proof makes use of our analysis of the $\fournat$ test, which is presented in Section~\ref{sec:analysis}.  One preparatory note: most of the proof concerns functions $f:\Z_3^K\rightarrow \Z_3$ and $g:\Z_3^{dK}\rightarrow \Z_3$.  However, we also be making use of Fourier analytic notions defined in Section~\ref{sec:fourier}, and this requires dealing with functions whose range is $\roots$ rather than $\Z_3$.  Thus, we associate $f$ and $g$ with the functions $\omega^f$ and $\omega^g$, and whenever Fourier analysis is used it will actually be with respect to the latter two functions.

\begin{proof}
Let $G = (U \cup V, E)$ be a $d$-to-1 Label Cover instance with alphabet size $K$ and $d$-to-1 maps $\pi_e:[dK]\rightarrow [K]$ for each edge $e \in E$.  We construct a $\fournat$ instance by replacing each vertex in $G$ with its Long Code and placing constraints on adjacent Long Codes corresponding to the tests made in the $\fournat$ test.  Thus, each $u\in U$ is replaced by a copy of the hypercube $\Z_3^K$ and labeled by the function $f_u:\Z_3^K \rightarrow \Z_3$.  Similarly, each $v \in V$ is replaced by a copy of the Boolean hypercube $\Z_3^{dK}$ and labeled by the function $g_v:\Z_3^{dK} \rightarrow \Z_3$.  Finally, for each edge $\{u, v\} \in E$, a set of $\fournat$ constraints is placed between $f_u$ and $g_v$ corresponding to the constraints made in the $\fournat$ test, and given a weight equal to the probability the constraint is tested in the $\fournat$ test multiplied by the weight of $\{u,v\}$ in $G$.  This produces a $\fournat$ instance whose weights sum to 1 which is equivalent to the following test:
\begin{itemize}
\item Pick an edge $e=(u, v) \in E$ uniformly at random.
\item Reorder the indices of $g_v$ so that the $k$th group of $d$ indices corresponds to $\pi_e^{-1}(k)$.
\item Run the $\fournat$ test on $f_u$ and $g_v$.  Accept iff it does.
\end{itemize}

\paragraph{Completeness}
If the original Label Cover instance is fully satisfiable, then there is a function $F:U\cup V \rightarrow [dK]$ for which $\mathsf{val}(F)=1$.  Set each $f_u$ to the dictator assignment $f_u(x) = x_{F(u)}$ and each $g_v$ to the dictator assignment $g_v(y) = y_{F(v)}$.  Let $e=\{u, v\} \in E$.  Because $F$ satisfies the constraint $\pi_e$, $F(u) = \pi_e(F(v))$.  Thus, $f_u$ and $g_v$ correspond to ``matching dictator'' assignments, and above we saw that matching dictators pass the $\fournat$ test with probability 1.  As this applies to every edge in $E$, the $\fournat$ instance is fully satisfiable.

\paragraph{Soundness}
Assume that there are functions $\{f_u\}_{u \in U}$ and $\{g_v\}_{v \in V}$ which satisfy at least a $\frac23+\eps$ fraction of the $\fournat$ constraints.  Then there is at least an $\eps/2$ fraction of the edges $e =\{u, v\} \in E$ for which $f_u$ and $g_v$ pass the $\fournat$ test with probability at least $\frac23+\eps/2$.  This is because otherwise the fraction of $\fournat$ constraint satisfied would be at most
\begin{equation*}
\left(1-\frac{\eps}{2}\right)\left(\frac{2}{3}+\frac{\eps}{2}\right) + \frac{\eps}{2}(1)
= \frac{2}{3} + \frac{2\eps}{3} - \frac{\eps^2}{4} < \frac{2}{3} + \eps.
\end{equation*}
Let $E'$ be the set of such edges, and consider $\{u, v\} \in E'$.  Set $L = dK$.
By Lemma~\ref{lem:4nat-fourier},
\begin{equation*}
\frac{2}{3} + \frac{\eps}{2} \leq \Pr[\text{$f_u$ and $g_v$ pass the $\fournat$ test}]
\leq \frac{2}{3} + \frac{2}{3}\left(\sum_{\alpha : \pi_3(\alpha) \neq 0}\left\vert\hat{f}_u(\pih(\alpha))\right\vert\left\vert\hat{g}_v(\alpha)\right\vert^2\left(\frac{1}{2}\right)^{\#\alpha}\right),
\end{equation*}
meaning that
\begin{equation}
\frac{3\eps}{4}
\leq \sum_{\alpha: \pi_3(\alpha) \neq 0}\left\vert\hat{f}_u(\pih(\alpha))\right\vert\left\vert\hat{g}_v(\alpha)\right\vert^2\left(\frac{1}{2}\right)^{\#\alpha}.\label{eq:fourier-preprob}
\end{equation}
Parseval's equation tells us that $\sum_{\alpha \in \Z_3^L}\vert \hat{g}_v(\alpha) \vert^2 = 1$.  The function $\hat{g}_v$ therefore induces a probability distribution on the elements of $\Z_3^L$.  As a result, we can rewrite Equation~\eqref{eq:fourier-preprob} as
\begin{equation}
\frac{3\eps}{4}\leq \E_{\alpha \sim \hat{g}_v}\left[\left\vert\hat{f}_u(\pih(\alpha))\right\vert\left(\frac{1}{2}\right)^{\#\alpha}\bone[\pi_3(\alpha) \neq 0]\right].\label{eq:fourier-prob}
\end{equation}
As previously noted, $\vert \hat{f}_u(\pih(\alpha))\vert$ is less than 1 for all $\alpha$, so the expression in this expectation as never greater than 1.  We can thus conclude that
\begin{equation*}
\frac{3\eps}{8}\leq \Pr_{\alpha \sim \hat{g}_v}\underbrace{\left[\left\vert \hat{f}_u(\pih(\alpha))\right\vert \left(\frac{1}{2}\right)^{\#\alpha} \bone[\pi_3(\alpha) \neq 0]\geq \frac{3\eps}{8}\right]}_{\mathsf{GOOD}_\alpha},
\end{equation*}
as otherwise the expectation in Equation~\eqref{eq:fourier-prob} would be less than $3\eps/4$.  Call the event in the probability $\mathsf{GOOD}_\alpha$.  When $\mathsf{GOOD}_\alpha$ occurs, the following happens:
\begin{itemize}
\item $\vert \hat{f}_u(\pih(\alpha))\vert^2 \geq 9\eps^2/64$.
\item $\#\alpha \leq \log_2(8/3\eps)$.
\item $\pi_3(\alpha) \neq 0$.  As a result, $\#\alpha > 0$.
\end{itemize}
This suggests the following randomized decoding procedure for each $u \in U$: pick an element $\beta \in \Z_3^K$ with probability $\vert \hat{f}_u(\beta)\vert^2$ and choose one of its nonzero coordinates uniformly at random.  Similarly, for each $v \in V$, pick an element $\alpha \in \Z_3^L$ with probability $\vert \hat{g}_v(\alpha)\vert^2$ and choose one of its nonzero coordinates uniformly at random.  In both cases, nonzero coordinates are guaranteed to exist because all the $f_u$'s and $g_v$'s are folded.

Now we analyze how well this decoding scheme performs for the edges $e = \{u, v\} \in E'$ (we may assume the other edges are unsatisfied).  Suppose that when the elements of $\Z_3^K$ and $\Z_3^L$ were randomly chosen, $g_v$'s set $\alpha$ was in $\mathsf{Good}_\alpha$, and $f_u$'s set $\beta$ equals $\pih(\alpha)$.  Then, as $\#\alpha \leq \log_2(8/3\eps)$, and each label in $\pih(\alpha)$ has at least one label in $\alpha$ which maps to it, the probability that matching labels are drawn is at least $1/\log_2(8/3\eps)$.  Next, the probability that such an $\alpha$ and $\beta$ are drawn is
\begin{equation*}
\sum_{\alpha \in \mathsf{GOOD}}\vert \hat{f}_u(\pih(\alpha)) \vert^2\vert \hat{g}_v(\alpha)\vert^2
\geq \frac{9\eps^2}{64}\sum_{\alpha \in \mathsf{GOOD}}\vert \hat{g}_v(\alpha)\vert^2
\geq \frac{9\eps^2}{64}\frac{3\eps}{8} = \frac{27\eps^3}{512}.
\end{equation*}
Combining these, the probability that this edge is satisfied is at least $27\eps^3/512\log_2(8/3\eps)$.  Thus, the decoding scheme satisfies at least
\begin{equation*}
\frac{27\eps^3}{512\log_2(8/3\eps)}\cdot \frac{\vert E'\vert}{\vert E\vert} \geq \frac{27\eps^4}{1024\log_2(8/3\eps)}
\end{equation*}
fraction of the Label Cover edges in expectation.  By the probabilistic method, an assignment to the Label Cover instance must therefore exist which satisfies at least this fraction of the edges.

We now apply Theorem~\ref{thm:mosraz}, setting the soundness value in that theorem equal to $O(\eps^5)$, which concludes the proof.  
\end{proof}

\appendix

\section{Proof of $\E[g(\y)g(\z)g(\w)]$}\label{sec:eggg}

\begin{lemma}[Lemma~\ref{lem:eggg} restated]
$-\Re\E[g(\y)g(\z)g(\w)] \leq \frac{1}{2} - \frac{3}{2}\vert\hat{g}(0)\vert^2$.
\end{lemma}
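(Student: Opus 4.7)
The plan is to reduce the lemma to a cleaner probabilistic inequality and then prove that inequality by a three-variable Fourier expansion, in the spirit of Lemma~\ref{lem-eggg-general} and Lemma~\ref{lem:egg-coloring}.  First, observe the elementary identity $-\Re(\omega^T) = \tfrac{1}{2} - \tfrac{3}{2}\bone[T \equiv 0 \pmod 3]$ for $T \in \Z_3$.  Applying this pointwise with $T = g(\y) + g(\z) + g(\w)$ (identifying $g$ with $\omega^g$) and taking expectations gives
\[
-\Re\E[g(\y)g(\z)g(\w)] \;=\; \tfrac{1}{2} - \tfrac{3}{2}\Pr[g(\y)+g(\z)+g(\w) \equiv 0 \pmod 3],
\]
so the lemma is equivalent to the cleaner inequality $\Pr[g(\y)+g(\z)+g(\w) \equiv 0 \pmod 3] \geq |\hat g(0)|^2$, which is what I'd prove.

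Next, I compute the three-variable Fourier moment $\E[\chi_\alpha(\y)\chi_\beta(\z)\chi_\gamma(\w)]$ explicitly.  Conditioning on $\x_i = a$ and using independence across blocks and (given $\x_i$) across slots within a block, the single-slot conditional expectation is computed by summing $\omega^{\alpha_j y + \beta_j z + \gamma_j w}$ over the six $\buddy$-satisfying triples $(y, z, w)$ compatible with $\x_i = a$.  Aggregating over the block structure yields
\[
\E[\chi_\alpha(\y)\chi_\beta(\z)\chi_\gamma(\w)] \;=\; \bone\bigl[\pih(\alpha)+\pih(\beta)+\pih(\gamma) \equiv 0\bigr] \cdot \frac{1}{6^L}\prod_j Q_j,
\]
where $Q_j := \tau(\alpha_j+\beta_j) + \tau(\alpha_j+\gamma_j) + \tau(\beta_j+\gamma_j)$ with $\tau(c) := \omega^c + \omega^{2c} \in \{2, -1\}$.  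A finite case check over $(\alpha_j, \beta_j, \gamma_j) \in \Z_3^3$ gives $Q_j \in \{-3, 0, 3, 6\}$, with $Q_j = 6$ only at the all-zero triple; in particular $|Q_j/6| \le 1/2$ at every ``interesting'' slot (one where not all three are zero).

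I then split the Fourier expansion of $\E[g(\y)g(\z)g(\w)]$ by how many of $\alpha, \beta, \gamma$ are the zero vector.  The $(0, 0, 0)$ term contributes exactly $\hat g(0)^3$.  The ``exactly two zero'' configurations vanish, because any slot where a single nonzero coordinate sits opposite two zeros has $Q_j = 0$.  The ``exactly one zero'' cases force the two nonzero vectors to agree at every coordinate (otherwise $Q_j = 0$) and collapse to $3\hat g(0) \sum_{\beta \neq 0\,:\,\pih(\beta) = 0} \hat g(\beta)^2 (-1/2)^{\#\beta}$; the constraint $\pih(\beta) = 0$ with $\beta \neq 0$ forces $\#\beta \geq 2$, so this piece is absolutely bounded by $\tfrac14 |\hat g(0)| (1 - |\hat g(0)|^2)$ via Parseval.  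What remains is the ``all nonzero'' triple sum, which I would bound by Cauchy--Schwarz together with the $(1/2)^{\#(\cdot)}$ decay from the $Q_j/6$ factors to reduce everything to $1 - |\hat g(0)|^2$.

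The main obstacle will be the final step: the ``all nonzero'' residual does not factor into single-variable Fourier tails, so a careful Cauchy--Schwarz (paralleling the one in Lemma~\ref{lem:egg-coloring}) is needed to pair up Fourier mass with the geometric decay.  Once all cross contributions are shown to fit inside the $\tfrac12 - \tfrac32 |\hat g(0)|^2$ budget --- using that $\hat g(0)$ lies in the convex hull of $\roots$ so that $\Re \hat g(0)^3$ can be handled against $\tfrac32|\hat g(0)|^2 - \tfrac12$ by elementary means --- one obtains $\Re\E[g(\y)g(\z)g(\w)] \geq \tfrac32 |\hat g(0)|^2 - \tfrac12$, which is equivalent to the stated bound.
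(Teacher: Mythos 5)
Your reduction to the probabilistic inequality $\Pr[g(\y)+g(\z)+g(\w) \equiv 0 \pmod 3] \geq |\hat g(0)|^2$ is correct and coincides with what the paper derives at an intermediate step (after arithmetizing with indicator functions, it arrives at $\Re\E[ggg] = \tfrac{3}{2}\sum_{abc=1}\E[\bone_a\bone_b\bone_c]-\tfrac12$). Your single-slot kernel $Q_j$ and the vanishing of the ``exactly two zero'' configurations are also right, and the ``exactly one zero'' term has the shape you describe (though the Parseval bound for that piece is $\tfrac34|\hat g(0)|(1-|\hat g(0)|^2)$, not $\tfrac14$ --- a harmless slip).

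The gap is in the final step, and I believe it is a genuine one rather than just missing details. The slack available to absorb the cross terms is $\Re\hat g(0)^3 - \bigl(\tfrac32|\hat g(0)|^2 - \tfrac12\bigr)$, and this quantity is \emph{identically zero on the entire boundary of the triangle} $\mathrm{conv}(\roots)$: one checks that for $z$ in that triangle, $\Re z^3 \ge \tfrac32|z|^2 - \tfrac12$ holds with equality whenever $z$ lies on an edge (equivalently, whenever $\Pr[g=a]=0$ for some $a\in\Z_3$), and the interior gap does not control $1-|\hat g(0)|^2$. Concretely, take $L=1$ and $g(0)=g(1)=1$, $g(2)=\omega$. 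Then $|\hat g(0)|^2 = 1/3$, $\Re\hat g(0)^3 = 0 = \tfrac32\cdot\tfrac13-\tfrac12$, so the budget is exactly $0$; yet the ``all nonzero'' residual is $(1+2\omega)/9$, which has absolute value $\sqrt{3}/9 > 0$ (its real part just happens to vanish). Any Cauchy--Schwarz or Parseval estimate that bounds this residual in absolute value, as your plan proposes in parallel with Lemma~\ref{lem:egg-coloring}, therefore cannot close the argument: you would need to show the residual has \emph{nonnegative real part} in these tight cases, and that is a phase statement that the magnitude-only tools you invoke do not see.

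The paper avoids this entirely. Instead of splitting the Fourier sum by how many of $\alpha,\beta,\gamma$ vanish, it decomposes $g = \bone_1 + \omega\bone_\omega + \omega^2\bone_{\omega^2}$ and exploits the symmetry of the $(\y,\z,\w)$ distribution to prove the algebraic identity (Lemma~\ref{lem:threeones})
\[
\sum_{a\in\roots}\E[\bone_a(\y)\bone_a(\z)\bone_a(\w)]
 = 3\,\E[\bone_1(\y)\bone_\omega(\z)\bone_{\omega^2}(\w)] + |\hat g(0)|^2,
\]
which combined with the symmetric-sum count gives the \emph{exact} identity
$\Pr[g(\y)+g(\z)+g(\w)\equiv 0] = |\hat g(0)|^2 + 9\,\E[\bone_1(\y)\bone_\omega(\z)\bone_{\omega^2}(\w)]$.
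The inequality then drops out because the correction term is a probability, hence nonnegative, and there is nothing left to bound. That is the missing idea in your plan: you need an identity with a manifestly nonnegative remainder, not an absolute-value estimate of a residual.
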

Given $\alpha, \beta, \gamma \in \Z_3^L$, define the predicate $\psi(\alpha, \beta, \gamma)$ to be true whenever $\vert \alpha[i] \vert + \vert \beta[i] \vert + \vert \gamma[i] \vert \equiv 0$ for all $i$.  In addition, define the function $\Phi(\cdot, \cdot, \cdot)$ as
\begin{equation*}
\Phi(\alpha, \beta, \gamma) = \prod_{i \in [K]} \prod_{\pi(j) = i} \left( 1 - \frac{1}{2}(\#\rho_j + \#\sigma_j + \#\tau_j)\right)
\end{equation*}
where $\rho = \alpha + \beta$, $\sigma = \beta+\gamma$, and $\tau = \alpha + \gamma$.  We will begin by deriving the following expansion for the expectation:
\begin{lemma}
Let $g_1, g_2, g_3: \Z_3^L \rightarrow \Z_3$.  Then
\begin{equation*}
\E[g_1(\y)g_2(\z)g_3(\w)] = \sum_{\psi(\alpha, \beta, \gamma)}\hat{g}_1(\alpha)\hat{g}_2(\beta)\hat{g}_3(\gamma) \Phi(\alpha, \beta, \gamma).
\end{equation*}
\end{lemma}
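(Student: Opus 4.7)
The proof proposal is to mirror the computation of Lemma~\ref{lem-eggg-general} but adapted to three ``outer'' functions. First I would Fourier-expand,
\[
\E[g_1(\y)g_2(\z)g_3(\w)] = \sum_{\alpha,\beta,\gamma \in \Z_3^L} \hat{g}_1(\alpha)\hat{g}_2(\beta)\hat{g}_3(\gamma)\,\E[\chi_\alpha(\y)\chi_\beta(\z)\chi_\gamma(\w)],
\]
so the task reduces to evaluating the triple-character moment. Since the test draws $(\y[i],\z[i],\w[i])$ independently across blocks $i \in [K]$, this moment factors as $\prod_{i \in [K]} \E[\omega^{\alpha[i]\cdot \y[i] + \beta[i]\cdot \z[i] + \gamma[i]\cdot \w[i]}]$. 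Fixing a block $i$ and conditioning on $\x_i = a$ further factors the per-block expectation into a product over $j$ with $\pi(j)=i$, because the triples $((\y[i])_j,(\z[i])_j,(\w[i])_j)$ are drawn conditionally independently given $\x_i$.

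Next, for a single column $j$ with $\pi(j) = i$, I would compute
\[
E_j(a) := \E\bigl[\omega^{\alpha_j \y_j + \beta_j \z_j + \gamma_j \w_j} \mid \x_i = a\bigr]
\]
by enumerating the six equally-likely assignments satisfying $\buddy(a,\y_j,\z_j,\w_j)$: exactly one of the three positions $\y_j,\z_j,\w_j$ equals $a$, and the other two both equal some $b \in \{a+1,a+2\}$. Factoring $\omega^{a(\alpha_j+\beta_j+\gamma_j)}$ out of each of the six summands leaves three pairs of terms of the form $\omega^{b\rho_j} + \omega^{2b\rho_j}$ (and similarly for $\sigma_j,\tau_j$), which reduce via $\omega^c + \omega^{2c} = -1 + 3\cdot\bone[c \equiv 0]$ to $2(-\tfrac12)^{\#\rho_j}$ etc. Dividing by~$6$ and checking the four cases ($0,1,2,3$ of $\#\rho_j,\#\sigma_j,\#\tau_j$ nonzero) against the linear expression gives
\[
E_j(a) = \omega^{a(\alpha_j+\beta_j+\gamma_j)} \cdot \Bigl(1 - \tfrac12(\#\rho_j + \#\sigma_j + \#\tau_j)\Bigr).
\]

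Taking the product over $j \in \pi^{-1}(i)$ produces the block's contribution to $\Phi(\alpha,\beta,\gamma)$ times $\omega^{a(|\alpha[i]| + |\beta[i]| + |\gamma[i]|)}$, and then averaging over $\x_i$ uniform on $\Z_3$ kills the latter factor unless $|\alpha[i]|+|\beta[i]|+|\gamma[i]| \equiv 0 \pmod 3$. Taking the product over $i$ thus yields $\bone[\psi(\alpha,\beta,\gamma)]\cdot \Phi(\alpha,\beta,\gamma)$, and substituting into the Fourier expansion above gives the claimed identity.

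The main obstacle is the per-column computation: one must carefully bookkeep the three overlapping combinations $\rho_j = \alpha_j+\beta_j$, $\sigma_j = \beta_j+\gamma_j$, $\tau_j = \alpha_j+\gamma_j$, and then verify the non-obvious numerical coincidence that $\tfrac13\bigl((-\tfrac12)^{\#\rho_j} + (-\tfrac12)^{\#\sigma_j} + (-\tfrac12)^{\#\tau_j}\bigr)$ equals the cleaner linear expression $1 - \tfrac12(\#\rho_j + \#\sigma_j + \#\tau_j)$ across all four possible multiplicities of nonzero entries. Everything else is a routine repetition of the blockwise character argument used in Lemma~\ref{lem-eggg-general}.
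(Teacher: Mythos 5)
Your proposal is correct and follows essentially the same approach as the paper's proof: Fourier-expand, factor the character moment over blocks and then columns using the conditional independence given $\x$, enumerate the six $\buddy$-satisfying assignments per column to obtain the per-column expectation, and then average over $\x_i$ to impose $\psi$. The only cosmetic divergence is that you write the pairwise sum as $2(-\tfrac12)^{\#\rho_j}$ while the paper writes $2 - 3\#\rho_j$; these agree since $\#\rho_j \in \{0,1\}$, and the final ``coincidence'' you flag is in fact just linearity of $(-\tfrac12)^x$ on $\{0,1\}$, not a case-by-case accident.
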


\begin{proof}
Begin by expanding out $\E[g_1(\y)g_2(\z)g_3(\w)]$:
\begin{equation}
\E[g_1(\y)g_2(\z)g_3(\w)]= \sum_{ \alpha, \beta, \gamma \in \Z_3^L}
\hat{g}_1(\alpha)\hat{g}_2(\beta)\hat{g}_3(\gamma)\E[\chi_\alpha(\y)\chi_{\beta}(\z)\chi_\gamma(\w)].\label{eq:fourierexpanded-ggg}
\end{equation}
We focus on the products of the Fourier characters:
\begin{equation}
\E[\chi_\alpha(\y)\chi_\beta(\z)\chi_\gamma(\w)]
=  \prod_{i \in [K]}\E[\chi_{\alpha[i]}(\y[i]) \chi_{\beta[i]}(\z[i])\chi_{\gamma[i]}(\w[i])]\label{eq:fcharacters-ggg}
\end{equation}
We can attend to each block separately:
\begin{equation}
\E[\chi_{\alpha[i]}(\y[i]) \chi_{\beta[i]}(\z[i])\chi_{\gamma[i]}(\w[i])]
= \E_{\x}\left[\prod_{j:\pi(j) = i}\underbrace{\E_{\y, \z, \w}\left[\omega^{\alpha_j \y_j+  \beta_j \z_j + \gamma_j \w_j} \mid \x_i = a\right]}_{(*)}\right].\label{eq:blockcharacters-ggg}
\end{equation}
To analyze the expectation $(*)$, note that conditioned on $\x_i = a$, the distribution on the values for $(\y_j, \z_j)$ is uniform on the six possibilities $(a, a+1, a+1)$, $(a+1, a, a+1)$, $(a+1, a+1, a)$, $(a, a+2, a+2)$, $(a+2, a, a+2)$, $(a+2, a+2, a)$.  Then the expectation $(*)$ is equal to
\begin{equation*}
\frac{1}{6}\left(\omega^{a(\alpha_j + \beta_j + \gamma_j)}\left(
\omega^{\alpha_j + \beta_j} + \omega^{2(\alpha_j + \beta_j)}
+ \omega^{\alpha_j + \gamma_j} + \omega^{2(\alpha_j + \gamma_j)}
+ \omega^{\beta_j + \gamma_j} + \omega^{2(\beta_j + \gamma_j)}\right)\right).
\end{equation*}
Note that $\omega^{\alpha_j + \beta_j} + \omega^{2(\alpha_j + \beta_j)} = \omega^{\rho_j} + \omega^{2\rho_j} = 2 - 3\#\rho_j$.  Thus, the previous equation is equal to
\begin{equation*}
\frac{1}{6}\left(\omega^{a(\alpha_j + \beta_j + \gamma_j)}\left(
6 - 3 \#\rho_j - 3\#\sigma_j - 3\#\tau_j\right)\right)
= \omega^{a(\alpha_j + \beta_j + \gamma_j)}\left(
1 - \frac{1}{2}( \#\rho_j +\#\sigma_j +\#\tau_j)\right)
\end{equation*}
Substituting this into Equation~\eqref{eq:blockcharacters-ggg} yields
\begin{equation*}
\E[\chi_{\alpha[i]}(\y[i]) \chi_{\beta[i]}(\z[i])\chi_{\gamma[i]}(\w[i])]
= \E_{\x}\left[\omega^{a(\vert \alpha[i] \vert + \vert \beta[i] \vert + \vert \gamma[i] \vert)}\prod_{j:\pi(j) = i} \left(1 - \frac{1}{2}( \#\rho_j +\#\sigma_j +\#\tau_j)\right)\right],
\end{equation*}
which is zero unless $\vert \alpha[i] \vert + \vert \beta[i] \vert + \vert \gamma[i] \vert \equiv 0$.  Thus, the only Fourier coefficients $\alpha$, $\beta$, and $\gamma$ which contribute to Equation~\eqref{eq:fourierexpanded-ggg} are those which satisfy $\psi(\alpha, \beta, \gamma)$.  Furthermore, for any such $\alpha$, $\beta$, and $\gamma$, this equation is just
\begin{equation*}
\prod_{j:\pi(j) = i} \left(1 - \frac{1}{2}( \#\rho_j +\#\sigma_j +\#\tau_j)\right),
\end{equation*}
and so Equation~\eqref{eq:fcharacters-ggg} is equal to $\Phi(\alpha, \beta, \gamma)$.  This concludes the lemma.
\end{proof}

Define $p = \Pr[g(\y) = 0]$, $q = \Pr[g(\y) = \omega]$, and $r = \Pr[g(\y) = \omega^2]$.  For $a \in \roots$, $\bone_a(y)$ is the indicator of the event $g(y) = a$.  In the Fourier transform calculations that follow, we may write $\hat{f}^3$ for some function $f$.  This is always shorthand for $\hat{f}(\alpha) \hat{f}(\beta) \hat{f}(\gamma)$.  Similarly, given three functions $f_1$, $f_2$, and $f_3$, $\hat{f}_1\hat{f}_2\hat{f}_3$ is always shorthand for $\hat{f}_1(\alpha)\hat{f}_2(\beta)\hat{f}_3(\gamma)$.

We'll start by computing the value of $\vert \hat{g}(\vec{0}) \vert^2$.
\begin{proposition}\label{prop:hat-g}
$\vert \hat{g}(\vec{0}) \vert^2 = p^3 + q^3 + r^3 - 3\cdot pqr.$
\end{proposition}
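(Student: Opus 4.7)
The plan is a direct computation. First I would write
\[
\hat g(\vec 0) \;=\; \E[g(\y)\,\overline{\chi_{\vec 0}(\y)}] \;=\; \E[g(\y)] \;=\; p + q\omega + r\omega^2,
\]
using the identification of $g$ with $\omega^{g}$ and the definitions of $p,q,r$. Taking the modulus squared,
\[
\bigl|\hat g(\vec 0)\bigr|^2 \;=\; (p + q\omega + r\omega^2)(p + q\omega^2 + r\omega),
\]
since complex conjugation sends $\omega \mapsto \omega^2$.

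Next I would expand the product and simplify using $\omega^3 = 1$ (so $\omega^4 = \omega$) and $1 + \omega + \omega^2 = 0$ (equivalently $\omega + \omega^2 = -1$). All cross terms collect into
\[
\bigl|\hat g(\vec 0)\bigr|^2 \;=\; p^2 + q^2 + r^2 - pq - pr - qr.
\]

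Finally, I would invoke the classical symmetric-function identity
\[
p^3 + q^3 + r^3 - 3pqr \;=\; (p + q + r)\bigl(p^2 + q^2 + r^2 - pq - pr - qr\bigr),
\]
together with the observation that $p + q + r = 1$ (since $g$ takes values in $\roots = \{\omega^0,\omega^1,\omega^2\}$, so $p,q,r$ partition the probability mass). Combining these two facts yields the claimed equality. There is no real obstacle here; the only thing worth being careful about is the conjugation step and the bookkeeping of powers of $\omega$, so I would just verify those carefully before invoking the cubic identity.
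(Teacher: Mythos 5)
Your proof is correct and follows essentially the same route as the paper's: both compute $\vert\hat g(\vec 0)\vert^2 = p^2+q^2+r^2-pq-pr-qr$ and then use $p+q+r=1$ to convert this into $p^3+q^3+r^3-3pqr$. The only cosmetic difference is that you cite the standard symmetric-function factorization outright, while the paper derives it in-line by multiplying the cross terms by $(p+q+r)$ and substituting $1-q-r=p$ (and cyclic analogues).
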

\begin{proof}
Direct calculation shows that $\vert \hat{g}(\vec{0}) \vert^2 = p^2 + q^2 + r^2 - (pq + pr + rq)$.  Then because $p+q+r = 1$,
\begin{align*}
p^2 + q^2 + r^2 - (pq + pr + rq)
& = p^2 + q^2 + r^2 - (pq + pr + rq)\cdot(p+q+r)\\
& = p^2 + q^2 + r^2 - p^2 q - p^2 r - q^2 p - q^2 r - r^2 p - r^2 q - 3\cdot pqr\\
& = p^3 + q^3 + r^3 - 3 \cdot pqr,
\end{align*}
where the last step replaces $p^2 - p^2 q - p^2 r$ with $p^3$, using $1 - q - r = p$, and performs similar replacements for $q$ and $r$.
\end{proof}

Before proving Lemma~\ref{lem:eggg}, we'll need the following lemma.
\begin{lemma}\label{lem:threeones}
$\sum_{a \in \roots} \E[\bone_a(\x) \bone_a(\z) \bone_a(\w)] 
= 3 \cdot \E[\bone_1(\y)\bone_{\omega}(\z)\bone_{\omega^2}(\w)] + \vert \hat{g}(\vec{0}) \vert^2.$
\end{lemma}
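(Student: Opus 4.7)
The plan is to reduce the identity to a polynomial identity in $g(\y), g(\z), g(\w)$ via the discrete Fourier basis on $\roots$, then apply the Fourier expansion lemma proved at the start of the appendix, and finally identify the surviving term with $|\hat{g}(\vec{0})|^2$ via Proposition~\ref{prop:hat-g}.

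First, since $g$ takes values in $\roots$, I use the Lagrange-style identity $\bone_a(y) = \frac{1}{3}(1 + \bar{a}\cdot g(y) + \bar{a}^2 \cdot g(y)^2)$ to expand each indicator. Substituting into the triple product and using the orthogonality $\sum_{a \in \roots} \bar{a}^m = 3 \cdot \bone[m \equiv 0 \pmod{3}]$, the LHS collapses to
\[
\sum_a \bone_a(\y)\bone_a(\z)\bone_a(\w) = \frac{1}{9}\sum_{j+k+l \equiv 0 \pmod{3}} g(\y)^j g(\z)^k g(\w)^l.
\]
The analogous expansion of $\bone_1(\y)\bone_\omega(\z)\bone_{\omega^2}(\w)$ gives
\[
3\,\bone_1(\y)\bone_\omega(\z)\bone_{\omega^2}(\w) = \frac{1}{9}\sum_{j,k,l \in \Z_3} \omega^{-k-2l}\, g(\y)^j g(\z)^k g(\w)^l.
\]

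Taking expectations, the identity reduces to a weighted linear combination of the moments $M_{jkl} = \E[g(\y)^j g(\z)^k g(\w)^l]$. Treating $g^0 \equiv 1$, $g^1 = g$, and $g^2 = \bar{g}$, the Fourier expansion lemma at the start of the appendix expresses each $M_{jkl}$ as a sum over Fourier triples $(\alpha, \beta, \gamma)$ satisfying $\psi(\alpha, \beta, \gamma)$ of $\widehat{g^j}(\alpha)\,\widehat{g^k}(\beta)\,\widehat{g^l}(\gamma)\,\Phi(\alpha, \beta, \gamma)$, where $\widehat{g^2}(\alpha) = \overline{\hat{g}(-\alpha)}$. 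Exchangeability of $(\y, \z, \w)$ (from the symmetry of $\buddy$ in its last three arguments) lets me collapse the 27 triples $(j,k,l)$ into a few classes indexed by the multiset $\{j,k,l\}$, and makes visible which combinations of Fourier coefficients appear with which weights.

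The main obstacle is carrying out the resulting telescoping: I expect that when the two sums (weighted by $\bone[j+k+l \equiv 0 \pmod{3}]$ and $-\omega^{-k-2l}$ respectively) are combined, all Fourier contributions with $(\alpha, \beta, \gamma) \neq (\vec{0}, \vec{0}, \vec{0})$ cancel in pairs, leaving only the $\vec{0}$ contribution. That surviving term should reproduce $p^3 + q^3 + r^3 - 3pqr$, which equals $|\hat{g}(\vec{0})|^2$ by Proposition~\ref{prop:hat-g}. The delicate part is tracking the phase factors $\omega^{-k-2l}$ across all 27 triples and verifying that the pairwise cancellation against the $\bone[j+k+l \equiv 0]$ contributions is complete outside the $\vec{0}$ class.
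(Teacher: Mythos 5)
Your change of basis --- writing $\bone_a(y) = \tfrac13\bigl(1 + \bar a\, g(y) + \bar a^2\, g(y)^2\bigr)$ and passing to moments $M_{jkl}$ --- is a legitimate reparametrization of the paper's argument (which works directly with the indicator Fourier coefficients $\hat{\bone}_a$), and your computation of the $(\alpha,\beta,\gamma) = (\vec 0, \vec 0, \vec 0)$ contribution is correct: the weighted combination of $\widehat{g^j}(\vec 0)\widehat{g^k}(\vec 0)\widehat{g^l}(\vec 0)$ does collapse to $|\hat g(\vec 0)|^2$. However, the part you flag as ``the main obstacle'' and ``the delicate part'' --- the cancellation of all non-$\vec 0$ contributions --- is precisely the content of the lemma, and it is left entirely unproven. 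You describe an expectation, not an argument. That is a genuine gap: without it you have done the bookkeeping but not the proof.

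More importantly, the cancellation mechanism you envision is not the one that actually holds. Converting back to the indicator basis, your per-triple contribution is
$\Phi(\alpha,\beta,\gamma)\bigl[\sum_a \hat{\bone}_a(\alpha)\hat{\bone}_a(\beta)\hat{\bone}_a(\gamma) - 3\,\hat{\bone}_1(\alpha)\hat{\bone}_\omega(\beta)\hat{\bone}_{\omega^2}(\gamma)\bigr]$,
and this bracket does \emph{not} vanish for a fixed nonzero $(\alpha,\beta,\gamma)$ (e.g.\ take $\alpha \neq \beta$ and compare the two displayed products). What is true is the algebraic identity $a^3 + b^3 + c^3 = 3abc$ when $a+b+c = 0$: applied to $\hat{\bone}_1(\cdot) + \hat{\bone}_\omega(\cdot) + \hat{\bone}_{\omega^2}(\cdot) = \bone[\cdot = \vec 0]$, it gives, for nonzero arguments, $\sum_a \hat{\bone}_a(\alpha)\hat{\bone}_a(\beta)\hat{\bone}_a(\gamma) = \hat{\bone}_{\omega^2}(\alpha)\hat{\bone}_1(\beta)\hat{\bone}_\omega(\gamma) + \hat{\bone}_\omega(\alpha)\hat{\bone}_{\omega^2}(\beta)\hat{\bone}_1(\gamma) + \hat{\bone}_1(\alpha)\hat{\bone}_\omega(\beta)\hat{\bone}_{\omega^2}(\gamma)$, a \emph{cyclic sum} of three distinct products, not three copies of one product. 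The cancellation with the $-3\hat{\bone}_1\hat{\bone}_\omega\hat{\bone}_{\omega^2}$ term only materializes after summing over $(\alpha,\beta,\gamma)$, using the cyclic invariance of both $\psi$ and $\Phi$ to equate the three cyclic terms. So the contributions do not ``cancel in pairs'' triple-by-triple; a global cyclic symmetrization over the whole sum is required. This is the missing idea, and it is also where all the subtlety of the indicator-vs.-$\vec 0$ bookkeeping lives, so your plan as written would stall exactly where you predicted it might.
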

\begin{proof}
The LHS is equal to
\begin{equation}
\Re \sum_{\psi(\alpha, \beta, \gamma)}(\hat{\bone}_1^3 + \hat{\bone}_{\omega}^3 + \hat{\bone}_{\omega^2}^3)\cdot \Phi(\alpha, \beta, \gamma) \label{eq:sum-fourier}
\end{equation}
Because $\bone_1 + \bone_\omega + \bone_{\omega^2} = 1$, the Fourier coefficients of $\bone_{\omega^2}$ may be rewritten as follows:
\begin{equation*}
\hat{\bone}_{\omega^2} = \bone[\alpha = 0] -\hat{\bone}_1 - \hat{\bone}_{\omega}.
\end{equation*}
When $\alpha \neq \vec{0}$, the corresponding term in Equation~\eqref{eq:sum-fourier} equals
\begin{align*}
\hat{\bone}_1^3 + \hat{\bone}_{\omega}^3 + \hat{\bone}_{\omega^2}^3
 =&~\hat{\bone}_1^3 + \hat{\bone}_{\omega}^3 + (-\hat{\bone}_{1}-\hat{\bone}_{\omega})^3\\
 =&~\hat{\bone}_1(\beta) \hat{\bone}_{\omega}(\gamma)\cdot(-\hat{\bone}_1(\alpha) - \hat{\bone}_{\omega}(\alpha))
  + \hat{\bone}_1(\gamma) \hat{\bone}_{\omega}(\alpha)\cdot(-\hat{\bone}_1(\beta) - \hat{\bone}_{\omega}(\beta))\\
  &+ \hat{\bone}_1(\alpha) \hat{\bone}_{\omega}(\beta)\cdot(-\hat{\bone}_1(\gamma) - \hat{\bone}_{\omega}(\gamma))\\
=&~\hat{\bone}_{\omega^2}\hat{\bone}_1\hat{\bone}_{\omega}
	+ \hat{\bone}_{\omega}\hat{\bone}_{\omega^2}\hat{\bone}_1
	+ \hat{\bone}_1\hat{\bone}_\omega\hat{\bone}_{\omega^2}.
\end{align*}
For the $\alpha = \vec{0}$ case, 
\begin{equation*}
p^3 + q^3 + r^3 = 3\cdot pqr + p^3 + q^3 + r^3 - 3\cdot pqr
= 3\cdot pqr + \vert \hat{g}(\vec{0}) \vert^2,
\end{equation*}
where the last step uses Proposition~\ref{prop:hat-g}.  Substituting these into Equation~\eqref{eq:sum-fourier} yields
\begin{equation*}
\eqref{eq:sum-fourier} = 
\vert \hat{g}(\vec{0})\vert^2 +
\Re \sum_{\psi(\alpha, \beta, \gamma)}(
	\hat{\bone}_1\hat{\bone}_\omega\hat{\bone}_{\omega^2}
	+ \hat{\bone}_{\omega^2}\hat{\bone}_1\hat{\bone}_{\omega}
	+ \hat{\bone}_{\omega}\hat{\bone}_{\omega^2}\hat{\bone}_1
	) \cdot \Phi(\alpha, \beta, \gamma),
\end{equation*}
which equal the RHS of the lemma.
\end{proof}

We now proceed to the proof of Lemma~\ref{lem:eggg}
\begin{proof}[Proof of Lemma~\ref{lem:eggg}]
Rewrite $g$ in terms of its indicator functions, i.e. $g = \bone_{1} + \omega \cdot \bone_{\omega} + \omega^2 \cdot \bone_{\omega^2}$.  Then
\begin{align*}
\Re\E[g(\y)g(\z)g(\w)]
 &= \Re \sum_{\psi(\alpha, \beta, \gamma)} \hat{g}^3\cdot \Phi(\alpha, \beta, \gamma) \\
 &= \Re \sum_{\psi(\alpha, \beta, \gamma)} (\hat{\bone}_{1} + \hat{\bone}_{\omega} + \hat{\bone}_{\omega^2})^3\cdot \Phi(\alpha, \beta, \gamma) \\
 &= \Re \sum_{\psi(\alpha, \beta, \gamma)} \sum_{a, b, c \in \roots} a b c\cdot \hat{\bone}_a \hat{\bone}_b \hat{\bone}_c \cdot \Phi(\alpha, \beta, \gamma) \\
 &= \sum_{\psi(\alpha, \beta, \gamma)} \left(\sum_{abc = 1}  \hat{\bone}_a \hat{\bone}_b \hat{\bone}_c
 	- \frac{1}{2}\sum_{abc \neq 1} \hat{\bone}_a \hat{\bone}_b \hat{\bone}_c\right) \cdot \Phi(\alpha, \beta, \gamma) \\
&=  \sum_{abc = 1} \E[\bone_a(\x) \bone_b(\z) \bone_c(\w)] - \frac{1}{2} \sum_{abc \neq 1} \E[\bone_a(\x) \bone_b(\z) \bone_c(\w)]\\
&=  \sum_{abc = 1} \E[\bone_a(\x) \bone_b(\z) \bone_c(\w)] - \frac{1}{2} \left(1-\sum_{abc = 1} \E[\bone_a(\x) \bone_b(\z) \bone_c(\w)]\right)\\
&=  \frac{3}{2}\cdot \sum_{abc = 1} \E[\bone_a(\x) \bone_b(\z) \bone_c(\w)] - \frac{1}{2},
\end{align*}
where the second-to-last step uses the fact that $\sum_{a, b, c \in \roots} \E[\bone_a(\x)\bone_b(\y)\bone_c(\z)] = 1$.  This concludes the proof, as
\begin{align*}
\sum_{abc = 1} \E[\bone_a(\x) \bone_b(\z) \bone_c(\w)]
 &= \sum_{a \in \roots} \E[\bone_a(\x) \bone_a(\z) \bone_a(\w)]+ 6\cdot  \E[\bone_{1}(\x) \bone_{\omega}(\z) \bone_{\omega^2}(\w)]\\
 & = \vert \hat{g}(\vec{0}) \vert^2 + 9\cdot  \E[\bone_{1}(\x) \bone_{\omega}(\z) \bone_{\omega^2}(\w)] \tag{By Lemma~\ref{lem:threeones}.}\\
 & \geq \vert \hat{g}(\vec{0}) \vert^2,
\end{align*}
using the fact that $\E[\bone_{1}(\x) \bone_{\omega}(\z) \bone_{\omega^2}(\w)] \geq 0$.
\end{proof}

\bibliographystyle{alpha}
\bibliography{odonnell-bib}

\end{document}